\documentclass[a4paper,UKenglish,cleveref,autoref]{lipics-v2021}
\usepackage{booktabs}
\usepackage{xspace}
\usepackage{comment}
\usepackage{thmtools}
\usepackage{booktabs}
\usepackage{todonotes}
\nolinenumbers

\hideLIPIcs  

\usepackage{microtype} 

\usepackage{cite,hyperref}
\hypersetup{
    colorlinks=true,
    linkcolor=blue,
    filecolor=magenta,
    citecolor = red,
    urlcolor=cyan
    }

\graphicspath{ {figures/} }

\newtheorem*{conjecture*}{Conjecture}

\newcommand {\vis} {\text{Vis}}
\newcommand {\calC} {\mathcal{C}}
\newcommand {\calD} {\mathcal{D}}
\newcommand {\calA} {\mathcal{A}}
\newcommand {\calT} {\mathcal{T}}
\newcommand {\calQ} {\mathbb{Q}}

\newcommand {\calI} {\mathcal{I}}
\newcommand {\core} {P_{\text{core}}}

\title{Visibility Queries in Simple Polygons\footnote{A preliminary version of this paper will appear in {\em Proceedings of the 53rd International Colloquium on Automata, Languages, and Programming (ICALP 2026)}.}}

\author{Sujoy Bhore}{Department of Computer Science and Engineering, Indian Institute of Technology Bombay, India}{sujoy@cse.iitb.ac.in}{https://orcid.org/0000-0003-0104-1659}{Work supported in part by ANRF ARG-MATRICS, Grant 002465.}

\author{Chih-Hung Liu}{Department of Electrical Engineering, National Taiwan University, Taiwan}{chliu@ntu.edu.tw}{https://orcid.org/0000-0001-9683-5982}{Supported by Ministry of Education, Taiwan under Yushan Fellow Program with the grant number MOE-111-YSFEE-0003-006-P1.}

\author{Anurag Murty Naredla}{Department of Computer Science, University of Manitoba, Canada}{anurag.naredla@umanitoba.ca}{https://orcid.org/0000-0002-3577-903X}{}

\author{Yakov Nekrich}{Michigan Technological University, USA}{yakov.nekrich@googlemail.com}{https://orcid.org/0000-0003-3771-5088}{Supported by the National Science Foundation under NSF grant 2203278.}

\author{Eunjin Oh}{POSTECH, Korea}{eunjin.oh@postech.ac.kr}{https://orcid.org/0000-0003-0798-2580}{Supported by Institute of Information \& Communications Technology Planning \& Evaluation (IITP) grant funded by the Korea government (MSIT) (No.RS-2024-00440239, Sublinear Scalable Algorithms for Large-Scale Data Analysis) and the National Research Foundation of Korea (NRF) grant funded by the Korea government (MSIT) (No.RS-2024-00358505).}

\author{André van Renssen}{The University of Sydney, Australia}{andre.vanrenssen@sydney.edu.au}{https://orcid.org/0000-0002-9294-9947}{This research was partially funded by the Australian Government through the Australian Research Council (project number DP240101353).}

\author{Frank Staals}{Department of Information and Computing Sciences, Utrecht University, The Netherlands}{f.staals@uu.nl}{https://orcid.org/0009-0004-8522-1351}{}

\author{Haitao Wang}{Kahlert School of Computing, University of Utah, USA}{haitao.wang@utah.edu}{https://orcid.org/0000-0001-8134-7409}{}

\author{Jie Xue}{New York University Shanghai, China}{jiexue@nyu.edu}{0000-0001-7015-1988}{}

\authorrunning{Bhore, Liu, Naredla, Nekrich, Oh, van Renssen, Staals, Wang, and Xue}

\ccsdesc[100]{Theory of computation~Computational Geometry} 
\ccsdesc[100]{Theory of computation~Design and analysis of algorithms}

\keywords{simple polygons, visibility polygons, visibility queries, polygon decompositions}  

\acknowledgements{
This work originated at Shonan Meeting No.~234, “Path Planning and Routing in Geometric Environments.” We thank the participants for their valuable discussions, and we are also grateful to the organizers and the National Institute of Informatics for hosting the event.}

\begin{document}

\maketitle

\begin{abstract}
Given a simple polygon $P$ with $n$ vertices, we consider the problem of constructing a data structure for visibility queries: for any query point $q \in P$, compute the visibility polygon of $q$ in $P$. To obtain $O(\log n + k)$ query time, where $k$ is the size of the visibility polygon of $q$, the previous best result requires $O(n^3)$ space. In this paper, we propose a new data structure that uses $O(n^{2+\epsilon})$ space, for any $\epsilon > 0$, while achieving the same query time.
If only $O(n^2)$ space is available, the best known result provides $O(\log^2 n + k)$ query time. We improve this to $O(\log n \log \log n + k)$ time. 
When restricted to $o(n^2)$ space, the only previously known approach, aside from the $O(n)$-time algorithm that computes the visibility polygon without preprocessing, is an $O(n)$-space data structure that supports $O(k \log n)$-time queries. We construct a data structure using $O(n \log n)$ space that answers visibility queries in $O(n^{1/2+\epsilon} + k)$ time. 
In addition, for the special case in which $q$ lies on the boundary of $P$, we build a data structure of $O(n \log n)$ space supporting $O(\log^2 n + k)$ query time; alternatively, we achieve $O(\log n + k)$ query time using $O(n^{1+\epsilon})$ space. To achieve our results, we propose a new method for decomposing simple polygons, which may be of independent interest.  
\end{abstract}

\section{Introduction}
\label{sec:intro}

Let $P$ be a simple polygon with $n$ vertices. A point $p \in P$ is said to be {\em visible} to a point $q \in P$ if the line segment $\overline{pq}$ lies entirely inside $P$. The {\em visibility polygon} $\vis(q)$ of $q$ is the set of all points in $P$ that are visible to $q$; see Figure~\ref{fig:vispoly}. It is well known that $\vis(q)$ is a star-shaped polygon whose vertices are either:  
(i) vertices of $P$, or (ii) intersection points between edges of $P$ and rays from $q$ passing through reflex vertices of $P$.  

In this paper, we study data structures for {\em visibility queries}: given a query point $q \in P$, compute the visibility polygon $\vis(q)$ efficiently.

\subparagraph{Previous work.}
$\vis(q)$ can be computed in $O(n)$ time without any preprocessing~\cite{ref:ChazelleVi89,ref:GuibasLi87,ref:ElGindyA81}. If $O(\log n)$-time ray-shooting queries are available, then $\vis(q)$ can be computed in $O(k \log n)$ time~\cite{ref:AronovVi02}, where $k$ is the size of $\vis(q)$. Such ray-shooting data structures can be constructed in $O(n)$ time~\cite{ref:ChazelleVi89,ref:ChazelleRa94,ref:HershbergerA95}.
Ideally, one would like a query algorithm whose running time is linear in the output size, i.e., $O(k)$. Data structures achieving this are known, but at the cost of larger space. Bose, Lubiw, and Munro~\cite{ref:BoseEf02}, and independently Guibas, Motwani, and Raghavan~\cite{ref:GuibasTh92}, obtained $O(\log n + k)$ query time using $O(n^3)$ space and $O(n^3 \log n)$ preprocessing time. Aronov, Guibas, Teichmann, and Zhang~\cite{ref:AronovVi02} constructed a data structure of $O(n^2)$ space in $O(n^2 \log n)$ time that supports $O(\log^2 n + k)$-time visibility queries.
Note that $O(\log n + k)$ is the optimal query time~\cite{ref:BoseEf02,ref:GuibasTh92}.

\begin{figure}[t]
\centering
\includegraphics[height=1.2in]{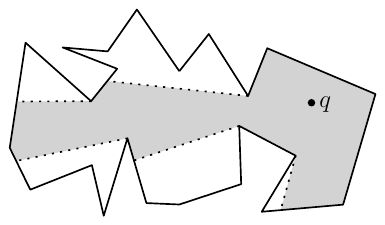}
\caption{The grey region is $\vis(q)$. Each edge of it is either a portion of an edge of $P$ or a dotted segment (which is called a {\em window} in the literature~\cite{ref:BoseEf02}).}
\label{fig:vispoly}
\end{figure}

\subparagraph{Our results.}
There has been no progress on this problem for almost three decades since the conference version of \cite{ref:AronovVi02} was published in 1998. In this paper, we present the following new results. 
\begin{itemize}
\item 
First, we give the first-known subquadratic-space structure that achieves a query time linear in the output size: a data structure of $O(n \log n)$ space, $O(n \log^2 n)$ preprocessing time, and $O(n^{1/2+\epsilon} + k)$ query time. Throughout the paper, $\epsilon$ represents an arbitrarily small positive constant. Furthermore, we obtain the following trade-offs between the query time and preprocessing: for a parameter $r$ satisfying $1 \le r \le n^{1-\epsilon}$, we can build a data structure of space $O(nr + n\log (n/r))$, $O((nr + n\log (n/r))\log n)$ preprocessing time, and $O((n/r)^{1/2+\epsilon}+k)$ query time. Note that this result implies that for a large $k$ (e.g., $k=\Omega(n^{\epsilon})$), we can achieve $O(k)$ query time with sub-quadratic preprocessing. 

\item
Second, to achieve the optimal $O(\log n + k)$ query time, we construct a data structure of $O(n^{2+\epsilon})$ space and preprocessing time, for any $\epsilon > 0$. This improves the space bound of the previous $O(n^3)$-space structures~\cite{ref:BoseEf02,ref:GuibasTh92} by nearly a linear factor. 

\item 
Third, under an $O(n^2)$ space budget, we improve the $O(\log^2 n + k)$ query time in \cite{ref:AronovVi02} to $O(\log n \log\log n + k)$, with a preprocessing time of $O(n^2 \log n)$. 

We remark that even a near-logarithmic improvement in query time is often considered significant. Indeed, for the closely related ray-shooting problem, reducing the query time from $O(\log^2 n)$ to $O(\log n)$ required substantial technical effort~\cite{ref:ChazelleVi89,ref:ChazelleRa94,ref:HershbergerA95}.
\end{itemize}

Table~\ref{tab:results} summarizes these bounds alongside prior work.

\begin{table}[h]
    \caption{Summary of the results for visibility queries ($k$ is the output size).}
    \label{tab:results}
    \centering
    \tabcolsep1ex%
    \vspace{0.15in}
    {%
        \begin{tabular}{l|lll}
         \toprule
            Results & Query time & Space & Preprocessing time  \\
            \hline
            \cite{ref:ChazelleRa94,ref:HershbergerA95} & $O(k\log n)$ &  $O(n)$ &  $O(n)$\\            
            \cite{ref:BoseEf02,ref:GuibasTh92} & $O(\log n + k)$ &  $O(n^3)$ &  $O(n^3\log n)$\\            
            \cite{ref:AronovVi02} & $O(\log^2 n + k)$ &  $O(n^2)$ &  $O(n^2\log n)$\\            
            Ours (near-linear space) & $O(n^{1/2+\epsilon} + k)$ &  $O(n\log n)$ &  $O(n\log^2 n)$\\         
            Ours (optimal query time) & $O(\log n + k)$ &  $O(n^{2+\epsilon})$ &  $O(n^{2+\epsilon})$\\            
            Ours (quadratic space) & $O(\log n\log\log n + k)$ &  $O(n^2)$ &  $O(n^2\log n)$\\                                   
            \bottomrule
        \end{tabular}}
\end{table}

In addition, we study a natural special case in which the query point $q$ lies on the boundary of $P$. To the best of our knowledge, this {\em boundary case} has not been treated separately in the literature. We obtain a data structure of $O(n \log n)$ space and $O(n \log^2 n)$ preprocessing time that answers queries in $O(\log^2 n + k)$ time. Alternatively, we can achieve $O(\log n + k)$ query time using $O(n^{1+\epsilon})$ space and preprocessing time.

\subparagraph{Other related work.}
Several related visibility query problems have also been studied. Chen and Daescu~\cite{ref:ChenMa98} considered a {\em segment-restricted} version, where the query point $q$ lies on a given segment in $P$. Their data structure uses $O(n)$ space and supports queries in $O(\log n + k)$ time. However, their approach does not appear to extend to arbitrary query points.

Given a segment $s \subseteq P$, a point is {\em weakly visible} to $s$ if it is visible to at least one point of $s$. The {\em weak visibility polygon} of $s$ consists of all points of $P$ visible to $s$, and can be computed in $O(n)$ time~\cite{ref:GuibasLi87}. The corresponding weak visibility query problem, i.e., computing this polygon for an arbitrary query segment, has been investigated in several works~\cite{ref:ChenWe15,ref:AronovVi02,ref:BygiWe11,ref:BoseEf02}. Using $O(n)$ space and preprocessing time, one can support queries in $O(k \log n)$ time; alternatively, using $O(n^3)$ space and preprocessing time, queries can be answered in $O(\log n + k)$ time~\cite{ref:ChenWe15}.

As a special visibility problem, the {\em ray-shooting} problem asks for the first point on $\partial P$ hit by a ray originating inside $P$. As mentioned earlier, ray-shooting queries can be answered in $O(\log n)$ time after $O(n)$-time preprocessing~\cite{ref:ChazelleRa94,ref:HershbergerA95}.

Visibility and ray-shooting in polygons with holes have also been studied~\cite{ref:AgarwalRa96,ref:ChenVi15,ref:InkuluVi09,ref:LuPo11,ref:PocchiolaGR90,ref:ZareiQu08}, but the presence of holes makes the problems substantially more difficult. We refer the reader to \cite{ref:ChenVi15} for an overview of results in that setting.

\subsection{An overview of our approach}
\label{sec:overview}

The methods of \cite{ref:BoseEf02,ref:GuibasTh92} decompose $P$ into $O(n^3)$ cells such that the combinatorial structure of $\vis(q)$ is the same for every point $q$ in the same cell. 
Since the decomposition itself has cubic size, these approaches cannot be directly adapted when one wishes to use subcubic space.

The approach \cite{ref:AronovVi02} instead builds on the {\em balanced decomposition} of $P$~\cite{ref:GuibasLi87,ref:GuibasOp89,ref:ChazelleVi89} (see Section~\ref{sec:pre} for more details). 
Their method achieves $O(\log^2 n + k)$ query time using $O(n^2)$ space. The $O(\log^2 n)$ factor in query time appears inherent to the balanced decomposition: each subpolygon may be connected to the ``outside world'', i.e., the rest of $P$, by $O(\log n)$ diagonals and the decomposition has $O(\log n)$ recursive levels.

To obtain $O(\log n + k)$ query time with subcubic space, we therefore require a new strategy. Our main idea is to introduce a {\em new polygon decomposition} with a crucial structural feature (which we call {\em the gate property}): every subpolygon connects to the outside world through a {\em single} diagonal, which we call its {\em gate}. As in the balanced decomposition, we recursively divide $P$ into subpolygons whose sizes shrink geometrically; however, the gate property fundamentally changes how visibility information propagates across subregions. Note that such a gate property alone might be achieved through other decompositions. However, a key advantage of our decomposition is that it enables efficient preprocessing, allowing certain types of visibility queries to be answered efficiently (see Theorem~\ref{theo:treequeries}).

Intuitively, suppose a query point $q$ lies in a subpolygon $P_d$ with gate $d$. If we have already computed the portion of $P_d$ visible to $q$, denoted by $\vis(P_d,q)$, then the remaining part of the visibility region, $\vis(P \setminus P_d,q)$, is exactly the portion of $P \setminus P_d$ visible through the diagonal $d$. We refer to computing this region as the {\em diagonal-separated subproblem}. The entire query algorithm reduces to solving this subproblem and recursing inside $P_d$ (i.e., to compute $\vis(P_d,q)$, we can apply the algorithm recursively).

Our optimal-query-time data structure is built on this decomposition and uses $O(1)$ recursive steps by choosing a parameter $r = n^{\epsilon}$. This yields $O(n^{2+\epsilon})$ space and $O(\log n + k)$ query time. Our quadratic-space data structure also relies on our new decomposition but groups its levels into $O(\log\log n)$ {\em super-levels}. This reduces the space to $O(n^2)$ at the cost of increasing the query time to $O(\log n \log\log n + k)$.
In contrast, the balanced decomposition of \cite{ref:AronovVi02} has $\Theta(\log n)$ levels, resulting in an unavoidable $O(\log^2 n)$ factor.

Our near-linear space data structure follows the high-level strategy of \cite{ref:AronovVi02} but solves the diagonal-separated subproblem more efficiently. While \cite{ref:AronovVi02} used an $O(n^2)$-space data structure supporting $O(\log n + k)$ queries for this subproblem, we instead reduce the problem to simplex stabbing queries and adapt the recent structure of Chan and Zhang~\cite{ref:ChanSi23}. This gives an $O(n)$-space structure supporting queries in $O(n^{1/2+\epsilon} + k)$ time, yielding our result.

For the boundary case, the diagonal-separated subproblem simplifies substantially: with $q$ constrained to lie on the boundary of $P$, the remaining computation becomes essentially one-dimensional. We obtain a solution of $O(n)$ space and $O(\log n + k)$ query time. Plugging this into the balanced decomposition method of \cite{ref:AronovVi02}, we obtain our first data structure for the boundary case. Plugging this into our new decomposition yields our second data structure. 

It is worth noting that our new decomposition may be interesting in its own right, and we believe it will find additional applications in the future.

\subparagraph{Outline.}
In the following, after introducing notation in Section~\ref{sec:pre}, we present our near-linear space data structure in Section~\ref{sec:smallspace}. In Section~\ref{sec:decomp}, we describe our new polygon decomposition, which serves as a key component for the optimal-query-time data structure in Section~\ref{sec:Optimal_query_time} and the quadratic-space solution in Section~\ref{sec:A_Quadratic_Space_Structure}. The boundary case is discussed in Section~\ref{sec:Boundary_queries_only}.


\section{Preliminaries}
\label{sec:pre}
We follow the notation introduced in Section~\ref{sec:intro}, e.g., $P$, $n$, and $\vis(q)$. For any subpolygon $P' \subseteq P$, we define $\vis(P',q) = \vis(q) \cap P'$, i.e., the portion of $\vis(q)$ contained in $P'$. 

For convenience, we assume that each edge of $P$ is an open segment that does not include its endpoints; thus, the vertices of $P$ are disjoint from its edges. We refer to either a vertex or an edge of $P$ as an {\em element} of $P$. For any geometric object $R$, let $\partial R$ denote its boundary. 
For two points $p$ and $q$, $\overline{pq}$ denotes the line segment connecting them. 
 
It is well known that $\vis(q)$ is star-shaped and can be represented by a cyclic (say, clockwise) list of its vertices. Alternatively, given the cyclic list of vertices and edges of $P$ that appear on the boundary $\partial \vis(q)$, one can reconstruct $\vis(q)$ in $O(|\vis(q)|)$ time~\cite{ref:AronovVi02,ref:BoseEf02}. Hence, to compute $\vis(q)$, it suffices to determine this cyclic list, called the {\em combinatorial representation} of $\vis(q)$~\cite{ref:AronovVi02}. It is also known that this cyclic order of vertices and edges of $P$ on $\partial \vis(q)$ is consistent with their order along $\partial P$. In the following, depending on the context, we may use $\vis(q)$ to refer to its combinatorial representation.

\subparagraph{Balanced decomposition.}
We will use the {\em balanced decomposition} of $P$, which has been applied to various problems in simple polygons~\cite{ref:AronovVi02,ref:GuibasLi87,ref:GuibasOp89,ref:ChazelleVi89}. A {\em diagonal} is a line segment connecting two vertices of $P$ that is fully contained in $P$ but not an edge of $P$. 
Chazelle~\cite{ref:ChazelleA82} proved that $P$ always admits a diagonal that splits it into two subpolygons, each having between $n/3$ and $2n/3$ vertices. The balanced decomposition of $P$ is obtained by recursively partitioning each subpolygon using such diagonals until all subpolygons are triangles. This decomposition can be represented by a tree $\calT(P)$, called the {\em BD-tree}. Each node $v$ of $\calT(P)$ corresponds to a subpolygon $P_v$ of $P$: the root corresponds to $P$ itself, and each leaf corresponds to a triangle. The diagonal used to partition $P_v$ into its two child subpolygons is stored at $v$ and denoted by $d_v$. The triangles at all leaves of $\calT(P)$ form a triangulation of $P$. The entire decomposition and BD-tree $\calT(P)$ can be computed in $O(n)$ time~\cite{ref:GuibasOp89,chazelle91triangulation}. 

\subparagraph{Convention on stating query time.}
For simplicity, when we state that a data structure supports queries in $O(Q(n))$ time for computing $\vis(q)$ (or a portion thereof, e.g., $\vis(P',q)$ for a subpolygon $P' \subseteq P$), we mean that a (balanced) binary search tree storing the combinatorial representation of $\vis(q)$ can be computed in $O(Q(n))$ time, and the full visibility polygon $\vis(q)$ can then be output in an additional $O(|\vis(q)|)$ time. 

For conciseness, we say that a data structure has complexity $O(T(n), S(n), Q(n))$ if its preprocessing time, space, and query time are $O(T(n))$, $O(S(n))$, and $O(Q(n))$, respectively.


\section{A near-linear space data structure}
\label{sec:smallspace}

In this section, we present our near-linear space data structure. Our algorithm follows the general framework of \cite{ref:AronovVi02}, but we address a key component, the diagonal-separated subproblem, in a different way, as discussed in Section~\ref{sec:overview}. This subproblem arises in essentially all of our subsequent data structures. The same framework will also be partially used in our quadratic-space data structure in Section~\ref{sec:A_Quadratic_Space_Structure} and in the boundary case solution in Section~\ref{sec:Boundary_queries_only}. Therefore, the discussions in this section are also instrumental to the later parts of the paper.

We first discuss the diagonal-separated subproblem in Section~\ref{sec:diasep}, present its linear-space solution in Section~\ref{sec:coneproblem}, and finally describe the general algorithmic framework in Section~\ref{sec:bdscheme}.

\subsection{A diagonal-separated subproblem}  
\label{sec:diasep}
Let $d$ be a diagonal that divides $P$ into two subpolygons $P_\ell$ and $P_r$. Without loss of generality, assume that $d$ is vertical and $P_\ell$ lies locally to the left of $d$. The {\em diagonal-separated subproblem} is to compute $\vis(P_\ell, q)$ for any query point $q \in P_r$. We describe our solution below. 

Since $q \in P_r$, for any point $p\in \vis(P_\ell,q)$, $\overline{pq}$ must cross $d$. In other words, the visibility of $q$ within $P_\ell$ occurs ``through'' $d$. Because $P$ is a simple polygon, $\vis(d, q)$ is a contiguous segment of $d$~\cite{ref:AronovVi02}. Let $s = \vis(d, q)$. If $s = \emptyset$, then $\vis(P_\ell, q) = \emptyset$. Otherwise, let $C_q(s)$ denote the {\em cone} formed by the rays from $q$ through all points $p \in s$; we say that $C_q(s)$ is {\em defined} by $q$ and $s$, and refer to $q$ as the {\em apex} of the cone (see Figure~\ref{fig:viscone00}). 

Using the two-point shortest path data structure of Guibas and Hershberger~\cite{ref:GuibasOp89} (henceforth referred to as the {\em GH data structure}), the two bounding rays of $C_q(s)$ can be computed in $O(\log n)$ time via shortest-path queries. To compute $\vis(P_\ell, q)$, it thus suffices to determine the portion of $P_\ell$ visible to $q$ through the cone $C_q(s)$.


\begin{figure}[t]
\centering
\includegraphics[height=1.2in]{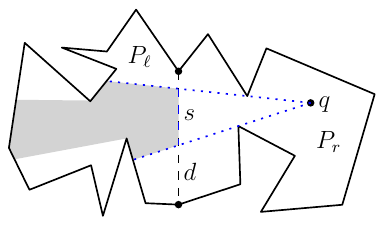}
\caption{The gray region is $\vis(P_\ell,q)$. $C_s(q)$ is bounded by the two blue dotted segments.}
\label{fig:viscone00}
\end{figure}

We use $\vis_s(q)$ to denote $\vis(P_\ell, q)$. We also define an analogous notion $\vis_d(q)$, which consists of all points $p \in P_\ell$ such that $\overline{pq}$ crosses $d$ and $\overline{pp'}\subseteq P_\ell$, where $p' = d \cap \overline{pq}$. Note that whether the segment $\overline{qp'}$ lies inside $P$ is irrelevant. See Figure~\ref{fig:viscone10}.

\begin{figure}[t]
\centering
\includegraphics[height=1.2in]{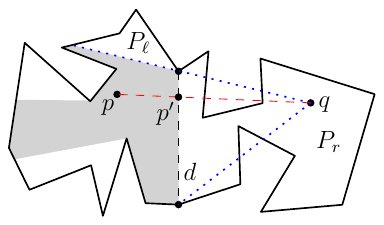}
\caption{The gray region is $\vis_d(q)$.}
\label{fig:viscone10}
\end{figure}

Following \cite{ref:AronovVi02}, we first compute a binary search tree that stores the combinatorial representation of $\vis_d(q)$, i.e., the cyclic order of the elements of $P_\ell$ (recall that an element refers to either a vertex or an edge of $P$) appearing on $\partial \vis_d(q)$. We then perform a {\em clipping} operation on this tree using $C_q(s)$; the resulting tree represents $\vis_s(q)$. 


\begin{observation}\label{obs:boundaryorder}{\em \cite{ref:AronovVi02,ref:BoseEf02}}
The order of the vertices and edges of $P_\ell$ along $\partial \vis_s(q)$ (resp., $\partial \vis_d(q)$) is consistent with their order along $\partial P_\ell$.
\end{observation}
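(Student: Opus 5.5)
The plan is to prove both statements with one angular sweep about $q$ combined with a Jordan-curve/non-crossing argument. Throughout, $q\in P_r$, and $C$ is the cone $C_q(s)$ or $C_q(d)$. Every point of $\vis_s(q)$ or $\vis_d(q)$ is reached by a ray from $q$ through a point $p'$ of $s$ or $d$.

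\textbf{Sweep and parametrization.} Sweep the ray $\rho(\theta)$ from $q$ across $C$, say clockwise. For each $\theta$, let $p'(\theta)=\rho(\theta)\cap d$ and let $f(\theta)$ be the first point of $\partial P_\ell\setminus d$ met by $\rho(\theta)$ after $p'(\theta)$. This first-hit point exists because $P_\ell$ is bounded.

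\textbf{Boundary description.} The boundary of the region ($\vis_d(q)$, or $\vis_s(q)$ with $s$ in place of $d$) consists of three parts. The first is the part of $d$ swept, or its endpoints. The second is the chain $\{f(\theta)\}$. The third is window segments along rays through reflex vertices, where $f$ jumps. Hence the elements of $P_\ell$ on the boundary appear in exactly the order in which $f(\theta)$ meets them as $\theta$ increases. It therefore suffices to show that $f$ is monotone with respect to the order along $\partial P_\ell$. Here $\partial P_\ell$ minus the open diagonal $d$ is a simple path $\gamma$ from one endpoint of $d$ to the other, so it carries a linear order.

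\textbf{Key monotonicity claim.} Take $\theta_1<\theta_2$ with $f(\theta_1)=a$ and $f(\theta_2)=b$. Consider the closed curve formed by $\overline{p'(\theta_1)a}$, then the subpath of $\gamma$ from $a$ to $b$, then $\overline{b\,p'(\theta_2)}$, then the piece of $d$ between $p'(\theta_2)$ and $p'(\theta_1)$. The two segments lie in $P_\ell$ and meet $\partial P_\ell$ only at their endpoints. Together with the chord from $p'(\theta_1)$ to $p'(\theta_2)$ along $d$, each segment splits $P_\ell$. By planarity, the open subpaths of $\gamma$ are arranged in the same order as the hit points on $d$. Concretely, the segment $\overline{p'(\theta_1)a}$ separates $P_\ell$ into two pieces. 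The piece containing $p'(\theta_2)$, and hence $b$, is bounded by the portion of $\gamma$ on one fixed side of $a$, namely the side adjacent to the endpoint of $d$ toward which $\theta$ increases. Thus $b$ comes after $a$ along $\gamma$.

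The main obstacle is making this separation step rigorous. It needs the fact that rays from $q$ are pairwise disjoint beyond $q$, so $\overline{p'(\theta_2)b}$ cannot cross $\overline{p'(\theta_1)a}$. It also needs the fact that $P_\ell$ is a topological disk, so a chord splits it into two disks whose boundaries are the two arcs of $\gamma$. Ties, where $a$ and $b$ lie on the same edge, are harmless. Degenerate rays through reflex vertices are handled by taking one-sided limits of $f$.

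For $\vis_s(q)$ the argument is the same with $s\subseteq d$ in place of $d$, since $\vis_s(q)$ is exactly the part of $\vis_d(q)$ inside $C_q(s)$. Its elements therefore form a contiguous subsequence of those of $\vis_d(q)$, which is again consistent with the order along $\partial P_\ell$.
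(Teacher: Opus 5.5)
Your proof is correct. The paper does not prove this observation; it imports it from \cite{ref:AronovVi02,ref:BoseEf02}, so your sweep-and-cross-cut argument serves as a self-contained substitute for the citation.

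Its core is sound. Distinct rays from $q$ meet only at $q$, and $q$ lies strictly to the right of the supporting line of $d$. Hence the segments $\overline{p'(\theta)f(\theta)}$ are pairwise disjoint cross-cuts of the disk $P_\ell$. Each one confines every later one, and therefore its endpoint, to the arc of $\gamma$ on the side of the endpoint of $d$ in the sweep direction.

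Disjointness also makes $f$ injective, so $f$ is strictly monotone along $\gamma$. Consequently, the set of angles at which $f$ lies on a fixed edge is an interval. This is what actually rules out an element reappearing after another one, so it is worth stating rather than dismissing ties.

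What your route buys over the citation is generality. It uses only $q\in R_d$, not that $q$ sees $d$. So it covers $\vis_d(q)$ exactly as the paper defines it, ignoring whether $\overline{qp'}\subseteq P$, and this is the version the persistent-tree construction over all cells of $\calA_d$ relies on. The citation buys brevity and delegates degenerate cases.

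Two points need tightening in a full write-up:
\begin{itemize}
\item Define $f(\theta)$ as the far endpoint of the component of $\rho(\theta)\cap P_\ell$ that contains $p'(\theta)$. On an extreme ray through an endpoint of $d$, the ray may leave $P_\ell$ immediately, and the ``first point of $\partial P_\ell\setminus d$ after $p'(\theta)$'' could then be an invisible re-entry point.
\item State the general-position assumption that no ray from $q$ passes through two vertices of $P_\ell$. Otherwise one-sided limits can miss a visible element lying on a window beyond the first grazed vertex.
\end{itemize}

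Your reduction $\vis_s(q)=\vis_d(q)\cap C_q(s)$ is correct. The elements of $P_\ell$ on $\partial\vis_s(q)$ are a subset of those on $\partial\vis_d(q)$, listed in the same angular order. So you do not need contiguity; any subsequence already suffices.
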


We sort all elements of $P_\ell$ along its boundary and assign each of them an {\em index} in this order. 
Let $A_d(q)$ denote the set of elements of $P$ that appear on $\partial \vis_d(q)$. 
By Observation~\ref{obs:boundaryorder}, the elements of $A_d(q)$ are sorted in index order in the combinatorial representation of $\vis_d(q)$. 

\begin{figure}[t]
\centering
\includegraphics[height=1.5in]{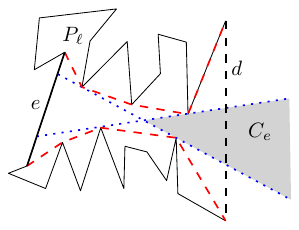}
\caption{The gray region is $C_e$. The two dashed red paths are shortest paths from the two endpoints of $e$ to the two vertices of $d$. The two bounding lines of $C_e$ are tangent to these two paths.}
\label{fig:edgecone}
\end{figure}


For each vertex $v \in P_\ell$, as discussed above, all points of $d$ visible to $v$ within $P_\ell$ form a segment $s_v$ of $d$. Let $C_v$ denote the cone defined by $s_v$ and $v$. Observe that $v$ belongs to $A_d(q)$ if and only if $q \in C_v$. 
For each edge $e$ of $P_\ell$, as discussed in \cite{ref:GuibasLi87}, the points of $d$ weakly visible to $e$ also form a segment $s_e$ of $d$, such that $e$ belongs to $A_d(q)$ if and only if $q$ lies in the cone $C_e$ defined by $s_e$ and a point to the left of $d$ (see Figure~\ref{fig:edgecone}).

For reference, we summarize the above discussion in the following observation.

\begin{observation}\label{obs:cones}
For each vertex $v$ (resp., edge $e$) of $P_\ell$, $v$ (resp., $e$) appears in the combinatorial representation of $\vis_d(q)$ if and only if $q$ lies in the cone $C_v$ (resp., $C_e$). 
\end{observation}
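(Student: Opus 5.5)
We need to prove Observation~\ref{obs:cones}: a vertex $v$ (resp.\ edge $e$) of $P_\ell$ appears on $\partial\vis_d(q)$ if and only if $q\in C_v$ (resp.\ $q\in C_e$). Throughout we use that $q$ lies to the right of the line through $d$, so every relevant segment $\overline{pq}$ crosses $d$ at a single point. We also use that membership in $\vis_d(q)$ depends only on the portion of $\overline{pq}$ lying in $P_\ell$. The plan is to translate ``appears on $\partial\vis_d(q)$'' into a statement about visibility between the element and $d$ inside $P_\ell$, and then read off the cone condition.

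\textbf{Vertex case.} We first show that $v$ appears on $\partial\vis_d(q)$ exactly when $v\in\vis_d(q)$. Since $v\in\partial P_\ell$, any point of $P_\ell$ visible in this sense lying on $\partial P$ is on $\partial\vis_d(q)$. The exception is points of $d$ itself, but $v$ is an endpoint of $d$ only in a degenerate case, which we handle by convention.

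Next, $v\in\vis_d(q)$ means that the ray from $q$ through $v$ meets $d$ at a point $p'$ with $\overline{vp'}\subseteq P_\ell$. This says precisely that $p'\in s_v$ and that $q$ lies on the ray from $v$ through $p'$ beyond $d$. Conversely, if $q\in C_v$, the segment $\overline{vq}$ crosses $d$ at some $p'\in s_v$, and hence $\overline{vp'}\subseteq P_\ell$. Thus $v\in\vis_d(q)$ if and only if $q\in C_v$, restricted to the side of $d$ containing $q$.

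\textbf{Edge case.} This is the main obstacle. We want: $e$ contributes a portion to $\partial\vis_d(q)$ if and only if some point $p\in e$ has $\overline{pq}$ crossing $d$ at $p'$ with $\overline{pp'}\subseteq P_\ell$. In other words, $q$ must lie in the union over $p\in e$ of the cones from $p$ through $s_p$.

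The plan is to invoke the characterization of weak visibility from $d$ to $e$ by~\cite{ref:GuibasLi87}. Let $a,b$ be the endpoints of $s_e$. The set of lines stabbing both $e$ and $s_e$ through $P_\ell$ is bounded by the two ``crossing tangents'' to the shortest paths joining the endpoints of $e$ to the endpoints of $d$ (Figure~\ref{fig:edgecone}). To the right of $d$, the union of the rays from points of $e$ through $s_e$ is exactly the wedge between these two tangent lines, with apex at their intersection point, which lies to the left of $d$.

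To prove this equality of the union with the wedge $C_e$, we argue containment in both directions.
\begin{itemize}
\item Every such ray stays between the two tangents, since it cannot cross the convex hourglass chains.
\item Conversely, any $q$ in the wedge gives a line through $q$ that can be rotated or translated until it meets $e$ while remaining inside the hourglass, which yields a valid point $p$.
\end{itemize}
Finally, if a point of $e$ is visible in $\vis_d(q)$ then an open sub-segment of $e$ lies on $\partial\vis_d(q)$. The argument is to perturb $p$ slightly: the nearby segments $\overline{pp'}$ remain inside $P_\ell$ because $e$ is open and $P_\ell$ is locally on one side of $e$. This completes the equivalence for edges.
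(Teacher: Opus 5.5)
Your proposal is correct and follows essentially the same route as the paper, which justifies this observation only informally. There, the vertex case is an unwinding of the definitions of $s_v$, $C_v$ and $\vis_d(q)$ for $q$ to the right of $d$, and the edge case appeals to the Guibas--Lubiw hourglass characterization, in which $C_e$ is the wedge bounded by the two crossing tangents to the shortest paths joining the endpoints of $e$ and $d$. (Your closing perturbation step is not needed, since any point of $e$ lying in $\vis_d(q)$ is already on $\partial\vis_d(q)$; as written it also only works one-sidedly when $\overline{pp'}$ grazes a reflex vertex.)
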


All cones $C_v$ and $C_e$ for the vertices and edges of $P_\ell$ can be computed in $O(n)$ time using the shortest path trees of the two endpoints of $d$~\cite{ref:GuibasLi87}. 
Let $\mathcal{C}$ denote the set of all these cones. We refer to $v$ (resp., $e$) as the {\em generator} of $C_v$ (resp., $C_e$). 
To compute $A_d(q)$, by Observation~\ref{obs:cones}, the problem reduces to the following {\em cone stabbing query}: given $q$, compute all cones of $\mathcal{C}$ that are stabbed by $q$ (we say that a cone is {\em stabbed} by $q$ if the cone contains $q$). To accommodate the clipping operation with $C_q(s)$, we require the data structure to return a binary search tree storing the generators of the stabbed cones in their index order.

\subparagraph{A quadratic-space solution.}
A quadratic-space solution is given in \cite{ref:AronovVi02}. We briefly review it below as it also serves as a subroutine in some of our later algorithms.

Let $R_d$ denote the halfplane to the right of the supporting line of $d$. 
If $q \notin R_d$, then $\vis_d(q) = \emptyset$, so we assume $q \in R_d$. 
Let $\mathcal{A}_d$ be the arrangement in $R_d$ formed by the supporting lines of the bounding rays of all cones in $\mathcal{C}$.  
By Observation~\ref{obs:cones}, points $q$ in the same cell of $\mathcal{A}_d$ share the same (combinatorial representation of) $\vis_d(q)$, and $\vis_d(q)$ of $q$ in adjacent cells differ by at most one element of $P_\ell$.  
Since there are $O(n)$ cones, $\mathcal{A}_d$ has $O(n^2)$ cells.  

We use persistent binary search trees~\cite{ref:SnarnakPl86,ref:DriscollMa89} 
to store the combinatorial representations of $\vis_d(q)$ for all cells of $\mathcal{A}_d$ 
in a total of $O(n^2)$ space and $O(n^2 \log n)$ preprocessing time.  
Additionally, we construct a point location data structure on $\mathcal{A}_d$ in 
$O(n^2)$ time and space~\cite{ref:KirkpatrickOp83,ref:EdelsbrunnerOp86}.  
Given a query point $q$, we can locate the cell of $\mathcal{A}_d$ containing $q$ in $O(\log n)$ time and thereby obtain access to the corresponding binary search tree that stores $\vis_d(q)$.  
This yields a data structure of complexity  $O(n^2 \log n, n^2, \log n)$, i.e., $O(n^2 \log n)$ preprocessing time, 
$O(n^2)$ space, and  $O(\log n)$ query time for computing $\vis(P_\ell, q)$ for any $q \in P_r$.



\subparagraph{The boundary case.} In the boundary case, each query point $q$ is on $\partial P\cap P_r$. In this case the cone stabbing queries become 1D {\em interval stabbing} queries on $\partial P\cap P_r$. Using persistent trees~\cite{ref:SnarnakPl86,ref:DriscollMa89}, 
we obtain a data structure of complexity $O(n\log n, n, \log n)$. This result combining with the algorithmic framework given in Section~\ref{sec:bdscheme} yields a data structure of complexity $O(n\log^2 n, n\log n, \log^2 n)$ for computing $\vis(q)$. 
See Section~\ref{sec:Boundary_queries_only} for details. 

\subsection{A linear-space solution to the diagonal-separated subproblem}
\label{sec:coneproblem}


We now present our solution to the cone stabbing problem and thus solve the diagonal-separated subproblem. 
\subparagraph{Overview.}
Our algorithm computes a collection of ``canonical subsets'' of $\calC$, where each subset is represented by a binary search tree storing the cone generators in their index order. We perform a clipping operation on each tree using the cone $C_s(q)$. The elements remaining in the clipped trees of all canonical subsets collectively form the set $A_s(q)$, which consists of the elements of $P$ lying on $\partial \vis_s(q)$. We then sort the elements of $A_s(q)$ by their index order to obtain $\vis_s(q)$. Finally we build a binary search tree to store them and return this tree as the answer to the query. The details are given below. 
\bigskip



We adapt Chan and Zhang's data structure~\cite[Section 3.1]{ref:ChanSi23} for simplex stabbing queries (given a query point $q$, compute all simplices stabbed by $q$ among a set of given simplices; each of our cones is a special simplex in 2D). 
The data structure has multiple levels and it is constructed using Matou\v{s}ek's simplicial partition~\cite{ref:MatousekEf92}. To solve our problem, we can build the data structure on the cones of $\calC$ with $j=2$ levels. We briefly discuss this and the reader is referred to \cite[Section 3.1]{ref:ChanSi23} for details. 

Determining whether $q$ is in a cone is equivalent to testing whether $q$ is in the intersection of two halfplanes bounded by the cone's bounding lines. In the dual plane, this is to check whether the dual line $q^*$ of $q$ lies in the ``correct'' side of the two dual points of the bounding lines of the cone. We apply Matou\v{s}ek's simplicial partition~\cite{ref:MatousekEf92} on the dual points of the $j$-th halfplanes of the cones of $\calC$, with $j=2$. For each cell $\Delta$ of the partition that intersects $q^*$, we recurse on the canonical subset of the dual points of $\Delta$. For each cell $\Delta$ lying entirely below $q^*$ (or above $q^*$, depending on which is the correct side), we recurse on the canonical subset of $\Delta$ but as a level-($j-1$) problem. For each canonical subset in a level-0 problem, we build a binary search tree to store all generators of cones in the subset. Following the analysis in \cite[Section~3.1]{ref:ChanSi23}, the resulting data structure is of $O(n)$ space and can be constructed in $O(n\log n)$ time. 

Given a query point $q$, the query algorithm returns $O(n^{1/2+\epsilon})$ canonical subsets whose union corresponds to the set of cones of $\calC$ stabbed by $q$. For each canonical subset, we perform a clipping operation using the cone $C_q(s)$, taking $O(\log n)$ additional time. Hence, we can compute $A_s(q)$ in $O(n^{1/2+\epsilon}\log n+|A_s(q)|)$ time, which simplies to  $O(n^{1/2+\epsilon}+|A_s(q)|)$ since the $\log n$ factor is absorbed by $n^{\epsilon}$. 

\subparagraph{Remark.}
Chan and Zhang also proposed a more efficient data structure that, in the two-dimensional case, answers simplex range stabbing \emph{counting} queries in $O(\sqrt{n})$ time~\cite[Section~3]{ref:ChanSi23}. However, that structure is designed for the group setting and relies on subtraction operations to compute the number of simplices stabbed by a query point. In our setting, where clipping operations are required, this subtraction-based approach is no longer applicable.  
Furthermore, Chan and Zhang~\cite[Section~3.5]{ref:ChanSi23} also considered simplex range stabbing \emph{reporting} queries and developed a data structure that, in the two-dimensional case, answers each reporting query in $O(\sqrt{n} + k)$ time, where $k$ denotes the output size. One might hope to adapt this structure to improve our cone stabbing queries, and specifically, to reduce the $O(n^{1/2+\epsilon})$ factor to $O(\sqrt{n})$. However, since their reporting query bound also depends on the output size, the $O(\sqrt{n})$ term does not necessarily bound the number of canonical subsets. In our problem, where clipping operations are essential, we require a genuine upper bound on the number of canonical subsets. Hence, adapting their reporting data structure to our setting appears challenging.
\medskip
\medskip

Once $A_s(q)$ is obtained, we must sort its elements by their index order. A straightforward comparison sort would require $O(|A_s(q)|\log |A_s(q)|)$ time, introducing an undesired logarithmic factor in the overall query time. However, since the index of each element of $A_s(q)$ is an integer in $[1,2n]$, we can perform the sorting in $O(n^{\epsilon} + |A_s(q)|)$ time using radix sort in the following lemma. 

\begin{lemma}\label{lem:sort}
For any $\epsilon>0$, the elements of $A_s(q)$ can be sorted by their indices in $O(n^{\epsilon}+|A_s(q)|)$ time.
\end{lemma}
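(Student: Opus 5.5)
We sort with a radix sort in base $b = \lceil n^{\epsilon'} \rceil$ for a suitable constant $\epsilon'>0$ chosen relative to $\epsilon$. Every element of $A_s(q)$ has an integer index in $[1,2n]$, so each index written in base $b$ has at most $t = \lceil \log_b (2n) \rceil = O(1/\epsilon') = O(1)$ digits.

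We then perform $t$ rounds of stable counting sort (bucket sort), one per digit, from least to most significant. A round with $b$ buckets takes $O(b + |A_s(q)|)$ time: we distribute the elements into the buckets stably, preserving the order from the previous round, and then concatenate the buckets by scanning all $b$ of them. The total time is therefore $O(t(b+|A_s(q)|)) = O(n^{\epsilon'} + |A_s(q)|)$, and taking $\epsilon' \le \epsilon$ gives the claimed $O(n^\epsilon + |A_s(q)|)$ bound. Correctness is the standard LSD radix sort argument: by induction on the rounds, stability keeps the elements sorted with respect to the lower-order digits already processed.

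The only delicate point is the bucket array itself. We cannot afford to allocate and initialize $b$ buckets per query beyond $O(b)$ time, and we must not pay $\Theta(n)$. There are two ways to handle this:
\begin{itemize}
\item \emph{Preallocated array.} Allocate a single array of $b$ list heads once during preprocessing, using $O(n^\epsilon)$ space, which is dominated by the $O(n)$ space of the structure. In each round we fill it, then scan it while resetting each head to empty. This costs $O(b)$ per round, so the array is clean for the next round and the next query.
\item \emph{Fresh allocation.} Simply initialize the $b$ buckets at the start of each round in $O(b)$ time, which is within the budget anyway.
\end{itemize}
Extracting the digits of an index takes $O(1)$ time per digit in the word-RAM model, using precomputed powers of $b$ (or shifts if $b$ is a power of two). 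With this in place the bound is immediate; we regard the bucket initialization as the only step needing care, and even it poses no real obstacle.
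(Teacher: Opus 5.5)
Your proposal is correct and takes the same approach as the paper. The paper likewise splits the $O(\log n)$-bit indices into $O(1)$ digits of $\epsilon\log n$ bits each and runs radix sort at $O(n^{\epsilon}+|A_s(q)|)$ time per pass; your handling of the bucket array is a detail the paper leaves implicit.
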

\begin{proof}
Let $I$ denote the set of indices of the elements in $A_s(q)$, and our goal is to sort the integers in $I$.  
Since each index in $I$ is an integer in $[1,2n]$, it can be represented using at most $\log n + 1$ bits. We partition these $\log n + 1$ bits into $O(1)$ groups, each consisting of $\epsilon \log n$ bits. We refer to each such group as a {\em digit}. Hence, each integer in $I$ has $O(1)$ digits.  
We apply radix sort on the integers of $I$ using these digits. Sorting $I$ by a single digit requires $O(2^{\epsilon \log n} + |I|) = O(n^{\epsilon} + |I|)$ time. Since there are only $O(1)$ digits, the entire sorting process takes $O(1)$ passes, resulting in a total running time of $O(n^{\epsilon} + |I|)$.
\end{proof}

After $A_s(q)$ is sorted, we obtain the combinatorial representation of $\vis_s(q)$. We then construct a binary search tree to store this representation in an additional $O(|\vis_s(q)|)$ time. Combining all the above results yields the following lemma.  

\begin{lemma}\label{lem:linearspace}
Given a simple polygon $P$ of $n$ vertices and a diagonal $d$ dividing $P$ into two subpolygons $P_\ell$ and $P_r$, 
a data structure of $O(n)$ space can be constructed in $O(n\log n)$ time such that $\vis(P_\ell,q)$ can be computed in $O(n^{1/2+\epsilon}+|\vis(P_\ell,q)|)$ time for any point $q\in P_r$. 
\end{lemma}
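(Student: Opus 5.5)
The plan is to assemble the lemma from the pieces already developed in this section, accounting separately for preprocessing, space, and query time.

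\emph{Preprocessing.} We first build the GH data structure~\cite{ref:GuibasOp89} on $P$ in $O(n)$ time and space. We then compute the shortest path trees from the two endpoints of $d$ in $P_\ell$, which yields all cones of $\calC$ (the cones $C_v$ and $C_e$) in $O(n)$ time~\cite{ref:GuibasLi87}. Each element of $P_\ell$ is assigned its index along $\partial P_\ell$. Next we build the two-level adaptation of Chan and Zhang's structure on $\calC$, using Matou\v{s}ek's simplicial partition on the dual points of the bounding lines. Each level-0 canonical subset stores a balanced binary search tree of generators keyed by index. Following~\cite[Section~3.1]{ref:ChanSi23}, the total size of all canonical subsets is $O(n)$, so the space is $O(n)$. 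Building each tree takes time linear in its size after sorting by index. We can either pre-sort all of $\calC$ once by index and filter, or pay $O(\log n)$ per element. Either way the preprocessing is $O(n\log n)$.

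\emph{Query.} Given $q\in P_r$, we first test whether $q\in R_d$; if not, we return $\emptyset$. Otherwise we compute $s=\vis(d,q)$ and the two bounding rays of $C_q(s)$ with the GH structure in $O(\log n)$ time; if $s=\emptyset$ we return $\emptyset$. We then query the cone stabbing structure and obtain $O(n^{1/2+\epsilon})$ canonical subsets whose disjoint union is exactly the set of stabbed cones, which by Observation~\ref{obs:cones} is $A_d(q)$. For each subset's tree we perform the clipping by $C_q(s)$ in $O(\log n)$ time, and we report the surviving elements in time linear in their number. Since $\vis_s(q)=\vis_d(q)\cap C_q(s)$, and by Observation~\ref{obs:boundaryorder} the elements of $A_d(q)$ surviving in $C_q(s)$ form a contiguous run in index order, the clipping in each tree reduces to two $O(\log n)$ searches. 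The union of the surviving elements is $A_s(q)$.

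The main obstacle is this correctness of clipping, namely that the elements of $\partial\vis_s(q)$ are exactly the elements of $\partial\vis_d(q)$ whose contribution lies within $C_q(s)$. Two points need care. First, an edge partially inside the cone still appears. Second, the only extra features of $\vis_s(q)$ are the two window endpoints on the boundary rays, and these are obtained from ray shooting along the two rays in $O(\log n)$ time. For this step I would rely on the argument of~\cite{ref:AronovVi02}, which performs the same clipping on a single tree; applying it per canonical subset is valid because membership in the clipped set depends only on the element, not on which tree stores it. The total time is $O(n^{1/2+\epsilon}\log n+|A_s(q)|)$, which is $O(n^{1/2+\epsilon'}+|A_s(q)|)$ after renaming $\epsilon$.

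Finally, we sort $A_s(q)$ by index in $O(n^\epsilon+|A_s(q)|)$ time via Lemma~\ref{lem:sort}. We then add the two window endpoints and build a balanced search tree in linear time from the sorted list. Since $|A_s(q)|=O(|\vis(P_\ell,q)|)$, the query time is $O(n^{1/2+\epsilon}+|\vis(P_\ell,q)|)$, as claimed.
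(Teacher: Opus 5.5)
Your proposal is correct and follows essentially the same route as the paper: the two-level Chan--Zhang structure on the cones of $\calC$, with index-ordered binary search trees on the level-0 canonical subsets, an $O(\log n)$ clipping by $C_q(s)$ for each of the $O(n^{1/2+\epsilon})$ returned subsets, and then the radix sort of Lemma~\ref{lem:sort}. You also justify why per-tree clipping is valid: the survivors in each canonical tree form a contiguous index run, so clipping reduces to two searches. The paper leaves that point implicit.
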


\subsection{A general algorithmic framework}
\label{sec:bdscheme}


We now describe the algorithmic framework. In the preprocessing, we compute a balanced decomposition of $P$ together with the BD-tree $\calT(P)$ as defined in Section~\ref{sec:pre}.

Given a query point $q \in P$, let $v_q$ be the leaf of $\calT(P)$ whose triangle $P_{v_q}$ contains $q$. The query algorithm proceeds in a bottom-up manner along the path from $v_q$ to the root of $\calT(P)$. For each node $v$ on this path, let $u$ denote the parent of $v$ and $w$ the sibling of $v$ (i.e., the other child of $u$). Suppose that $\vis(P_v,q)$ has already been computed (this is true initially when $v = v_q$ since $\vis(P_v,q) = P_{v_q}$). Our goal at node $u$ is to compute $\vis(P_u,q)$, which equals the union of $\vis(P_v,q)$ and $\vis(P_w,q)$.  

By definition, $P_u$ is partitioned into $P_v$ and $P_w$ by the diagonal $d_u$. Since $\vis(P_v,q)$ is known, we next compute $\vis(P_w,q)$ and then merge $\vis(P_v,q)$ and $\vis(P_w,q)$ along $d_u$ to obtain $\vis(P_u,q)$. Computing $\vis(P_w,q)$ is precisely an instance of the diagonal-separated subproblem. Therefore, in the preprocessing, for each node $u \in \calT(P)$, we construct a diagonal-separated data structure for both $P_v$ and $P_w$ (with respect to $d_u$). During a query, these data structures allow us to compute $\vis(P_w,q)$ efficiently.



Once $\vis(P_w,q)$ is computed, we obtain $\vis(P_u,q)$ by merging $\vis(P_w,q)$ and $\vis(P_v,q)$ along $s = \vis(d_u,q)$. Note that if $s = \emptyset$, then $\vis(P_w,q) = \emptyset$; otherwise, $s$ must be an edge of $\vis(P_v,q)$. Merging $\vis(P_v,q)$ and $\vis(P_w,q)$ can be done in $O(\log n)$ time. This merging step was briefly mentioned in~\cite{ref:AronovVi02}; for completeness and reference purpose, we provide some details in Lemma~\ref{lem:merge}. 
After computing $\vis(P_u,q)$, the algorithm continues upward until reaching the root of $\calT(P)$, at which point $\vis(q)$ is obtained.

\begin{lemma}\label{lem:merge}
Suppose a diagonal $d$ divides $P$ into two subpolygons $P_\ell$ and $P_r$. Let $q$ be a point in $P_r$. 
Given a binary search tree storing $\vis(P_r,q)$ and a binary search tree storing $\vis(P_\ell,q)$, we can obtain a binary search tree storing $\vis(q)$ in $O(\log n)$ time. 
\end{lemma}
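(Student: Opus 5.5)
The plan is to treat both trees as balanced BSTs (e.g., red-black or finger trees) that support split and join in $O(\log n)$ time. The merge then becomes a constant number of splits, joins and boundary repairs. The key structural fact is that, since $q\in P_r$, every point of $\vis(P_\ell,q)$ is seen through $s=\vis(d,q)$. If $s=\emptyset$, then $\vis(P_\ell,q)=\emptyset$ and we simply return the tree of $\vis(P_r,q)$. Otherwise $s$ is a single edge (a window or a portion of $d$) of $\partial\vis(P_r,q)$. Moreover $\vis(P_\ell,q)$ is contained in the cone $C_q(s)$, and its boundary is a chain running from one endpoint of $s$ to the other. This follows from Observation~\ref{obs:boundaryorder} and the fact that $\vis(q)$ is star-shaped around $q$.

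First, locate $s$ in the tree of $\vis(P_r,q)$. Since $d$ is a diagonal with endpoints at vertices of $P$, the position where $d$ would appear in the cyclic boundary order of $P_r$ is known by index. Alternatively, we search by angle around $q$ for the direction of either bounding ray of $C_q(s)$, which the GH data structure gives in $O(\log n)$ time. Either way this takes $O(\log n)$ time. We then split the tree of $\vis(P_r,q)$ at this edge into a prefix $T_1$ (elements before $s$ in clockwise order) and a suffix $T_2$ (elements after $s$).

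Next, determine the first and last elements of $\vis(P_\ell,q)$ via leftmost/rightmost access, and check consistency at the two endpoints of $s$. Each endpoint of $s$ is either an endpoint of $d$, i.e., a vertex of $P$, or a point where a ray from $q$ grazes a reflex vertex. In the latter case the boundary of $\vis(q)$ has a window there, whose far end lies on an element of $P_\ell$. We need to make sure that an element occurring on both sides of $d$ is not duplicated; this happens with an edge of $P$ incident to an endpoint of $d$, or with the endpoint vertex of $d$ itself. We resolve this by comparing the last element of $T_1$ with the first element of $\vis(P_\ell,q)$, and likewise at the other end, deleting a duplicate if found. Each comparison and deletion is $O(1)$ work plus an $O(\log n)$ deletion.

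Finally, concatenate $T_1$, the tree of $\vis(P_\ell,q)$, and $T_2$ using two joins. This is valid because, by Observation~\ref{obs:boundaryorder}, the cyclic order on $\partial\vis(q)$ agrees with the order along $\partial P$. In that order the elements of $P_\ell$ occupy exactly the contiguous portion of $\partial P$ between the endpoints of $d$, which is the place the edge $s$ occupied. The main obstacle is this boundary bookkeeping at the endpoints of $s$: correctly deciding which element replaces $s$ and avoiding duplicates or missing window elements. I would handle it by a case analysis on whether each endpoint of $s$ is an endpoint of $d$ or an interior point of $d$. Each case touches only $O(1)$ tree positions, so the total time is $O(\log n)$.
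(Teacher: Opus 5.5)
Your proposal is correct and follows essentially the same route as the paper. If $s=\emptyset$, return the tree of $\vis(P_r,q)$. Otherwise, locate the leaf storing $d$, split the tree there (discarding that leaf), and splice the tree of $\vis(P_\ell,q)$ between the two pieces, all in $O(\log n)$ time. Your extra duplicate check at the ends of $s$ is harmless bookkeeping that the paper omits; note that it can only concern the endpoints of $d$, since every edge of $P$ lies on exactly one side of $d$.
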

\begin{proof}
Let $T_r$ be the tree storing $\vis(P_r,q)$ and $T_\ell$ the tree storing $\vis(P_\ell,q)$. 
Let $s=\vis(d,q)$.
If $s=\emptyset$, then $\vis(P_\ell, q)=\emptyset$ and thus $\vis(q)=\vis(P_r,q)$. In this case, we can simply return $T_r$. If $s\neq \emptyset$, then $d$ must appear as an element in $\vis(P_r,q)$. We assume that $T_r$ stores $\vis(P_r,q)$ in such a way that each element of $\vis(P_r,q)$ is stored in a leaf of $T_r$. Let $v_{d}$ be the leaf of $T_r$ storing $d$. 

The visibility of $q$ in $P_\ell$ is constrained by the cone $C_q(s)$ defined by $q$ and $s$, i.e., $\vis(P_\ell,q)\subseteq C_q(s)$. Therefore, if we replace the leaf $v_{d}$ in $T_r$ by the entire tree $T_\ell$, then we could obtain a new tree that exactly stores the combinatorial representation of $\vis(q)$. However, the new tree thus obtained might not be balanced. To achieve a balanced tree, we instead do the following. We first perform a split operation to split $T_r$ into two trees along the path from the root to $v_{d}$ (and also remove $v_{d}$). After that, we perform a splice operation to splice the above two trees with $T_\ell$ by having $T_\ell$ in the middle. These split and splice operations can be done in $O(\log n)$ time. The new tree thus obtained is balanced. We finally return the root of the new tree. The total time is $O(\log n)$.
\end{proof}

If we plug the quadratic-space diagonal-separated data structure of~\cite{ref:AronovVi02} (also reviewed in Section~\ref{sec:diasep}) into the framework, then, since the sizes of the subpolygons decrease geometrically along any root-to-leaf path of $\calT(P)$, we obtain a data structure of complexity $O(n^2\log n, n^2, \log^2 n)$ for computing $\vis(q)$ for any $q \in P$. This is the result of~\cite{ref:AronovVi02}.  

If we instead apply our linear-space result from Lemma~\ref{lem:linearspace}, then, because the total size of all subpolygons at each level of $\calT(P)$ is $O(n)$ (as they form a partition of $P$), we obtain a data structure of complexity $O(n\log^2 n, n\log n, n^{1/2+\epsilon})$. 

\subparagraph{Trade-off.}
We can obtain a trade-off by applying a somewhat standard trade-off technique based on hierarchical cuttings~\cite{ref:ChazelleCu93}. Specifically, we first have the following trade-off of Lemma~\ref{lem:linearspace} for the  diagonal-separated problem.

\begin{lemma}\label{lem:tradeoff}
Let $P$ be a simple polygon with $n$ vertices, and let $d$ be a diagonal that divides $P$ into two subpolygons $P_\ell$ and $P_r$.  
Suppose $r$ is a parameter satisfying $1 \le r \le n^{1-\epsilon}$ for any constant $\epsilon > 0$.  
Then, a data structure of size $O(nr)$ can be constructed in $O(nr \log (n/r))$ time such that $\vis(P_\ell, q)$ can be computed in $O((n/r)^{1/2+\epsilon} + |\vis(P_\ell, q)|)$ time for any query point $q \in P_r$.
\end{lemma}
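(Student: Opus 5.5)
We follow the standard hierarchical-cutting trade-off~\cite{ref:ChazelleCu93} applied to the cone stabbing problem from Section~\ref{sec:diasep}. Let $L$ be the set of $O(n)$ lines supporting the bounding rays of the cones in $\calC$, restricted to the halfplane $R_d$. We build a $(1/r)$-cutting $\Xi$ of $L$ via a hierarchical cutting. It has $O(r^2)$ triangles, each crossed by at most $n/r$ lines of $L$. The hierarchy has $O(\log r)$ levels, each refining the previous one by a constant factor, so point location in $\Xi$ takes $O(\log r)$ time.

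For a triangle $\Delta\in\Xi$, split the cones of $\calC$ into three groups: those containing $\Delta$ entirely, those disjoint from $\Delta$, and those whose boundary crosses $\Delta$. The third group has size $O(n/r)$. For it we build the linear-space structure of Section~\ref{sec:coneproblem} (the Chan--Zhang based structure of Lemma~\ref{lem:linearspace}). This uses $O(n/r)$ space per triangle, $O(nr)$ in total, and $O((n/r)\log(n/r))$ preprocessing time per triangle, $O(nr\log(n/r))$ in total.

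The first group, the cones fully containing $\Delta$, can be large, so it cannot be stored explicitly per triangle. Here the hierarchy is used. At each node $\Delta'$ of the hierarchical cutting, with parent $\Delta''$, we store as a balanced BST, ordered by index, the generators of the cones that contain $\Delta'$ but whose boundary crosses $\Delta''$. A cone containing the leaf $\Delta$ is recorded at exactly one ancestor: the highest ancestor it fully contains. At level $i$ there are $O(\rho^{2i})$ triangles, each crossed by $n/\rho^i$ lines, so the lists at level $i$ have total size $O(n\rho^{i})$. Summing this geometric series gives $O(nr)$ space. Sorting each list adds only a logarithmic factor to the preprocessing, which should give $O(nr\log(n/r))$ overall, up to the usual bookkeeping.

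A query proceeds as follows.
\begin{enumerate}
\item Locate $q$ in the hierarchy in $O(\log r)$ time, recording the $O(\log r)$ canonical BSTs along the search path.
\item Query the leaf's linear-space structure. This yields $O((n/r)^{1/2+\epsilon})$ canonical subsets.
\item Clip every canonical tree with $C_q(s)$ as in Section~\ref{sec:coneproblem}, at $O(\log n)$ cost per tree. Here $C_q(s)$ is obtained with the GH structure in $O(\log n)$ time.
\item Merge the survivors and sort them by index using Lemma~\ref{lem:sort}.
\item Build the output BST.
\end{enumerate}
The number of canonical trees is $O(\log r+(n/r)^{1/2+\epsilon})$. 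Since $r\le n^{1-\epsilon}$, we have $n/r\ge n^{\epsilon}$. Hence $(n/r)^{\epsilon'}$ dominates $\log n$ for a suitably rescaled constant, and the $\log n$ clipping cost per tree is absorbed. This gives a total query time of $O((n/r)^{1/2+\epsilon}+|\vis(P_\ell,q)|)$. The same bound absorbs the $O(n^{\epsilon})$ radix-sort overhead of Lemma~\ref{lem:sort}, provided we choose the digit size as $\epsilon\log(n/r)$ bits, i.e.\ $O(1/\epsilon^2)$ passes.

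The main obstacle is the accounting for the fully-containing cones. First, the space must stay $O(nr)$ rather than $O(nr\log r)$, which rests on the geometric growth of the per-level list sizes. Second, the $O(\log n)$ overhead per canonical set must be absorbed into the $(n/r)^{1/2+\epsilon}$ term, which is exactly why the constraint $r\le n^{1-\epsilon}$ is needed.
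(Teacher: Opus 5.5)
Your proposal is correct and follows essentially the same route as the paper. Both use a hierarchical $(1/r)$-cutting on the bounding lines, per-cell index-ordered BSTs of the cones that contain a cell but not its parent, and a Lemma~\ref{lem:linearspace} structure on the $O(n/r)$ conflicting cones at each bottom cell; queries clip the $O(\log r)$ path trees and the leaf's canonical subsets with $C_q(s)$, radix sort the result, and use $r\le n^{1-\epsilon}$ to absorb the logarithmic overheads. The differences are minor: you sort each canonical list directly, which still sums to $O(nr\log(n/r))$, while the paper reads the lists off the global index order in $O(nr)$ time; and you set the radix digit to $\epsilon\log(n/r)$ bits rather than choosing a smaller $\epsilon'$.
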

\begin{proof}
We first introduce the hierarchical cutting~\cite{ref:ChazelleCu93} and then explain our approach. 

Let $H$ be a set of $n$ lines in the plane. For a parameter $r$ with $1 \leq r \leq n$, a {\em $(1/r)$-cutting} $\Xi$ for $H$ is a collection of interior-disjoint triangles (called {\em cells}) whose union covers the entire plane such that $|H_{\sigma}| \leq n/r$, where $H_{\sigma}$ is the subset of lines of $H$ crossing the interior of $\sigma$ ($H_{\sigma}$ is called the {\em conflict list} of $\sigma$). The {\em size} of $\Xi$ is the number of cells of $\Xi$. 

A cutting $\Xi'$ \emph{$c$-refines} another cutting $\Xi$ if each cell of $\Xi'$ is contained in a cell of $\Xi$ and each cell of $\Xi$ contains at most $c$ cells of $\Xi'$. 
A {\em hierarchical $(1/r)$-cutting} for $H$ consists of a sequence of cuttings $\Xi_0, \Xi_1, ..., \Xi_t$, in which $\Xi_0$ has only one cell that is the entire plane, and each $\Xi_i$, $1 \leq i \leq t$, is a $(1/\rho^i)$-cutting of size $O(\rho^{2i})$ that $c$-refines $\Xi_{i - 1}$, for two constants $\rho$ and $c$. By setting $t = \lceil \log_\rho r \rceil$, the last cutting $\Xi_t$ is a $(1/r)$-cutting. If a cell $\sigma'$ of $\Xi_{i - 1}$  contains a cell $\sigma$ of $\Xi_i$, we say that $\sigma'$ is the \emph{parent} of $\sigma$ and $\sigma$ is a \emph{child} of $\sigma'$.  
As such, the hierarchical $(1/r)$-cutting forms a tree structure with the single cell of $\Xi_0$ as the root. Let $\Xi$ denote the set of all cells in all cuttings $\Xi_i$, $0\leq i\leq t$.
The total number of cells of $\Xi$ is $O(r^2)$ and the sum of the conflict sizes of all cells of $\Xi$ is $O(nr)$~\cite{ref:ChazelleCu93}.
A hierarchical $(1/r)$-cutting for $H$ can be computed in $O(nr)$ time, along with the conflict lists $H_{\sigma}$ for all cells $\sigma\in \Xi$~\cite{ref:ChazelleCu93}. 

We next prove the lemma. Let $H$ denote the set of the supporting lines of bounding rays of the cones of $\calC$. 

\subparagraph{Preprocessing.}
In the preprocessing, we construct a hierarchical $(1/r)$-cutting for $H$ as above, for some parameter $1\leq r\leq n^{1-\epsilon}$. For each $1\leq i\leq t$, for each cell $\sigma\in \Xi_i$, 
we compute $\calC(\sigma)$, the set of cones of $\calC$ that contain $\sigma$ but do not contain the parent cell of $\sigma$. The set $\calC(\sigma)$ can be computed as follows. Let $\sigma'$ be the parent of $\sigma$ in $\Xi_{i-1}$. For each line in the conflict list $H_{\sigma'}$, it is the supporting line of a bounding ray of a cone $C$; we add $C$ to $\calC(\sigma)$ if $C$ contains $\sigma$.
Since $\sigma'$ has $O(1)$ children, computing $\calC(\sigma)$ for all children $\sigma$ of $\sigma'$ takes $O(|H_{\sigma'}|)$ time. As such, computing $\calC(\sigma)$ for all cells $\sigma\in \Xi$ can be done in time linear in the sum of the conflict sizes of all cells of $\Xi$, which is $O(nr)$. This also implies that $\sum_{\sigma\in \Xi}|\calC(\sigma)|=O(nr)$. 

For each cell $\sigma\in\Xi$, we further construct a binary search tree to store all cones of $\calC(\sigma)$ by the index order of their generators. Since we already know the sorted order of the generators of all cones of $\calC$, the sorted lists for $\calC(\sigma)$ for all cells $\sigma\in \Xi$ can be obtained in $O(nr)$ time. As such, constructing the binary search trees for all cells of $\Xi$ takes $O(nr)$ time in total. 

In addition, for each cell $\sigma$ in the last cutting $\Xi_t$, let $\calC_{\sigma}$ denote the set of the cones of $\calC$ whose bounding lines are in the conflict list $H_{\sigma}$. We construct a data structure of Lemma~\ref{lem:linearspace} on the cones of $\calC_{\sigma}$, which takes $O(n/r)$ space and $O(n/r\cdot \log(n/r))$ time since $|\calC_{\sigma}|=O(n/r)$. Let $\calD_{\sigma}$ denote the data structure. As $\Xi_t$ has $O(r^2)$ cells, constructing the data structures $\calD_{\sigma}$ for all cells $\sigma\in \Xi$ takes $O(nr)$ space and $O(nr\log (n/r))$ time. 

Finally, we construct the GH data structure for $P$ in $O(n)$ time and space~\cite{ref:GuibasOp89}. This finishes the preprocessing, which takes $O(nr)$ space and $O(nr\log (n/r))$ time in total. 

\subparagraph{Queries.}
Given a query point $q\in P_r$, we compute $s=\vis(d,q)$ in $O(\log n)$ time by the GH data structure. Next,
we will first compute $A_s(q)$ and then sort $A_s(q)$, after which $\vis_s(q)=\vis(P_\ell,q)$ can be obtained. 

We compute $A_s(q)$ as follows. Starting from the only cell of $\Xi_0$, we follow the hierarchical cutting in a top-down manner. For each $1\leq i\leq t$, assume that we already know the cell $\sigma'$ of $\Xi_{i-1}$ that contains $q$ (which is true initially when $i=1$ since the only cell of $\Xi_0$ is the entire plane). By checking all $O(1)$ children of $\sigma'$, we find the cell $\sigma\in \Xi_i$ containing $q$. Then, using the cone $C_q(s)$, we perform a clipping operation on the binary search tree of $\calC(\sigma)$ to find the generators of the cones of $\calC(\sigma)$ that are in $A_s(q)$, which takes $O(\log n)$ time. We do this until the last cutting $\Xi_t$. As $t=O(\log r)=O(\log n)$, the above takes $O(\log^2 n)$ time plus linear time in the output size. 
In addition, for the last cutting $\Xi_t$, suppose that $\sigma$ is the cell containing $q$. We find the generators of the cones of $\calC_{\sigma}$ that are in $A_s(q)$ using the data structure $\calD_{\sigma}$. By Lemma~\ref{lem:linearspace}, this takes $O((n/r)^{1/2+\epsilon})$ time plus linear time in the output size. This computes $A_s(q)$ in a total of $O((n/r)^{1/2+\epsilon}+\log^2 n+|\vis_s(q)|)$ time. The correctness is based on the following observation: If a generator of a cone $C\in \calC$ is in $A_s(q)$ and $\sigma$ is the cell of $\Xi_t$ containing $q$, then $C$ is either in $\calC_{\sigma}$ or in $\calC(\sigma')$ for an ancestor cell $\sigma'$ of $\sigma$ (we consider $\sigma$ an ancestor cell of itself). 

Next, sorting the elements of $A_s(q)$ can be done in $O(n^{\epsilon'}+k')$ time for any $\epsilon'>0$ by Lemma~\ref{lem:sort}. Hence, computing $\vis_s(q)$ takes $O(n^{\epsilon'}+(n/r)^{1/2+\epsilon}+\log^2 n+|\vis_s(q)|)$ time in total. Since $r\leq n^{1-\epsilon}$, the time is bounded by $O((n/r)^{1/2+\epsilon}+|\vis_s(q)|)$ if we choose $\epsilon'$ small enough.
\end{proof}

Finally, plugging Lemma~\ref{lem:tradeoff} into our framework leads to the following result. 

\begin{theorem}\label{theo:tradeoff}
Given a simple polygon $P$ with $n$ vertices, let $r$ be a parameter satisfying $1 \le r \le n^{1-\epsilon}$ for any constant $\epsilon > 0$. A data structure of size $O(nr + n\log (n/r))$ can be constructed in $O((nr + n\log (n/r))\log n)$ time such that $\vis(q)$ can be computed in $O((n/r)^{1/2+\epsilon})$ time for any query point $q \in P$.  
In particular, setting $r = 1$ yields a data structure of $O(n\log n)$ space, $O(n\log^2 n)$ preprocessing time, and $O(n^{1/2+\epsilon})$ query time.
\end{theorem}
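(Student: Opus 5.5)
We plug Lemma~\ref{lem:tradeoff} into the BD-tree framework of Section~\ref{sec:bdscheme}, but with the cutting parameter scaled to each node. The reason for scaling is that a single global $r$ applied at a node $u$ with $n_u=|P_u|$ vertices would violate the constraint $r\le n_u^{1-\epsilon}$ near the leaves, and would also waste space there.

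\textbf{Choice of $r_u$.} For each node $u$ of $\calT(P)$ with children $v,w$, we build the two diagonal-separated structures of Lemma~\ref{lem:tradeoff} on $P_u$ with respect to $d_u$, using
\[
r_u=\max\{1,\ r\cdot n_u/n\}.
\]
This gives $n_u/r_u\le n/r$. When $r_u>1$ we have $r_u= r n_u/n \le n^{-\epsilon} n_u \le n_u^{1-\epsilon}$, so the lemma's hypothesis holds. The edge case where $n_u$ is constant is handled by a trivial structure.

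\textbf{Space.} The space at $u$ is $O(n_u r_u)$. If $r_u=1$ this is $O(n_u)$; if $r_u>1$ it is $O(r n_u^2/n)$. Let $d$ index the levels of $\calT(P)$. Level $d$ partitions $P$, and each $n_u=O((2/3)^d n)$, so $\sum_{u\text{ at level }d} n_u^2 \le \max_u n_u\cdot \sum n_u = O((2/3)^d n^2)$. Summing over levels, the $r_u>1$ terms contribute $O(nr)$ as a geometric series. The $r_u=1$ terms contribute $O(n)$ per level. Those terms occur only at levels where $n_u\le n/r$, that is $O(\log(n/r))$ levels, since deeper levels reach size $O(1)$ after that many further levels. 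At the shallower levels where $r_u>1$ is possible, the geometric bound already covers them.

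This yields total space $O(nr+n\log(n/r))$. Preprocessing multiplies by $O(\log n)$: the lemma's $\log(n_u/r_u)$ factor, plus $O(n)$ for building the BD-tree and the GH structure.

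\textbf{Query.} We locate the leaf triangle containing $q$ in $O(\log n)$ time by point location on the triangulation, then walk to the root. At each ancestor $u$ we query the structure for the sibling side, which costs $O((n_u/r_u)^{1/2+\epsilon}+|\vis(P_w,q)|)$. We then merge via Lemma~\ref{lem:merge} in $O(\log n)$. The output terms sum to $O(k)$, since the pieces $\vis(P_w,q)$ are disjoint parts of $\vis(q)$, up to shared window edges.

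\textbf{Main obstacle.} The hardest step is bounding the sum of the per-level terms $(n_u/r_u)^{1/2+\epsilon}$. At levels with $r_u=1$ these terms are $n_u^{1/2+\epsilon}$, which decrease geometrically in depth, so they sum to $O((n/r)^{1/2+\epsilon})$. At levels with $r_u>1$ each term equals exactly $(n/r)^{1/2+\epsilon}$, and there can be $O(\log r)$ such levels, giving an extra $\log r$ factor. I would absorb this factor by slightly shrinking $\epsilon$ in the lemma, using $\log n\le (n/r)^{\epsilon'}$ because $n/r\ge n^{\epsilon}$. The same argument absorbs the $O(\log^2 n)$ merging and search overhead. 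The total query time is therefore $O((n/r)^{1/2+\epsilon}+k)$, and setting $r=1$ gives the stated special case.
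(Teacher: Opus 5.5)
Your proposal is correct and follows essentially the same route as the paper. It uses the same per-node parameter $r_u=\max\{1, r n_u/n\}$ (the paper writes $n_v/t$ with $t=n/r$), the same split into upper nodes costing $O(nr)$ and lower nodes costing $O(n)$ per level over $O(\log(n/r))$ levels, and the same absorption of the $O(\log n)$ path length into $(n/r)^{\epsilon}$ via $r\le n^{1-\epsilon}$. The differences are cosmetic: you sum $r n_u^2/n$ level by level instead of solving the paper's recurrence, and you explicitly check that $r_u\le n_u^{1-\epsilon}$ so that Lemma~\ref{lem:tradeoff} applies, which the paper leaves implicit.
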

\begin{proof}
We first construct a balanced decomposition of $P$ together with the BD-tree $\calT(P)$. Let $t=n/r$. 

For each internal node $v$ of $\calT(P)$, the diagonal $d_v$ divides $P_v$ into two subpolygons. We construct a data structure of Lemma~\ref{lem:tradeoff} for each subpolygon with respect to $d_v$
with parameter $r_v=\max\{1,n_v/t\}$ (as the $r$ in Lemma~\ref{lem:tradeoff}), 
where $n_v=|P_v|$; let $\calD_v$ denote these two data structures. 
If $r_v=1$, we call $v$ a {\em lower node}; otherwise $v$ is an {\em upper node}. Note that if $v$ is an upper node, then $n_v/r_v=n/r$. 

In addition, we construct a point location data structure on the triangulation formed by the leaf triangles of $\calT(P)$. This finishes our preprocessing. We next analyze the space and the preprocessing time. 

Computing the balanced decomposition $\calT(P)$ can be done in $O(n)$ time~\cite{ref:GuibasOp89}. Constructing the point location data structure takes $O(n)$ space and  time~\cite{ref:KirkpatrickOp83,ref:EdelsbrunnerOp86}. In the following, we analyze the data structures $\calD_v$ of the nodes $v$ of $\calT(P)$. 

First of all, if $r_v=1$ for a node $v$, then by definition, $n_v/t\leq 1$ and thus $n_v\leq t$. As such, all descedants $v'$ of $v$ are lower nodes because $n_{v'}\leq n_v$. This implies that all upper nodes form a connected subtree of $\calT(P)$ containing the root. Also, since $n/r=n_v/r_v$ for upper nodes $v$ and $n_v$ is geometrically decreasing along the path from the root to any leaf of $\calT(P)$, the depth of every lower node is $\Omega(\log r)$. As the height of $\calT(P)$ is $O(\log n)$, all lower nodes are in the lowest $\Theta(\log n-\log r)=\Theta(\log (n/r))$ levels of $\calT(P)$. Since the space of $\calD_v$ for each lower node $v$ is $O(n_v)$, the total space of $\calD_v$ of the lower nodes $v$ in each level of $\calT(P)$ is $O(n)$. Hence, the total space of $\calD_v$ of all lower nodes $v$ of $\calT(P)$ is $O(n\log(n/r))$. 

For the upper nodes, let $\calT'$ denote the subtree comprising all upper nodes. 
By the definition of the balanced polygon decomposition of $P$, we have $n/3\leq n_v\leq 2n/3$ for each child $v$ of the root. Suppose $S(n)$ is the total space of $\calD_v$ of all nodes $v$ of $\calT'$. Note that the space of $\calD_v$ for the root $v$ is $O(nr)=O(n^2/t)$. Then, we have the following recurrence relation:
\[S(n)=\max_{n/3\leq m \leq 2n/3}\{S(m)+S(n-m+2)\}+O(n^2/t).\]
The recurrence solves to $S(n)=O(n^2/t)$, which is $O(nr)$. Hence, the total space of the data structures $\calD_v$ for all upper nodes $v$ is $O(nr)$. 

In summary, the total space of the data structures $\calD_v$ for all nodes $v$ of $\calT(P)$ is $O(nr+n\log(n/r))$. The preprocessing time analysis is similar, but has an additional $\log n$ factor. 

Given a query point $q$, we follow the query algorithm framework described in the beginning of Section~\ref{sec:bdscheme} to process the nodes of $\calT(P)$ along the path from the leaf whose triangle containing $q$ to the root. For each node $v$, we apply the query algorithm of the data structure $\calD_v$ with query time $O((n_v/r_v)^{1/2+\epsilon})$ plus the output size. If $v$ is an upper node, then $(n_v/r_v)^{1/2+\epsilon}=O((n/r)^{1/2+\epsilon})$ since $n_v/r_v=n/r$. If $v$ is a lower node, then $r_v=1$ and $n_v\leq t=n/r$. Hence, $(n_v/r_v)^{1/2+\epsilon}=(n_v)^{1/2+\epsilon}=O((n/r)^{1/2+\epsilon})$. As such, in addition to the output size, we spend $O((n/r)^{1/2+\epsilon})$ time at each node $v$. Therefore, the total query time for computing $\vis(q)$ is $O((n/r)^{1/2+\epsilon}\log n+|\vis(q)|)$. As $r\leq n^{1-\epsilon}$, we can write the time bound as $O((n/r)^{1/2+\epsilon}+|\vis_s(q)|)$ since the $\log n$ factor is absorbed by $(n/r)^{\epsilon}$.
\end{proof}



\section{A new polygon decomposition}
\label{sec:decomp}


In this section, we introduce a new decomposition of $P$ together with its associated data structure for answering visibility queries.

Our decomposition decomposes $P$ into {\em geodesic triangles}, a concept first introduced in \cite{ref:ChazelleRa94}, in which a balanced geodesic triangulation of $P$ was proposed and it also decomposes $P$ into geodesic triangles. However, our decomposition differs fundamentally from the balanced geodesic triangulation in~\cite{ref:ChazelleRa94} and possesses several key properties, such as the gate property described in Section~\ref{sec:overview}, that are essential for the efficiency of our visibility query algorithm.  

In the following, after defining geodesic triangles, we present a data structure for handling visibility queries with respect to a single geodesic triangle in Section~\ref{sec:twosubproblems}. We then describe our new polygon decomposition in Section~\ref{sec:newdecom}.  


\subparagraph{Geodesic triangles.}
The {\em geodesic path} $\pi(p,q)$ is a shortest path in $P$ between two points $p$ and $q$. 
For three vertices $v_i$, $v_j$, and $v_k$ of $P$ ordered clockwise along $\partial P$, consider the three geodesic paths $\pi(v_i,v_j)$, $\pi(v_j,v_k)$, and $\pi(v_k,v_i)$. If we remove the subpaths shared by any two of these three paths, then the region bounded by the remaining portions of the three paths is a {\em geodesic triangle}; see Figure~\ref{fig:geotriangle}.
A geodesic triangle has three {\em sides}, each being a concave chain that is a subpath of one of the three geodesic paths, and three {\em apexes}, corresponding to the endpoints of its sides.  
In general, a geodesic triangle has exactly three apexes, although it may degenerate to an empty region when one of the three vertices $v_i$, $v_j$, or $v_k$ lies on the geodesic path connecting the other two.

\begin{figure}[t]
\centering
\includegraphics[height=2.0in]{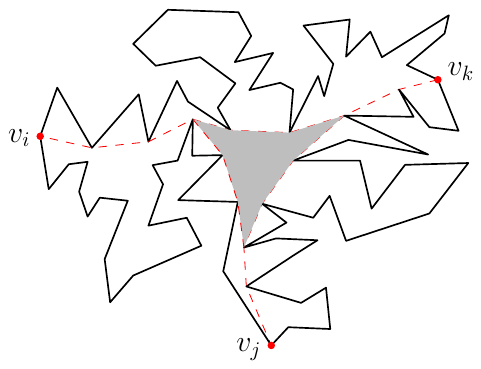}
\caption{Illustrating a geodesic triangle (the gray region) formed by three geodesic paths.}
\label{fig:geotriangle}
\end{figure}

\subsection{A data structure for geodesic triangles}
\label{sec:twosubproblems}


Consider a geodesic triangle $\triangle$, defined by three vertices $p_i$, $p_j$, and $p_k$ as discussed above.  
Each edge of $\triangle$ is either an edge of $P$ or a diagonal.  
For each diagonal $d$ on $\partial \triangle$, let $P_d(\triangle)$ denote the subpolygon of $P$ bounded by $d$ that does not contain $\triangle$.  
We refer to $P_d(\triangle)$ as the {\em pocket of $P$ separated by $d$ from $\triangle$}.  
For simplicity, if $d$ is an edge of $P$, we let $P_d(\triangle)$ denote $d$ itself.  
When $\triangle$ is clear from context, we simply write $P_d$.  

We study two subproblems with respect to $\triangle$.  
In the first subproblem, the query point $q$ lies in $\triangle$, and the goal is to compute $\vis(q)$.  
In the second subproblem, the query point $q$ lies in $P_d$ for some diagonal $d$ of $\partial\triangle$, and we wish to compute $\vis(P \setminus P_d, q)$, i.e., the portion of $\vis(q)$ that lies outside the pocket $P_d$.  
Our result is summarized in the following lemma, which will be used later.  

\begin{lemma}\label{lem:geotriangle}
With respect to a geodesic triangle $\triangle$, a data structure of size $O(n^2)$ can be constructed in $O(n^2 \log n)$ time such that, for any query point $q$,  
if $q \in \triangle$, then $\vis(q)$ can be computed in $O(\log n)$ time,  
and if $q \in P_d$ for some diagonal $d$ on the boundary of $\triangle$, then $\vis(P \setminus P_d, q)$ can be computed in $O(\log n)$ time.  
\end{lemma}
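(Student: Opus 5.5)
Both subproblems reduce to a constant number of instances of the quadratic-space diagonal-separated structure from Section~\ref{sec:diasep}, each with complexity $O(n^2\log n, n^2, \log n)$, glued together by the merging operation of Lemma~\ref{lem:merge}. The obstacle is that $\partial\triangle$ may contain $\Theta(n)$ diagonals, so we cannot afford one structure and one merge per pocket at query time. We therefore group the pockets by the three sides of $\triangle$.

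\textbf{Grouping the pockets by side.} Each side of $\triangle$ is a concave chain $\gamma$ between two apexes $a,b$. The pockets $P_d$ for diagonals $d$ on $\gamma$, together with the edges of $P$ on $\gamma$, form the part of $P$ cut off by $\gamma$; call it $P_\gamma$. Its boundary is $\gamma$ plus the boundary chain of $P$ from $a$ to $b$. Since $\gamma$ is a geodesic path, $P_\gamma$ is exactly the region beyond the ``geodesic diagonal'' $\gamma$. The three regions $P_{\gamma_1},P_{\gamma_2},P_{\gamma_3}$ and $\triangle$ partition $P$.

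\textbf{Case $q\in\triangle$.} We generalize the diagonal-separated structure to a concave chain. Because $\gamma$ is concave as seen from $\triangle$ (reflex vertices bulge into $\triangle$), a point $q\in\triangle$ sees $\gamma$ as a single contiguous portion. That portion is again determined by two tangent rays computable in $O(\log n)$ time with the GH data structure. For each element $x$ of $P_\gamma$, the set of points in $\triangle$ (indeed beyond $\gamma$) from which $x$ is visible through $\gamma$ is a cone-like region bounded by two lines, namely the extensions of the last edges of the shortest paths from $x$'s endpoints to $a$ and $b$. This is the analogue of Observation~\ref{obs:cones}. Hence the arrangement of $O(n)$ lines restricted to $\triangle$ has $O(n^2)$ cells, and $\vis(P_\gamma,q)$ is constant on each cell. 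Walking the arrangement cell by cell with persistent search trees gives $O(n^2)$ space, $O(n^2\log n)$ time, and $O(\log n)$ point location. In fact no clipping is needed: the definition via the region seen through $\gamma$ already incorporates the actual obstruction, since $\triangle$ itself is convex-ish empty of obstacles. Alternatively, and more simply, we can take the arrangement of all $O(n)$ lines defining visibility events (lines through pairs of $\triangle$-relevant vertices and the cone lines) inside $\triangle$ and store $\vis(q)$ itself persistently. The $O(n^2)$ bound holds because each visibility event in $\triangle$ is a line through a reflex vertex on $\partial\triangle$ and an element of one pocket, giving $O(n)$ lines total. We would check this count carefully. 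A query then does one point location and returns the stored tree, plus at most three merges, for $O(\log n)$ total.

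\textbf{Case $q\in P_d$.} Let $d$ lie on side $\gamma_1$. Then $\vis(P\setminus P_d,q)$ is what $q$ sees through $d$. We build the standard diagonal-separated structure for $d$, but its total size over all $d$ is the main obstacle. A separate arrangement per diagonal could cost $\sum_d |P_d|\cdot n$, which is $O(n^2)$ since the pockets are disjoint. So we build, for each $d$, the quadratic-style structure with query points restricted to $P_d$ and targets in $P\setminus P_d$: an arrangement of the $O(n)$ cone lines of the elements of $P\setminus P_d$ seen through $d$. Restricted to the relevant halfplane and cone, the number of cells that actually contain points of $P_d$ must be charged to $|P_d|\cdot n$ rather than $n^2$. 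I would argue this by noting that the visible window through $d$ from $P_d$ is confined by $P_d$'s own geometry, and would try to charge cells to pairs (element of $P_d$, line). If this charging fails, the fallback is to use one structure per side $\gamma$, with queries located first in $P_\gamma$, and to recover $\vis(P\setminus P_d,q)$ by clipping the result with the cone through $d$.
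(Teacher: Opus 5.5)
Your treatment of $q\in\triangle$ starts out as the paper's construction: overlay the cone-line arrangements of all diagonals on one side, store the per-cell answer in persistent trees, and point-locate. However, the assertion that no clipping is needed is false, and it contradicts your own remark that $q$ sees only a tangent-delimited portion of $\gamma$. What is constant on a cell of that arrangement is the exterior visibility $\vis_\gamma(q)$, which ignores everything between $q$ and $\gamma$. The other two sides of $\triangle$ bulge into $\triangle$ and can hide part of $\gamma$ from $q$ (for instance the part of $\gamma$ near an apex, seen from a point close to the adjacent side). Elements seen only through such a hidden part must be discarded. The paper does this by computing the tangents from $q$ to the other two sides in $O(\log n)$ time and clipping the stored tree before merging the three sides. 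Your alternative, storing $\vis(q)$ on an arrangement of ``visibility-event'' lines, is not established. Lines through pairs of relevant vertices can number $\Theta(n^2)$, and the claimed $O(n)$ count would need a proof that each pocket element contributes only $O(1)$ event lines, which you leave open.

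The genuine gap is the case $q\in P_d$. A diagonal-separated structure for a single $d$ can have $\Theta(n^2)$ cells regardless of $|P_d|$, and $\partial\triangle$ may contain $\Theta(n)$ diagonals. Charging cells to $|P_d|\cdot n$ has no basis, because the cone lines are defined with respect to $d$. The line sets therefore differ from one diagonal to the next, and disjointness of the pockets buys nothing. Even a triangular pocket can contain $\Theta(n^2)$ crossings: lines from elements on one side through one endpoint of $d$, and from elements on the other side through the other endpoint, form a grid beyond $d$. So this route only gives $O(n^3)$.

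Your fallback points in the right direction, but which structure it uses and why clipping yields the answer is exactly the content of the proof. The missing ingredients are these:
(i) If $d$ lies on side $\pi$, a segment in $P$ meets the geodesic $\pi$ in a connected set. Hence $q\in P_d$ sees no interior point of the other pockets on $\pi$, and $\vis(P\setminus P_d,q)=\vis(\triangle\cup P_{\pi_1}\cup P_{\pi_2},q)$ for the other two sides $\pi_1,\pi_2$.
(ii) The side arrangements $\calA_{\pi_1},\calA_{\pi_2}$ of the first subproblem, if built in the whole plane rather than restricted to $\triangle$ as you do, can be queried directly at $q\in P_d$ to get $\vis_{\pi_1}(q)$ and $\vis_{\pi_2}(q)$.
(iii) Computing $s=\vis(d,q)$ with the GH structure, together with the tangents from $q$ to $\pi_1$ and $\pi_2$, gives interior-disjoint cones $C_q(\pi_1),C_q(\pi_2)\subseteq C_q(s)$ sharing a bounding ray. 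Clipping the two trees and merging them produces the answer in $O(\log n)$ time.
The same three per-side structures, of total size $O(n^2)$, therefore answer both kinds of queries, and no per-diagonal structure is needed.
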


Intuitively, the motivation for having a query algorithm that computes $\vis(P \setminus P_d, q)$ is as follows.  
If $\vis(P_d)$ is already available, then after obtaining $\vis(P \setminus P_d, q)$, we can compute $\vis(q)$ by merging the two.  
To compute $\vis(P_d)$ itself, we would like to apply the same algorithm recursively.  
For this recursive approach to work, we require an ``appropriate'' polygon decomposition of $P$.  
This motivates the new decomposition introduced in Section~\ref{sec:newdecom}.

In the rest of this subsection, we prove Lemma~\ref{lem:geotriangle}. 

\subsubsection{The first subproblem of Lemma~\ref{lem:geotriangle}: queries from a geodesic triangle}
\label{sec:firstsub}

In the first subproblem, we wish to compute $\vis(q)$ for any query point $q\in \triangle$.

Let $\pi$ be a side of $\triangle$, which is a concave geodesic path. 
Define $P_{\pi}$ to be the union of $P_d$ for all edges $d\in \pi$. Note that $\triangle$ and $P_{\pi}$ for all three sides $\pi$ of $\triangle$ together form a partition of $P$. 
In the following, we show how to compute $\vis(P_{\pi},q)$ 
for a side $\pi$ of $\triangle$. $\vis(s)$ can then be obtained by merging $\vis(P_{\pi},q)$ for all three sides $\pi$ of $\triangle$. 

If $\pi$ consists of a single diagonal $d$, then we can use the quadratic-space method described in Section~\ref{sec:diasep}. We follow the notation there, e.g., $R_d$, $\calA_d$, $\vis_d(q)$. Let $n_d=|P_d|$. Note that the arrangement $\calA_d$ has $O(n_d^2)$ cells. Hence, we can construct a data structure of $O(n_d^2)$ space in $O(n_d^2\log n_d)$ time such that $\vis_d(q)$ can be computed in $O(\log n)$ time.

If $\pi$ has more than one diagonal, we extend the idea to compute $\vis_{\pi}(q)$, defined as the union of $\vis_d(q)$ for all edges $d\in \pi$. Hence, $\vis_{\pi}(q)$ is the {\em exterior visibility polygon} of $q$ in $P_{\pi}$ through $\pi$. In the preprocessing, we overlay the arrangements $\calA_d$ of all edges $d\in \pi$ to obtain an arrangement $\calA_{\pi}$. Then, as above for $\calA_d$, points in the same cell of $\calA_{\pi}$ have the same $\vis_{\pi}(q)$, and $\vis_{\pi}(q)$ in two adjacent cells in $\calA_{\pi}$ differs by at most one element of $P_{\pi}$. The size of $\calA_{\pi}$ is $O(n^2_{\pi})$ with $n_{\pi}=|P_{\pi}|$. As before, we can use persistent binary search trees~\cite{ref:SnarnakPl86,ref:DriscollMa89} to store $\vis_{\pi}(q)$ for all cells of $\calA_{\pi}$ in a total of $O(n_{\pi}^2)$ space and $O(n_{\pi}^2\log n_{\pi})$ time. We also construct a point location data structure on $\calA_{\pi}$ in $O(n_{\pi}^2)$ time and space. 
Given a query point $q$, in $O(\log n_d)$ time we can locate the cell of $\calA_{\pi}$ containing $q$ and thus obtain the access to a binary search tree storing $\vis_{\pi}(q)$. 

After having $\vis_{\pi}(q)$, we can compute $\vis(P_{\pi},q)$ as follows. 
We determine the portion of $\pi$ visible to $q$ in $P$. Notice that since $q\in \triangle$ and the three sides of $\triangle$ are concave, the portion of $\pi$ visible to $q$ is a contiguous subpath of $\pi$ delimited by a cone $C_{\pi}(q)$ whose bounding lines can be determined by computing the tangents from $q$ to the other two sides of $\triangle$ in $O(\log n)$ time. After $C_{\pi}(q)$ is computed, we perform a clipping operation on the tree storing $\vis_{\pi}(q)$, after which we obtain a binary search tree storing $\vis(P_{\pi},q)$. 

Finally, we can compute $\vis(q)$ as follows. 
We first compute a binary search tree storing $\vis(P_{\pi},q)$ for each side $\pi$ of $\triangle$, which takes $O(\log n)$ time as discussed above. Since each side $\pi$ of $\triangle$ is concave, the cones $C_q(\pi)$ for the three sides $\pi$ of $\triangle$ together partition the full angular range around $q$. This means that by simply merging the three trees for $\vis(P_{\pi},q)$ for the three sides $\pi$ of $\triangle$, we can obtain a binary search tree storing $\vis(s)$. This merge operation of binary search trees can be done in $O(\log n)$ time. 


In summary, with $O(n^2)$ space and $O(n^2\log n)$ time preprocessing, $\vis(q)$ can be computed in $O(\log n)$ time for any query point $q\in \triangle$. This proves the first part of Lemma~\ref{lem:geotriangle}.

\subsubsection{The second subproblem of Lemma~\ref{lem:geotriangle}: queries from a pocket}
\label{sec:secondsub}

In the second subproblem, the query point $q$ is in $P_d$ for some diagonal $d$ on $\partial \triangle$ and we wish to compute $\vis(P\setminus P_{d},s)$. We can slightly modify the above algorithm to solve this subproblem as follows. 

Let $\pi$ be the side of $\triangle$ that contains $d$. Since $\pi$ is concave, it is not difficult to see that $q$ cannot be visible to any point in the interior of $P_{d'}$ for any other diagonal $d'$ in $\pi$. Hence, $\vis(P\setminus P_d,q)=\vis(P\setminus P_{\pi},q)$. In the following we focus on computing  $\vis(P\setminus P_{\pi},q)$.

Let $\pi_1$ and $\pi_2$ be the other two sides of $\triangle$. Note that $P\setminus P_{\pi}=\triangle\cup P_{\pi_1}\cup P_{\pi_2}$. 
We construct the same data structure as for the first subproblem above for the sides $\pi_1$ and $\pi_2$. In addition, we construct the GH data structure~\cite{ref:GuibasOp89} in $O(n)$ time. This finishes our preprocessing, which takes $O(n^2)$ space and $O(n^2\log n)$ time.

\begin{figure}[t]
\centering
\includegraphics[height=2.2in]{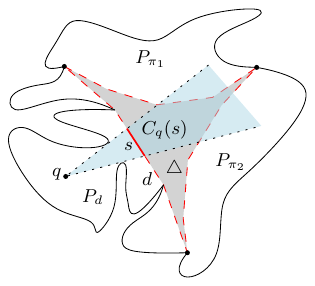}
\caption{The visibility of $q$ in $P\setminus P_{\pi}$ is constrained by the cone $C_q(s)$.}
\label{fig:secondsub}
\end{figure}

Given a query point $q\in P_d$, we first find the segment $s$ of $d$ that is visible to $q$, which can be done in $O(\log n)$ time using the GH data structure~\cite{ref:GuibasOp89}. If $s=\emptyset$, then $\vis(P\setminus P_{\pi},q)=\emptyset$. Otherwise, the visibility of $q$ in $P\setminus P_{\pi}$ is constrained by the cone $C_q(s)$ defined by $q$ and $s$; see Figure~\ref{fig:secondsub}.
As above, we first compute a binary search tree storing $\vis_{\pi_1}(q)$. The portion of $\pi_1$ visible to $q$ is determined by a cone $C_q(\pi_1)\subseteq C_q(s)$. The bounding lines of $C_q(\pi_1)$ can be determined by first computing the tangents from $q$ to $\pi_1$ and $\pi_2$ in $O(\log n)$ time. We then do a clipping operation on the tree storing $\vis_{\pi_1}(q)$ by using $C_q(\pi_1)$ and thus obtain a tree for $\vis(P_{\pi_1},q)$, which takes $O(\log n)$ time. Similarly, we can obtain a tree storing $\vis(P_{\pi_2},q)$ in $O(\log n)$ time. Note that the two cones $C_q(\pi_1)$ and $C_q(\pi_2)$ are interior disjoint and share a common bounding line. Hence, by merging the two trees, we obtain a binary search tree storing the combinatorial representation of $\vis(P_{\pi_1}\cup P_{\pi_2},q)$, which is also the combinatorial representation of $\vis(\triangle\cup P_{\pi_1}\cup P_{\pi_2},q)=\vis(P\setminus P_{\pi},q)=\vis(P\setminus P_{d},q)$, because all three sides of $\triangle$ are concave. 

In summary, we can compute $\vis(P\setminus P_d,q)$ in $O(\log n)$ time for a query point $q\in P_d$. The preprocessing takes $O(n^2)$ space and $O(n^2\log n)$ time. This proves the second part of Lemma~\ref{lem:geotriangle}.

\subsection{The new decomposition}
\label{sec:newdecom}

We now introduce a new decomposition of $P$, defined with respect to a designated edge of $P$.  
Let $p_1, p_2, \ldots, p_n$ be the vertices of $P$ ordered clockwise along $\partial P$.  
In the following, we describe the decomposition with respect to the edge $\overline{p_1p_n}$.


We represent the decomposition by a {\em decomposition tree} $\Psi$.  
Let $v^*$ denote the root of $\Psi$.  
Each node $v$ of $\Psi$ corresponds to a subpolygon $P_v$ and a {\em gate} $d_v$, where $d_v$ is a diagonal of $P$ that separates $P_v$ from the ``outside world'' $P \setminus P_v$, and all other edges of $P_v$ are edges of $P$.  
Initially, we have $P_{v^*} = P$ and $d_{v^*} = \overline{p_1p_n}$  (since $v^*$ is the root, as a special case we allow $d_{v^*}$ not to be a diagonal, i.e., we assume $d_{v^*}$ connects $P$ with the outside).  

Each node $v$ also corresponds to a subpolygon $\core(v) \subseteq P_v$, called the {\em core} of $P_v$, which is the union of a collection of geodesic triangles. The core $\core(v)$ and the subpolygons $P_u$ of all children $u$ of $v$ form a partition of $P_v$ (thus $P_u \subseteq P_v$ for each child $u$).  
Moreover, for every child $u$ of $v$, we have $|P_u| \le |P_v|/2$.  
Hence, the sizes of the subpolygons $P_v$ decrease geometrically along any root-to-leaf path in $\Psi$, and the height of $\Psi$ is $O(\log n)$.  

At each node $v$ of $\Psi$, we explicitly store the edges of $\core(v)$.  
The cores of all nodes are interior-disjoint and together form a partition of $P$.  
Since no new vertices are introduced and no two edges of the decomposition intersect, the overall size of the decomposition and thus the total size of all cores is $O(n)$.  
Because each core consists of geodesic triangles, our decomposition effectively partitions $P$ into geodesic triangles.  

As will be seen later, when our decomposition is used to answer visibility queries, the property that the gate $d_v$ separates $P_v$ from $P \setminus P_v$ plays a crucial role.  
We refer to this as the {\em gate property}.  
Moreover, for any node $v$ and any ancestor $u$ of $v$, since $P_v \subseteq P_u \subseteq P$, the gate property also implies that $d_v$ separates $P_v$ from $P_u \setminus P_v$.




Section~\ref{sec:decompdef} presents the detailed construction of the decomposition $\Psi$. 
The following Theorem~\ref{theo:treequeries} is our main result for answering visibility queries using $\Psi$. 
Our results in Sections~\ref{sec:Optimal_query_time} and \ref{sec:A_Quadratic_Space_Structure} will build upon this theorem.  

\begin{theorem}\label{theo:treequeries}
A data structure of size $O(n^2 \log^4 n)$ can be constructed in $O(n^2 \log^5 n)$ time for the decomposition $\Psi$ such that, for any query point $q$,  
if $q$ lies in the core $\core(u)$ of a node $u \in \Psi$, then $\vis(P_u, q)$ can be computed in $O(\log n)$ time;  
and if $q$ lies in $P_v$ for a child $v$ of $u$, then $\vis(P_u \setminus P_v, q)$ can be computed in $O(\log n)$ time.
\end{theorem}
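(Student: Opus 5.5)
The plan is to reduce both query types to Lemma~\ref{lem:geotriangle}-style structures on the geodesic triangles of the core, applied inside the subpolygon $P_u$ rather than in all of $P$. This gives a per-node budget of $O(|P_u|^2\,\mathrm{polylog})$. Since $|P_v|\le |P_u|/2$ for children and the height of $\Psi$ is $O(\log n)$, summing $|P_u|^2$ over all nodes gives $O(n^2)$, because $\sum_u |P_u|^2$ is geometric along paths. The polylogarithmic factors will come from the internal structure of the core, which may consist of many geodesic triangles. So the real task is to organize a single core so that a query costs only $O(\log n)$ time while the space over the core is $O(|P_u|^2\log^4 n)$.

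\textbf{Step 1 (structure inside a core).} I expect $\core(u)$ to be a union of geodesic triangles glued along diagonals, with the dual graph a tree; each remaining diagonal of $\partial\core(u)$ other than $d_u$ is the gate $d_v$ of some child $v$. Each geodesic triangle $\triangle\subseteq\core(u)$, viewed inside the polygon $P_u$, gets the structure of Lemma~\ref{lem:geotriangle}. This costs $O(|P_u|^2\log n)$ per triangle, which is too much if the core has many triangles. To control this, I would balance the triangles of the core hierarchically, in the spirit of a centroid decomposition of the dual tree. A triangle near the top of this hierarchy keeps full size-$|P_u|^2$ arrangements. A triangle deeper down only needs arrangements for the pockets it actually sees, and those shrink geometrically. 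The hope is that the extra $\log^4 n$ absorbs these nested hierarchies: $\log n$ for the depth of $\Psi$ levels, $\log n$ for the hierarchy inside a core, and further logs from overlaying arrangements $\calA_\pi$ along concave sides.

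\textbf{Step 2 (query $q\in\core(u)$).} First locate the geodesic triangle $\triangle\ni q$ with a point location structure on all cores, in $O(\log n)$ time. The first part of Lemma~\ref{lem:geotriangle} applied to $P_u$ then returns $\vis(P_u,q)$. The three concave sides partition the angular range around $q$, so the pockets of $\triangle$ in $P_u$ cover everything. Their exterior visibility structures (the arrangements $\calA_\pi$ with persistent trees, clipped by tangent cones) give the answer in $O(\log n)$ time.

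\textbf{Step 3 (query $q\in P_v$).} Here $d_v$ is an edge of some triangle $\triangle$ in $\core(u)$, and $P_v$ is the pocket $P_{d_v}(\triangle)$ within $P_u$. Hence $\vis(P_u\setminus P_v,q)$ is exactly the second subproblem of Lemma~\ref{lem:geotriangle}, run on $P_u$. This needs the GH structure on $P_u$ to get $s=\vis(d_v,q)$, followed by clipping the precomputed trees for the other two sides. To locate $v$ quickly, we are given $v$ (or find it via the gate property) and store a pointer from $d_v$ to its triangle.

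\textbf{Main obstacle.} The hard part is the space bound when a core contains $\omega(\mathrm{polylog})$ triangles. Naively each triangle needs $\Theta(|P_u|^2)$ space, because its pockets in $P_u$ can contain nearly all of $P_u$. I would first try to share the arrangements $\calA_\pi$ across triangles. The idea is to store, for each diagonal $d$ inside the core, the quadratic-size exterior structure of Section~\ref{sec:diasep} only for the smaller side. A query would then combine $O(1)$ such pieces chosen via the within-core balanced hierarchy, with the total charged as $\sum_d |P_d|^2$ per hierarchy level. Making this combination cost only $O(\log n)$ time, and not $O(\log^2 n)$, is the delicate point. It relies on the concavity of sides, which lets a single clipping cone per side suffice.
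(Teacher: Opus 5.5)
\textbf{There is a genuine gap: the space bound, which is the real content of the theorem.} Your Steps~2 and~3 are fine once the structures exist; they are just the two parts of Lemma~\ref{lem:geotriangle} run inside $P_u$. But for the space bound you give no argument. The centroid-style hierarchy, the expectation that what a deep triangle sees ``shrinks geometrically,'' and the smaller-side sharing scheme are all left as hopes, and you leave the $O(\log n)$ combination step open yourself. The smaller-side scheme also cannot work as stated, because a triangle deep in the hierarchy typically has to look across a diagonal into its \emph{larger} side.

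The missing idea is that no sharing and no new combination step are needed. Keep one independent Lemma~\ref{lem:geotriangle} structure per triangle $\triangle$ of the core. Its cost is not $|P_u|^2$ but $O(\sum_{d\in\partial\triangle} n_d^2)$, where $n_d$ counts only the elements of $P_d(\triangle)$ visible through $d$; all other elements have empty cones and never enter $\mathcal{A}_d$. Then prove, for every open subpocket $Q$ with tree $T_Q$, that $\sum_{\nu\in T_Q}\sum_{d\in\partial\triangle_\nu} n_d=O(M\log^2 n)$, where $M=|\overline{Q}|$. Squaring gives $O(M^2\log^4 n)$. Since $\sum_Q M=O(n)$, this is $O(n^2\log^4 n)$ per node of $\Psi$, and the geometric decrease along $\Psi$ adds no further factor. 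So your accounting of the $\log^4 n$ is also off: the depth of $\Psi$ and the overlays $\mathcal{A}_\pi$ contribute nothing, and all four logarithms come from squaring $\log^2 n$.

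\textbf{Proving that bound needs the specific geometry of $\Psi$, not a generic balanced hierarchy.}
(i) No segment in $P$ crosses two edges of the geodesic $\pi^*$. Hence a point of $Q$ sees only $Q$ and its neighboring subpockets (Observation~\ref{obser:vissubpocket}), which localizes everything to $\overline{Q}$.
(ii) Non-bridge diagonals on non-base sides cut off interior-disjoint pockets whose vertices lie in $B_1(Q)$. For the bridge $\overline{ab}$ on a non-base side, shooting a ray from $a$ away from $b$ shows that only elements of $B_1(Q_{\overline{ab}}(\triangle_\nu))$ are visible from $\triangle_\nu$ through it. So each level of $T_Q$ contributes $O(m)$, and non-base sides total $O(m\log n)$.
(iii) Non-bridge edges of base sides face disjoint neighboring $B_2$-bounded subpockets, giving $O(M)$ in total.
(iv) The hard case is the left and right bridges of base sides. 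Here the set $Z(\nu)$ that $\triangle_\nu$ sees is controlled only by an ancestor's region $A(\nu)=\overline{Q}(\nu')$, which does not shrink with depth, so your geometric-shrinking heuristic gives nothing. The paper instead proves Lemma~\ref{lem:rightsubtree}: the bridge on the left side of $\triangle_\nu$ and the bridge on the right side of the relevant ancestor $\triangle_{\nu'}$ lie on a common geodesic path, so no segment from a triangle in $\nu$'s right subtree crosses both. Combined with the fact that a node whose base side has a left bridge has no left child, this shows each $\overline{Q}(\nu')$ is charged by at most one node per level of $T_Q$. That is Lemma~\ref{lem:Abound}, and it is the source of the second logarithm.
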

\begin{proof}
We only sketch the proof here; full details appear in Section~\ref{sec:theo:treequeries}.  

Consider the root $v^*$ of $\Psi$.  
For each geodesic triangle $\triangle$ in the core $\core(v^*)$, we build the data structure of Lemma~\ref{lem:geotriangle} for $\triangle$ in $P_{v^*}$ (which equals $P$).  
We can show that this construction requires $O(n^2 \log^4 n)$ space and $O(n^2 \log^5 n)$ preprocessing time.  
Afterward, for any query point $q \in P$, if $q \in \core(v^*)$, we can compute $\vis(P_{v^*}, q)$ in $O(\log n)$ time;  
and if $q \in P_v$ for a child $v$ of $v^*$, we can compute $\vis(P_{v^*} \setminus P_v, q)$ in $O(\log n)$ time.  

In general, for each node $v \in \Psi$, and for each geodesic triangle $\triangle$ in $\core(v)$, we build the data structure of Lemma~\ref{lem:geotriangle} for $\triangle$ within $P_v$.  
This requires $O(n_v^2 \log^4 n_v)$ space and $O(n_v^2 \log^5 n_v)$ preprocessing time, where $n_v = |P_v|$.  
Since the sizes of the subpolygons $P_v$ decrease geometrically along every root-to-leaf path, and since the subpolygons $P_v$ of nodes at the same level of $\Psi$ are interior disjoint,  
the total space over all nodes of $\Psi$ is $O(n^2 \log^4 n)$, and the total preprocessing time is $O(n^2 \log^5 n)$.  
After this preprocessing, the queries can be answered as stated in the theorem.  
\end{proof}

As a quick demonstration on why $\Psi$ is useful, we show that $\vis(q)$ can be computed in $O(\log^2 n)$ time. 
Given a query point $q$, we first use a point location query to find the node $v$ whose core $\core(v)$ contains $q$. Then, we compute $\vis(P_v,q)$ in $O(\log n)$ time by Theorem~\ref{theo:treequeries}. Next, we compute $\vis(P_u\setminus P_v,q)$ in $O(\log n)$ time, where $u$ is the parent of $v$. Note that $\vis(P_v,q)\cup \vis(P_u\setminus P_v,q)$ is $\vis(P_u,q)$. Since $P_v$ is separated from $P_u\setminus P_v$ by the diagonal $d_v$, we can obtain $\vis(P_u,q)$ by merging $\vis(P_v,q)$ and $\vis(P_u\setminus P_v,q)$ in $O(\log n)$ time by Lemma~\ref{lem:merge}. If we continue this until the root, we will obtain $\vis(q)$. The total time is $O(\log^2 n)$ as the height of $\Psi$ is $O(\log n)$. 

\subparagraph{Remark.}
As in~\cite{ref:ChazelleRa94}, our decomposition partitions $P$ into 
geodesic triangles, although it is more intricate to describe. 
A natural question is whether the geodesic triangulation of~\cite{ref:ChazelleRa94} 
can be used instead. We initially pursued this approach, but several 
difficulties arose. In particular, one might hope that visibility always 
propagates through parent geodesic triangles; however, this is not the case. 
Roughly speaking, visibility proceeds through ``kites'' (following the 
terminology of~\cite{ref:ChazelleRa94}), rather than through the triangles 
themselves. In contrast, in our decomposition, as illustrated by the example 
above, visibility is always routed through the gates.
Moreover, it seems unlikely that one can obtain results comparable to 
Theorem~\ref{theo:treequeries} using the standard geodesic triangulation in \cite{ref:ChazelleRa94}. 
Indeed, such an approach would require preprocessing each geodesic triangle 
using Lemma~\ref{lem:geotriangle} (as we did in the above sketched proof of Theorem~\ref{theo:treequeries}). 
Since there are $O(n)$ triangles and each 
requires quadratic preprocessing time, the total preprocessing time would 
become cubic. To overcome these obstacles, a more careful redesign of the 
triangulation process is necessary, which leads to our decomposition (e.g., we can show that the total preprocessing time for all our geodesic triangles is only near-quadratic as stated in Theorem~\ref{theo:treequeries}). 
From this perspective, our approach may be viewed as a variation of the 
geodesic triangulation in~\cite{ref:ChazelleRa94} that is better suited 
for visibility queries.


\subsection{The construction of the decomposition $\boldsymbol{\Psi}$}
\label{sec:decompdef}
We now present the details of our decomposition and the decomposition tree $\Psi$. 

During the discussion, instead of directly arguing the gate property, we will establish two other properties that together guarantee the gate property as shown 
in the following lemma. 

\begin{lemma}\label{lem:gate}
Suppose that the following two properties hold for each non-root node $v\in \Psi$: (1) $d_v$ separates $P_v$ from $P_u\setminus P_v$, where $u$ is the parent of $v$; (2) the gate $d_{u}$ of $u$ is not in $P_v$. Then, the gate property holds for $v$, i.e., $d_v$ separates $P_v$ from $P\setminus P_v$.    
\end{lemma}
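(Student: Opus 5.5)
I will prove the lemma by induction on the depth of $v$ in $\Psi$. The base case is a child $v$ of the root $v^*$. Here $P_u=P_{v^*}=P$, so property (1) directly says that $d_v$ separates $P_v$ from $P\setminus P_v$, and there is nothing more to show.

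For the inductive step, let $v$ be a node whose parent $u$ is not the root, and assume the gate property already holds for $u$. That is, $d_u$ separates $P_u$ from $P\setminus P_u$. I must show that every path in $P$ from a point of $P_v$ to a point of $P\setminus P_v$ crosses $d_v$. Equivalently, in the language of simple polygons: $d_v$ is a diagonal that cuts $P$ into two pieces, and one of them is exactly $P_v$.

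Take any such path $\gamma$ starting in $P_v$. There are two cases.
\begin{itemize}
\item If $\gamma$ ends in $P_u\setminus P_v$ without leaving $P_u$, then property (1) forces $\gamma$ to cross $d_v$.
\item Otherwise $\gamma$ leaves $P_u$. By the inductive hypothesis it must cross $d_u$, so it reaches a point of $d_u$.
\end{itemize}

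The key point in the second case is where $d_u$ sits relative to $P_v$. By property (2), $d_u$ does not lie in $P_v$. Since $d_u$ is an edge of $P_u$, it lies in $P_u\setminus P_v$, up to shared endpoints.

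To make the endpoint issue precise, I will argue via boundary structure rather than paths. The boundary of $P_u$ consists of $d_u$ together with edges of $P$. Property (1) says $P_v$ is a subpolygon of $P_u$ cut off by the diagonal $d_v$ of $P_u$, so $\partial P_v=d_v\cup\sigma$, where $\sigma$ is a subchain of $\partial P_u$. Property (2) says $d_u\not\subseteq\sigma$. Since $\sigma$ is a contiguous chain of $\partial P_u$ whose edges are all edges of $\partial P_u$, every edge of $\sigma$ is an edge of $P$. Hence $\partial P_v$ consists of $d_v$ plus edges of $P$ only. A simple subpolygon of $P$ whose boundary is one diagonal plus polygon edges is exactly one of the two components of $P$ cut along that diagonal. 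This gives the gate property for $v$.

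The main obstacle is making the passage from property (1) to ``$\partial P_v=d_v\cup{}$a subchain of $\partial P_u$'' rigorous. I will use the facts stated earlier: $d_v$ is a diagonal of $P$ and all other edges of $P_v$ lie on $\partial P_u$. The only other care needed is the degenerate case where $d_u$ and $d_v$ share an endpoint. There property (2) should be read as $d_u$ not being an edge of $P_v$, which is exactly what the boundary argument uses.
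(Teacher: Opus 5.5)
Your proof is correct, but the argument you actually complete is different from the paper's. The paper uses paths only. For $p\in P_v$ and $p'\in P\setminus P_v$: if $p'\in P_u$, property (1) applies directly. Otherwise the path from $p'$ to $p$ must cross $d_u$ (by the inductive hypothesis), and since $d_u\not\subseteq P_v$ it must then cross $d_v$. This is exactly the path sketch you begin and then leave unfinished. To close it, you would take the last crossing with $d_u$ and apply (1) to the remaining subpath, which stays inside $P_u$.

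Your boundary-chain argument instead turns the separation statements into combinatorics:
\begin{itemize}
\item By the inductive hypothesis, $\partial P_u$ is $d_u$ plus edges of $P$.
\item By (1), $\partial P_v=d_v\cup\sigma$ with $\sigma$ a subchain of $\partial P_u$.
\item By (2), $d_u\notin\sigma$, so $\partial P_v\setminus d_v$ consists only of edges of $P$.
\end{itemize}
This forces $P_v$ to be one of the two pieces into which $d_v$ cuts $P$.

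Your route buys rigor. It pins down how (2) is used (as ``$d_u$ is not an edge of $P_v$''). It also avoids reasoning about paths that cross $d_u$ several times or pass through an endpoint shared with $d_v$, which the paper glosses over. The price is the step you rightly flag: you must check that $d_v$ is a diagonal of $P_u$. This holds because the endpoints of $d_v$ are vertices of $P$ lying in $P_u$, hence vertices of $P_u$ by the inductive hypothesis. Moreover, $d_v$ is not an edge of $P$, and $d_v\neq d_u$ because $d_v\subseteq P_v$ while $d_u$ is not in $P_v$. The paper's version is shorter and works directly with the path-based notion of separation, so it needs no such verification.
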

\begin{proof}
We prove the lemma by induction. 
Initially, if $v$ is a child of the root $v^*$, then since $d_v$ separates $P_v$ from $P_u\setminus P_v$, $u=v^*$, and $P_{v^*}=P$, 
it is vacuously true that $d_v$ separates $P_v$ from $P\setminus P_v$. 

We now inductively assume that $d_u$ separates $P_u$ from $P\setminus P_u$. We next argue that $d_v$ separates $P_v$ from $P\setminus P_v$. Indeed, consider two points $p$ and $p'$ with $p\in P_v$ and $p'\in P\setminus P_v$. It suffices to show that the geodesic path $\pi(p',p)$ must intersect $d_v$. If $p'$ is in $P_u$, then since $d_v$ separates $P_v$ from $P_u\setminus P_v$, then $\pi(p',p)$ must intersect $d_v$. If $p'\not\in P_u$, then $p'\in P\setminus P_u$.
In this case, since $p\in P_v\subseteq P_u$,  $d_u\not\subseteq P_v$, and $d_u$ separates $P_u$ from $P\setminus P_u$, if we traverse from $p'$ to $p$ along $\pi(p',p)$, we must first cross $d_u$ and then $d_v$. The lemma thus follows. 
\end{proof}

We next describe our decomposition. 
We will focus on explaining the children $v$ of $v^*$. 
Specifically, we will define the core $\core(v^*)$, the subpolygons $P_v$, and the gate $d_v$ for every child $v$ of $v^*$. In particular, we will argue that the following {\em three key properties} hold for each $v$: (1) $P_v$ is separated from $P\setminus P_v$ by $d_v$; (2) $d_{v^*}$ is not in $P_v$; 
(3) $P_v$ has no more than $n/2$ edges other than the gate $d_v$. Note that the first two properties are those required in Lemma~\ref{lem:gate}. 
To define the whole tree $\Psi$, we simply apply the same decomposition recursively on $P_v$ for each child $v$ of $v^*$.

Let $p^*=p_{n/2}$, the middle point of $\partial P$ (recall that vertices of $P$ are indexed from $p_1$ to $p_n$ clockwise around $\partial P$). We consider the geodesic triangle $\triangle^*$ formed by the three geodesic paths $\pi(p^*,p_1)$, $\pi(p^*,p_n)$, and $\pi(p_1,p_n)$. Note that $\pi(p_1,p_n)$ is the edge $d_{v^*}=\overline{p_1p_n}$. We first consider the general case where $\triangle^*$ is not empty, and we will discuss later that the empty case can be treated similarly. We add $\triangle^*$ to the core $\core(v^*)$. 

Note that $d_{v^*}$ is one side of $\triangle^*$ and thus $p_1$ and $p_n$ are two apexes of $\triangle^*$. Let $b^*$ be the third apex (see Figure~\ref{fig:newdecomp}). Note that $\partial \triangle^* \setminus d_{v^*}$ is the union of the other two sides of $\triangle^*$. 

\begin{figure}[t]
\centering
\includegraphics[height=1.5in]{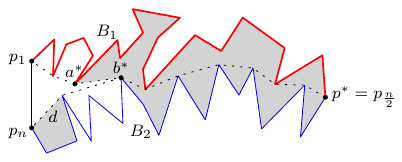}
\caption{The shaded region bounded by $d$ is a pocket $P_d(\triangle^*)$. $\overline{a^*b^*}$ is the bride $d^*$ on $\partial\triangle^*\setminus d_{v^*}$. The bigger shaded region bounded by $d^*$ is $P_{d^*}(\triangle^*)$.}
\label{fig:newdecomp}
\end{figure}

Let $B_1$ denote the boundary of $P$ from $p_1$ clockwise to $p^*$ and $B_2$ the boundary of $P$ from $p^*$ clockwise to $p_n$. We let both $B_1$ and $B_2$ contain $p^*$.  Hence, $\partial P=B_1\cup B_2\cup d_{v^*}$. We say that 
a diagonal is a {\em bridge} if one vertex of it is in $B_1\setminus\{p^*\}$ while the other is in $B_2\setminus\{p^*\}$. By definition, if $b^*=p^*$, then $\partial \triangle^* \setminus d_{v^*}$ has no bridge; otherwise, $\partial \triangle^* \setminus d_{v^*}$ has exactly one bridge, denoted by $d^*$, which must have $b^*$ as a vertex (see Figure~\ref{fig:newdecomp}, where $d^*=\overline{a^*b^*}$). 

For each diagonal $d$ of $\partial \triangle^* \setminus d_{v^*}$, if $d$ is not a bridge, then it connects two vertices either both on $B_1$ or both on $B_2$. Recall we defined in Section~\ref{sec:twosubproblems} that the notation $P_d(\triangle^*)$ refer to the pocket of $P$ separated by $d$ from  $\triangle^*$ (see Figure~\ref{fig:newdecomp}). 
Observe that the edges of $P_d(\triangle^*)$ other than $d$ are all in $B_1$ or all in $B_2$. Hence, $P_d(\triangle^*)$ has no more than $n/2$ edges other than $d$. We create a child $v$ for $v^*$ in $\Psi$ with $P_v=P_d(\triangle^*)$ and the gate $d_v=d$. In this way, $d_v$ separates $P_v$ from $P\setminus P_v$ and $d_{v^*}=\overline{p_1p_n}$ is not an edge of $P_v$. Hence, all three key properties hold for $v$. 

If $d$ is a bridge, i.e., $d=d^*$, then $b^*\neq p^*$. In this case, $P_{d^*}(\triangle^*)$ contains vertices from both $B_1$ and $B_2$ (see Figure~\ref{fig:newdecomp}), and therefore, $P_{d^*}(\triangle^*)$ may have more than $n/2$ edges other than $d^*$. We therefore cannot simply create a child for $v^*$ as above and need to resort to other approaches as follows. 


\subparagraph{Decomposing $\boldsymbol{P_{d^*}(\triangle^*)}$ into subpockets.}
Let $a^*$ be the vertex of $d^*$ other than $b^*$ (see Figure~\ref{fig:newdecomp}). Then, $\pi(p^*,a^*)$ is a subpath of either $\pi(p^*,p_1)$ or $\pi(p^*,p_n)$. 
To simplify the notation, let $\calQ=P_{d^*}(\triangle^*)$ and $\pi^*=\pi(a^*,p^*)$.

The geodesic path $\pi^*$ decomposes the pocket $\calQ$ into {\em subpockets} each of which is bounded by a subpath of $\pi^*$ and a sequence of edges of $B_1$ or $B_2$. If a subpocket is bounded by a subpath of $\pi^*$ and a sequence of edges of $B_1$ (resp., $B_2$), we call it a {\em  $B_1$-bounded subpocket} (resp., {\em $B_2$-bounded subpocket}); see Figure~\ref{fig:subpockets}. For each subpocket $Q$, the subpath of $\pi^*$ on the boundary of $Q$ is called the {\em base} of $Q$. 
We say that $Q$ is {\em closed} if its base is a single edge; otherwise, it is {\em open} (see Figure~\ref{fig:subpockets}). 

For a closed subpocket $Q$, since its edges other than its base edge are either all from $B_1$ or all from $B_2$, the number of its edges other than its base is at most $n/2$, and $Q$ is separated from $P\setminus Q$ by its base edge. Hence, 
we create a child $v$ for $v^*$ in $\Psi$ with $P_v=Q$ and $d_v$ as the base edge of $Q$. Clearly, $d_{v^*}$ is not in $P_v$. Hence, all three key properties hold for $v$.

\begin{figure}[t]
\centering
\includegraphics[height=1.5in]{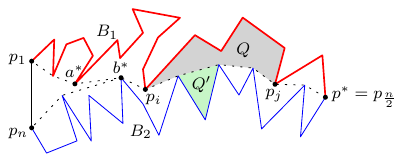}
\caption{The gray region $Q$ is an open $B_1$-bounded subpocket while the green region $Q'$ is a closed $B_2$-bounded subpocket.}
\label{fig:subpockets}
\end{figure}

If $Q$ is open, then we will need to further decompose it. The situation in this case becomes significantly more complicated and our main effort is to handle this case. Without loss of generality, we assume that $Q$ is $B_1$-bounded. 

\subparagraph{Decomposing an open subpocket $\boldsymbol{Q}$.}
Define $Q^*=P\setminus \calQ$. We consider $Q^*$ a {\em special subpocket}. Following our above definition, $Q^*$ is a closed subpocket since it has only one edge of $\pi^*$. To differentiate, when we say ``subpockets of $P$'', we refer to all subpockets of $\calQ$ and also $Q^*$; when we say ``subpockets of $\calQ$'', we only refer to the subpockets inside $\calQ$. Unless otherwise stated, a subpocket of $P$ refers to any subpocket of $\calQ\cup \{Q^*\}$. 

To simplify the discussion, we assume that each edge $e$ of $\pi^*$ is a diagonal. Indeed, this is the most general case because if $e$ is an edge of $P$ then we can assume there is an infinitely small subpocket separated by $e$. With this assumption, $e$ is on the boundaries of two subpockets of $P$. In particular, if $p^*$ is a vertex of $e$, then one subpocket bounded by $e$ is closed while the other is open. In this case, we consider $e$ a bridge in the open subpocket but not a bridge in the closed one. In this way, we have the following observation. 
\begin{observation}
Each open subpocket $Q$ of $\calQ$ has exactly two bridges, which are the first and last edges of its base (see Figure~\ref{fig:subpockets}).    
\end{observation}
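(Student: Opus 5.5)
The plan is to read off the bridges from the structure of the base. Let $Q$ be an open $B_1$-bounded subpocket of $\calQ$ with base $\beta$, a subpath of $\pi^*=\pi(a^*,p^*)$ with at least two edges. Two facts about the base are needed first.

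First, $\partial Q$ consists of $\beta$ together with a contiguous chain of edges of $B_1$. This chain joins the two endpoints of $\beta$, so both endpoints of $\beta$ lie on $B_1$.

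Second, every interior vertex $x$ of $\beta$ is a reflex vertex of $P$ at which $\pi^*$ bends. Since $\pi^*$ is a geodesic, $\pi^*$ touches $\partial P$ only at its vertices. The subpockets on the two sides of $\pi^*$ near $x$ meet $\partial P$ at $x$. Because $x$ is interior to the base of $Q$, the boundary of $Q$ does not leave $\pi^*$ at $x$. The opposite side of $\pi^*$ at $x$ therefore belongs to a subpocket bounded by $B_2$ edges incident to $x$, which means $x\in B_2$.

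Now I classify the edges of $\beta$, whose vertices in order are $x_0,x_1,\dots,x_m$ with $m\ge 2$.
\begin{itemize}
\item An interior edge $\overline{x_ix_{i+1}}$ with $1\le i<i+1\le m-1$ has both endpoints in $B_2$, so it is not a bridge.
\item The first edge $\overline{x_0x_1}$ joins $x_0\in B_1$ to $x_1\in B_2$, so it is a bridge.
\item Symmetrically, the last edge $\overline{x_{m-1}x_m}$ is a bridge.
\end{itemize}
These two edges are distinct because $m\ge 2$.

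The subtle point is the role of $p^*$, which lies in both $B_1$ and $B_2$ and is excluded from bridge endpoints by definition. If $x_0$ or $x_m$ equals $p^*$, the edge incident to it would by definition fail to be a bridge. The convention stated just before the observation handles this: such an edge is declared a bridge of the open subpocket and not of the closed one. A similar check is needed at $a^*$ if it is an endpoint of $\beta$, since $a^*\in B_1\setminus\{p^*\}$ or $a^*\in B_2\setminus\{p^*\}$.

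I also need to confirm that the first and last edges are genuine diagonals. This is justified by the earlier assumption that every edge of $\pi^*$ is a diagonal, with infinitesimal subpockets inserted otherwise.

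The main obstacle is the alternation claim: that each interior base vertex lies on $B_2$, and not on $B_1$ or equal to $p^*$. I would argue this by a local picture at $x$. The geodesic $\pi^*$ is locally convex toward the side where $\partial P$ touches it. A $B_1$ boundary chain of $Q$ returning to $x$ would split the base there, which contradicts $\beta$ being a single base.
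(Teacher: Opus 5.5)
Your proposal is correct and follows essentially the same route as the paper. The paper also shows that the endpoints of the base lie on $B_1$ and that every interior base vertex lies on $B_2$; it phrases your local pinching argument as ``the base is a concave chain, being a geodesic subpath on the boundary of the simple polygon $Q$.'' It then concludes, as you do, that only the first and last base edges are bridges.

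Two small points should be made explicit. First, your step ``the opposite side of $\pi^*$ at $x$ is $B_2$-bounded'' relies, as the paper's does, on $\pi^*$ having all of $B_1\cap\partial\calQ$ on one side and all of $B_2\cap\partial\calQ$ on the other. Your local argument alone only rules out $x$ lying on $Q$'s own $B_1$ chain. Second, to get \emph{exactly} two bridges you should state, as the paper does, that the non-base edges of $Q$ join two vertices of $B_1$ (indeed they are edges of $P$), so none of them is a bridge.
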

\begin{proof}
To see this, since the base of $Q$ is a geodesic path (because it is a subpath of $\pi^*$) and $Q$ is a simple polygon, the base of $Q$ must be a concave chain, meaning that if we traverse on it, either we always make right turns or we always make left turns. Hence, the two end vertices of the base of $Q$ must be in $B_1$ while all other vertices are in $B_2$ because $Q$ is $B_1$-bounded (see Figure~\ref{fig:subpockets}). This implies that among all edges on the base of $Q$, only the first and last edges are bridges. For other edges of $Q$ not on its base, by definition, each of them connects two vertices of $B_1$, meaning that it is not a bridge. 
\end{proof}

We say that two subpockets of $P$ are {\em neighboring} if they share a common edge of $\pi^*$. By definition, for two neighboring subpockets of $\calQ$, one of them is $B_1$-bounded while the other is $B_2$-bounded. Also, a closed subpocket has exactly one neighboring subpocket while an open one has multiple. For any open subpocket $Q$, since only the first and last edges of its base are bridges, $Q$ has at most two neighboring open subpockets. The following simple observation implies that to compute $\vis(q)$ for a point $q\in Q$,  it suffices to consider $Q$ and its neighboring subpockets. 

\begin{observation}\label{obser:vissubpocket}
For any point $q\in Q$, if a point $p\in P$ is visible to $q$, then $p$ is either in $Q$ or in a neighboring subpocket of $Q$.
\end{observation}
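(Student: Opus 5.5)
The plan is to show that every subpocket other than $Q$ and its neighbors is separated from $Q$ by a single edge $e$ of $\pi^*$ lying on the base of some neighbor $Q'$ of $Q$. We then argue that no segment starting in $Q$ can cross $\pi^*$ twice in a way that would let it reach such a subpocket.

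\textbf{Setting up the intersection.} Let $p\in P$ be visible to $q\in Q$, so $\overline{qp}\subseteq P$, and suppose $p\notin Q$. A segment inside a simple polygon is the geodesic between its endpoints. Hence $\overline{qp}$ is a shortest path, and so is $\pi^*=\pi(a^*,p^*)$. Two geodesic paths in a simple polygon intersect in a connected set: either a common subpath or a single point. It follows that $\overline{qp}\cap\pi^*$ is either empty, a single point, or a segment contained in one edge of $\pi^*$.

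\textbf{Exiting $Q$ through one base edge.} The subpockets of $P$ (those of $\calQ$ together with $Q^*$) partition $P$, and their mutual boundaries are exactly the edges of $\pi^*$, plus the bridge $d^*$ bounding $Q^*$. We may treat $d^*$ as part of the base chain by extending $\pi^*$ to $\pi(b^*,p^*)$, which is still a geodesic since $\overline{b^*a^*}\cup\pi(a^*,p^*)$ is a subpath of a side of $\triangle^*$.

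If $p\notin Q$, the segment must leave $Q$, and it can do so only through the interior of an edge of the base of $Q$, or through a vertex of that base. It exits at one point, or along a portion of a single edge, and never meets the extended $\pi^*$ again. Therefore, after leaving $Q$ it stays in a single subpocket adjacent across that base edge, which is a neighboring subpocket of $Q$.

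\textbf{The main obstacle: exiting through a base vertex.} If the segment passes through a vertex $w$ of the base, it could a priori pass from $Q$ into a subpocket touching $w$ only at a point. Such a subpocket is not a neighbor, since neighbors share an edge. I would handle this case by concavity. The base of $Q$ is concave, with interior vertices in $B_2$ and end vertices in $B_1$. At $w$ the two incident $\pi^*$-edges make a reflex turn toward $Q$'s side. A segment from $Q$ through $w$ therefore enters the angular sector on the other side of $\pi^*$ at $w$, and that sector belongs to the subpocket across one of the two $\pi^*$-edges incident to $w$.

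The subpockets around $w$ are just the two sides of $\pi^*$ near $w$, because $\pi^*$ is a simple path and is locally a chain through $w$. The far side of $\pi^*$ at $w$ is covered by the subpocket(s) sharing those incident edges, and these all neighbor $Q$, since both edges lie on $Q$'s base when $w$ is interior to it. If $w$ is an endpoint of $Q$'s base, the far-side subpocket shares the bridge edge incident to $w$, which is again on $Q$'s base.

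Combining this with the single-crossing property shows that $p$ lies in $Q$ or in a neighboring subpocket. The care needed is in the degenerate case where the segment runs along an edge of $\pi^*$; then $p$ lies on the shared boundary, which belongs to a neighbor as well.
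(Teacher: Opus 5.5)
Your proof is correct and is essentially the paper's argument: $\overline{qp}$ is itself a geodesic, so its intersection with the geodesic $\pi^*$ is connected and it cannot cross $\pi^*$ twice; your local analysis at base vertices fills in a degenerate case the paper passes over. One harmless slip: $d^*=\overline{a^*b^*}$ is already the first edge of $\pi^*=\pi(a^*,p^*)$, since $\pi(p^*,a^*)=\pi(p^*,b^*)\cup\overline{b^*a^*}$, so no extension is needed (and $\pi(b^*,p^*)$ is a subpath of $\pi^*$, not an extension of it).
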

\begin{proof}
Assume to the contrary $p$ is in other subpockets. Then, $\overline{pq}$ must cross two edges of $\pi^*$. Since $\pi^*$ is a geodesic path in $P$, no line segment in $P$ can cross two edges of $\pi^*$. We thus obtain contradiction. 
\end{proof}

With the above concepts, we next discuss how to further decompose an open subpocket $Q$ (again assume that $Q$ is $B_1$-bounded). 

Let $\pi(Q)$ denote the base of $Q$. Let $p_i$ be the end vertex of $\pi(Q)$ that is closer to $a^*$ along $\pi^*=\pi(a^*,p^*)$ and $p_j$ the other end vertex (see Figure~\ref{fig:subpockets}). By definition, $i<j$. Note that if we traverse $\pi(Q)$ from $p_i$ to $p_j$, we always make right turns. For simplicity of discussion, we consider the order of vertices of $\pi(Q)$ from $p_i$ to $p_j$ as the left-to-right order so that we can use ``left'' and ``right'' to refer to the directions of $\pi(Q)$. Note that $\pi(Q)$ could be spiral. Let $B_1(Q)$ denote the portion of $B_1$ in $Q$. Hence, $p_i$ is the first vertex and $p_j$ is the last vertex of $B_1(Q)$ if we order vertices of $B_1(Q)$ along $B_1$ clockwise (and we also consider it the left-to-right order of $B_1(Q)$). 
Also, the union of $B_1(Q)$ and $\pi(Q)$ forms the boundary $\partial Q$ of $Q$.

We pick the middle vertex $p_{k}$ of $B_1(Q)$ with $k=(i+j)/2$, and connect $p_k$ to $p_i$ and $p_j$ by geodesic paths, which yields a geodesic triangle, denoted by $\triangle_k$. Since the base $\pi(Q)$ is a geodesic path and $p_k$ is not on $\pi$, $\triangle_k$ cannot be empty. One side of $\triangle_k$ is on $\pi(Q)$ and we call it the {\em base side}; the other two sides are called the {\em non-base sides}. Among the two non-base sides, the one that connects the leftmost vertex of the base side is called the {\em left side} while the other is called the {\em right side} (recall that we have a left-to-right order of $\pi(Q)$; since the base side is a subpath of $\pi(Q)$, we can talk about the ``leftmost'' and ``rightmost'' vertices). We add $\triangle_k$ to the core $\core(v^*)$ of the root $v^*$ of $\Psi$.

\begin{figure}[t]
\centering
\includegraphics[height=2.8in]{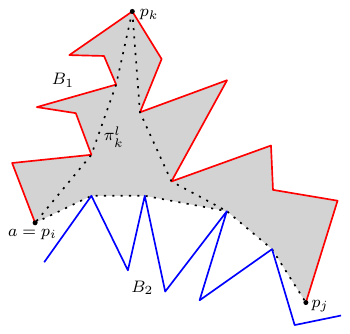}
\caption{Illustrating the case $a=p_j$. The gray region is $Q$. The left dotted polygonal chain is $\pi_k^l$. The red solid polygonal chain is in $B_1$ and the blue solid polygonal chain is in $B_2$.}
\label{fig:decomsubpocket}
\end{figure}

\begin{figure}[t]
\centering
\includegraphics[height=2.3in]{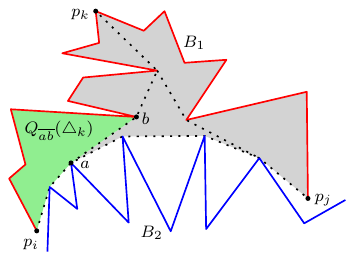}
\caption{Illustrating the case $a\neq p_j$. The green region is $Q_{\overline{ab}}(\triangle_k)$. The union of the green region and the gray region is $Q$. The red solid polygonal chain is in $B_1$ and the blue solid polygonal chain is in $B_2$.}
\label{fig:decomsubpocket20}
\end{figure}

Consider the left side of $\triangle_k$, denoted by $\pi_k^l$. Let $a$ be the leftmost vertex of the base side of $\triangle_k$. 
Depending on whether $a=p_i$, there are two cases. 

\begin{itemize}
    \item If $a=p_i$, then no edge of $\pi_k^l$ is a bridge and every edge of $\pi_k^l$ connects two vertices of $B_1$; see Figure~\ref{fig:decomsubpocket}. In this case, for each diagonal $d$ of $\pi_k^l$, all vertices of $Q_d(\triangle_k)$ are in $B_1(Q)$, where $Q_d(\triangle_k)$ is the subpolygon of $Q$ divided by $d$ that does not contain $\triangle_k$. Hence, $Q_d(\triangle_k)$ has no more than $n/2$ edges other than $d$. We create a child $v$ for $v^*$ in $\Psi$ with $P_v=Q_d(\triangle_k)$ and gate $d_v=d$. Since $P_v$ does not contain $d_{v^*}$, all three key properties hold for $v$. In this case, the processing of the left side $\pi_k^l$ is done. 
    
\item 
If $a\neq p_i$, then $a$ is in $B_2$ and the edge of $\pi_k^l$ incident to $a$ must be a bridge (see Figure~\ref{fig:decomsubpocket20}); let $b$ denote the other vertex of the edge. Note that none of the other edges of $\pi_k^l$ is a bridge. For each diagonal $d$ of $\pi_k^l$ with $d\neq \overline{ab}$, we create a child $v$ for $v^*$ in the same way as above. For $\overline{ab}$, the pocket $Q_{\overline{ab}}(\triangle_k)$ now also contains vertices of $B_2$ (see Figure~\ref{fig:decomsubpocket20}), and thus we cannot create a child for $v^*$ in the same way as above. Instead, we process $Q_{\overline{ab}}(\triangle_k)$ recursively. The reason we can do this is the following. By definition, the union of $\pi(p_i,a)$, which is the portion of $\pi(Q)$ between $p_i$ and $a$, and $\overline{ab}$ is a subpath of $\pi(p_i,p_k)$ and thus it is the geodesic path $\pi(p_i,b)$. In fact, $\overline{ab}$ is tangent to $\pi(Q)$ at $a$. Since $Q_{\overline{ab}}(\triangle_k)$ is a simple polygon with $\pi(p_i,b)$ on its boundary, $\pi(p_i,b)$ is also a concave chain and we can consider $\pi(p_i,b)$ as the base of $Q_{\overline{ab}}(\triangle_k)$ (note that $\pi(p_i,b)$ also contains exactly two bridges at the two ends). Hence, $Q_{\overline{ab}}(\triangle_k)$ has the same ``structure'' as $Q$ and thus we can apply the above processing recursively. Note that the number of vertices of $B_1$ in $Q_{\overline{ab}}(\triangle_k)$ is at most $|B_1(Q)|/2$. 
\end{itemize}

The right side of $\triangle_k$ can be processed symmetrically. 

The above process can be better represented by a tree $T_Q$. Each node $\nu$ of $T_Q$ corresponds to a subpolygon $Q(\nu)$ of $Q$ and also a non-empty geodesic triangle $\triangle_{\nu}$, which is added to the core $\core(v^*)$ of $v^*$ for $\Psi$. If $\nu$ is the root, then we have $Q(\nu)=Q$ and $\triangle_{\nu}=\triangle_k$. In the above processing of $\pi_k^l$, if $\pi_k^l$ has a bridge $\overline{ab}$, then the root of $T_Q$ has a left subtree by processing $Q_{\overline{ab}}(\triangle_k)$ recursively (if $\nu$ is the left child of the root, then $Q(\nu)=Q_{\overline{ab}}(\triangle_k)$). The right subtree is defined similarly. The tree height is $O(\log n)$, or more precisely $O(\log |B_1(Q)|)$ since every time we pick the middle vertex of the portion of $B_1$ in the current subpolygon $Q(\nu)$ to further decompose $Q(\nu)$. 
Note that as the base case, if $B_1(Q)$ contains a single vertex $p_k$ other than $p_i$ and $p_j$, then the two non-base sides of $\triangle_k$ are line segments $\overline{p_kp_i}$ and $\overline{p_kp_j}$, which are edges of $P$, and therefore the recursive step stops at $Q$. 

\subparagraph{Summary.}
The above defines the children $v$ of $v^*$ in the decomposition tree $\Psi$. As mentioned before, the entire tree $\Psi$ can be obtained by applying the same decomposition recursively on $P_v$ for every child $v$ of the root $v^*$. Since our decomposition does not introduce any new vertices and no two edges of the geodesic paths in the decomposition cross each other, the total size of the decomposition is $O(n)$. 

The above discusses the case where the geodesic triangle $\triangle^*$ is not empty. If it is empty, then it becomes a special case and can be handled similarly. Indeed, in this case, either $p_n$ is in $\pi(p^*,p_1)$ or $p_1$ is in $\pi(p^*,p_n)$. 
We let $\pi^*$ be $\pi(p^*,p_1)$ in the first case and $\pi(p^*,p_n)$ in the second case. Hence, $d_{v^*}=\overline{p_1p_n}\subseteq \pi^*$ holds in either case. We can simply treat the entire $P$ as $\calQ$, and then apply the above decomposition. Since $d_{v^*}$ is in $\pi^*$, according to our decomposition, it must lie on the base side of a geodesic triangle that is added to the core $\core(v^*)$. Hence, $d_{v^*}$ can never be in $P_v$ for any child $v$ of $v^*$. This ensures the second key property. The other two key properties follow from the same analysis. 


As computing a geodesic path in $P$ can be done in linear time~\cite{ref:GuibasLi87}, a straightforward algorithm can decompose each open subpocket $Q$ in $O(|Q|\log n)$ time since the height of $T_Q$ is $O(\log n)$. As such, decomposing all subpockets of $\calQ$ takes $O(|\calQ|\log n)$ time, and thus computing all the children of the root $v^*$ of the decomposition tree $\Psi$ can be done in $O(n\log n)$ time. As the height of $\Psi$ is $O(\log n)$ and the subpolygons $P_v$ of all nodes $v$ in the same level of $\Psi$ are interior disjoint, the total time for computing $\Psi$ can be bounded by $O(n\log^2 n)$. 

\subsection{Proving Theorem~\ref{theo:treequeries}: Constructing the visibility query data structure for $\boldsymbol{\Psi}$}
\label{sec:theo:treequeries}
In this section, using the decomposition tree $\Psi$, we construct a data structure for Theorem~\ref{theo:treequeries}. 

We start with constructing a point location data structure on the cores of all nodes of $\Psi$, which together form a decomposition of $P$. This takes $O(n)$ additional time~\cite{ref:KirkpatrickOp83,ref:EdelsbrunnerOp86}. Given a query point $q$, we can  use a point location query to find the node $v$ whose core $\core(v)$ contains $q$ in $O(\log n)$ time. 

In what follows, we explain the data structure $\calD_{v^*}$ for $P_{v^*}$ of the root node $v^*$ of $\Psi$, in which case $P_{v^*}=P$. The data structures for $P_v$ of other nodes $v$ can be built analogously. The data structure $\calD_{v^*}$ is used for answering the following queries: given a query point $q$, if $q\in \core(v^*)$, then $\vis(P_{v^*},q)$ can be computed in $O(\log n)$ time, and if $q$ is in $P_v$ for a child $v$ of $v^*$, then $\vis(P_{v^*}\setminus P_v,q)$ can be computed in $O(\log n)$ time.

\subsubsection{Overview}
Recall that the core $\core(v^*)$ is a set of geodesic triangles including $\triangle_{v^*}$. For the case $q\in \triangle_{v^*}$, we can simply build the data structure of Lemma~\ref{lem:geotriangle} with respect to $\triangle_{v^*}$, after which $\vis(P_{v^*},q)$ can be computed in $O(\log n)$ time for any $q\in \triangle_{v^*}$. Also, if $q\in P_v$ for a child $v$ of $v^*$ whose gate is a  diagonal on the boundary of $\triangle_{v^*}$ (we say that $v$ is {\em associated with} $\triangle_{v^*}$), then $\vis(P_{v^*}\setminus P_v,q)$ can be computed in $O(\log n)$ time by Lemma~\ref{lem:geotriangle}. 
The preprocessing of the above for $\triangle_{v^*}$ takes $O(n^2)$ space and $O(n^2\log n)$ time in total. 

It remains to handle the case when the query point $q$ is in $\calQ$. Suppose $q$ is in a subpocket $Q$ of $\calQ$.
By Observation~\ref{obser:vissubpocket}, to determine $\vis(P_{v^*},q)$, it suffices to consider $Q$ and $Q$'s neighboring subpockets. For each open subpocket $Q$ of $\calQ$, we preprocess $Q$ and its neighboring subpockets in the following. We assume that $Q$ is $B_1$-bounded and follow our previous notation.

Recall that we have a tree $T_Q$ for $Q$ and each node $\nu$ of $T_Q$ corresponds to a subpolygon $Q(\nu)$ and a geodesic triangle $\triangle_{\nu}$, which belongs to the core $\core(v^*)$. Furthermore, for each diagonal $d$ on a non-base side of $\triangle_{\nu}$ such that $d$ is not a bridge, $v^*$ has a child $v$ with $P_v=Q_d(\triangle_{\nu})$ and $d_v=d$. We say that $v$ is {\em associated with} $\triangle_{\nu}$. In addition, for each diagonal $d$ on the base of $Q(\nu)$ such that $d$ is on $\pi^*$ and $d$ is not a bridge, $v^*$ also has a child $v$ with $P_v$ being the subpolygon of $\calQ$ divided by $d$ that does not contain $Q$ and $d_v=d$. We also say that $v$ is {\em associated with} $\triangle_{\nu}$. 
Our goal for the preprocessing of $Q$ and its neighboring subpockets is to compute $\vis(P_{v^*},q)$ in $O(\log n)$ time if $q$ is in $\triangle_{\nu}$ for some node $\nu$ of $T_Q$, and compute $\vis(P_{v^*}\setminus P_v, q)$ in $O(\log n)$ time if $q$ is in $P_v$ for a child $v$ of $v^*$ associated with $\triangle_{\nu}$ for some node $\nu$ of $T_Q$. 

Define $\overline{Q}$ to be the union of $Q$ and all its neighboring subpockets of $P$ (including the special subpocket $Q^*=P_{v^*}\setminus \calQ$). Let $m=|B_1(Q)|$ and $M=|\overline{Q}|$. 

For each node $\nu$ of $T_Q$, we construct the data structure of Lemma~\ref{lem:geotriangle} with respect to $\triangle_{\nu}$. 
For each diagonal $d$ of $\partial \triangle_{\nu}$, recall that the notation $P_d(\triangle_{\nu})$ refers to the subpolygon of $P$ divided by $d$ that does not contain $\triangle_{\nu}$. 
Let $n_d$ denote the number of the elements (i.e., vertices or edges) of $P_d(\triangle_{\nu})$ that are visible to $d$, and let 
$\calC_d$ denote the set of visibility cones defined by these elements through $d$ (the definition of the cones follows that in Section~\ref{sec:smallspace}). According to the preprocessing algorithm of Lemma~\ref{lem:geotriangle}, the preprocessing takes $O(\sum_{d\in \partial \triangle_{\nu}}n_d^2)$ space and $O(\sum_{d\in \partial \triangle_{\nu}}n_d^2\cdot \log n)$ time in total. 

Therefore, to bound the total preprocessing time and space for the data structures of Lemma~\ref{lem:geotriangle} for all nodes $\nu\in T_Q$, it suffices to bound $\sum_{\nu\in T_Q}\sum_{d\in \partial \triangle_{\nu}}n_d^2$. To this end, we will prove $\sum_{v\in T_Q}\sum_{d\in \partial \triangle_{\nu}}n_d=O(M\log^2 n)$, which leads to $\sum_{v\in T_Q}\sum_{d\in \partial \triangle_{\nu}}n_d^2=O(M^2\log^4 n)$. In addition, we will also show that the corresponding sets $\calC_d$ of visibility cones can be computed in $O(M^2\log^4 n)$ time. As such, the preprocessing for $\overline{Q}$ takes $O(M^2\log^4n)$ space and $O(M^2\log^5 n)$ time in total. 

\subparagraph{Outline.}
For each node $\nu$ of $T_Q$, define $\Pi_{\nu}$ to be the union of the non-base sides of $\triangle_{\nu}$, and let $\pi_{\nu}$ be the base side of $\triangle_{\nu}$. 
In Section~\ref{sec:nonbase}, we first prove $\sum_{\nu\in T_Q}\sum_{d\in \Pi_{\nu}}n_d=O(m\log n)$ and also show that the corresponding visibility cones can be computed in $O(m\log n)$ time. 
Note that $n_d=O(|P_d(\triangle_{\nu})|)$ by definition. 
In Section~\ref{sec:baseside}, we prove the bound $\sum_{\nu\in T_Q}\sum_{d\in \pi_{\nu}}n_d=O(M\log^2 n)$ for the base sides $\pi_{\nu}$ and show that the corresponding visibility cones can be computed in $O(M\log^2 n)$ time.
Section~\ref{sec:summary} finally summarizes everything.

\subsubsection{The non-base sides}
\label{sec:nonbase}
Consider the root $\nu$ of $T_Q$. Following our previous notation, $\triangle_{\nu}$ is $\triangle_k$, formed by $\pi(p_i,p_k)$, $\pi(p_k,p_j)$, and $\pi(p_i,p_j)$. Recall that $\pi(p_i,p_j)$ is the base of $Q$. 
We argue that $\sum_{d\in \Pi_{\nu}}n_d=O(m)$. Note that the subpolygons $P_d(\triangle_{\nu})$ for all diagonals $d\in \Pi_{\nu}$ are interior disjoint. 
If $d$ is not a bridge, then all vertices of $P_d(\triangle_{\nu})$ are in $B_1(Q)$. Hence, the sum of $n_d$ for all such non-bridge diagonals of $\Pi_{\nu}$ is $O(m)$. Since each such $P_d(\triangle_{\nu})$ is a simple polygon, as discussed in Section~\ref{sec:smallspace}, the visibility cones for these diagonals can also be computed in $O(m)$ time~\cite{ref:GuibasLi87}. 

Next, consider a bridge $d\in \Pi_{\nu}$. Without loss of generality, we assume that $d$ is in the left side of $\triangle_{\nu}$. Following our previous notation, $d$ is $\overline{ab}$. 
In this case, $P_d(\triangle_{\nu})$ contains vertices of both $B_1$ and $B_2$, and $P_d(\triangle_{\nu})$ is not $Q_d(\triangle_{\nu})$ (because other than $\overline{ab}$, $Q_d(\triangle_{\nu})$ has another bridge); see Figure~\ref{fig:baseside10}. 

\begin{figure}[t]
\centering
\includegraphics[height=2.5in]{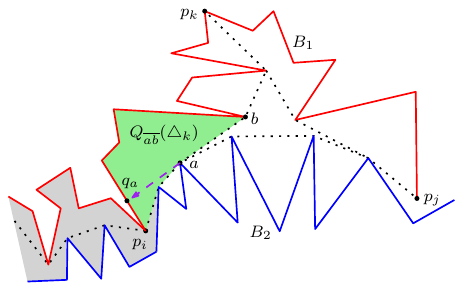}
\caption{The green region is $Q_{\overline{ab}}(\triangle_{\nu})$. The union of the green region and the gray region is $P_d(\triangle_{\nu})$. The red solid polygonal chain is in $B_1$ and the blue polygonal chain is in $B_2$. The portion of $B_1$ from $p_i$ to $b$ is $B_1(Q_{\overline{ab}}(\triangle_{\nu}))$.}
\label{fig:baseside10}
\end{figure}

We claim that all elements of $P_d(\triangle_{\nu})$ visible to any point of $\triangle_{\nu}$ through $\overline{ab}$ are in $B_1(Q_d(\triangle_{\nu}))$, i.e., the portion of $B_1$ in $Q_d(\triangle_{\nu})$. 
Indeed, without loss of generality, we assume that $P_d(\triangle_{\nu})$ is locally above $\overline{ab}$. Then, if we shoot a ray from $a$ along the direction opposite from $b$ until the first point $q_a$ on $\partial P_d(\triangle_{\nu})$ hit by the ray (see Figure~\ref{fig:baseside10}), then all elements of $P_d(\triangle_{\nu})$ visible to any point of $\triangle_{\nu}$ through $d$ must be in $\tilde{P_d}(\triangle_{\nu})$, where $\tilde{P_d}(\triangle_{\nu})$ is the subpolygon of $P_d(\triangle_{\nu})$ divided by $\overline{aq_a}$ and locally above $\overline{aq_a}$. By definition, $\pi(p_i,a)\cup \overline{ab}$ is a concave chain and is the base of $Q_d(\triangle_{\nu})$, and thus elements of $\tilde{P_d}(\triangle_{\nu})$ are all in $B_1(Q_d(\triangle_{\nu}))$. The claim thus follows. 

The claim implies that $n_d$ is bounded by the number of elements of $B_1$ in $Q_d(\triangle_{\nu})$ that are visible to $\overline{ab}$. Since $|B_1(Q_d(\triangle_{\nu}))|=O(m)$, we obtain $n_d=O(m)$. To compute the elements of $B_1(Q_d(\triangle_{\nu}))$ visible to $\overline{ab}$ and their corresponding visibility cones, we can do the following. First, we compute the point $q_a$ by a ray-shooting query in $O(\log n)$ time~\cite{ref:ChazelleRa94,ref:HershbergerA95} (after $O(n)$-time preprocessing for $P$). Then, we can obtain $\tilde{P_d}(\triangle_{\mu})$ in $O(m)$ time. Next, we compute all visibility cones using the simple polygon $\tilde{P_d}(\triangle_{\mu})$ in $O(m)$ time~\cite{ref:GuibasLi87}. In this way, in $O(m+\log n)$ time we can compute all visibility cones for $d=\overline{ab}$. 

The above shows that $\sum_{d\in \Pi_{\nu}}n_d=O(m)$ and the corresponding visibility cones can be computed in $O(m+\log n)$ time. 

Now consider the left child $\nu'$ of $\nu$ in $T_Q$, which is defined by the bridge $\overline{ab}$ of the left children side of $\triangle_{\nu}$, i.e., $Q(\nu')=Q_{\overline{ab}}(\triangle_{\nu})$; see Figure~\ref{fig:baseside10}. Following the same analysis as above but with respect to $Q_{\overline{ab}}(\triangle_{\nu})$, we can show that
$\sum_{d\in \Pi_{\nu'}}n_d=O(m')$ and the corresponding visibility cones can be computed in $O(m'+\log n)$ time, where $m'$ is the number of vertices of $B_1$ in $Q_{\overline{ab}}(\triangle_{\nu})$ and $m'\leq m/2$ by definition. Indeed, we are able to follow the same analysis as above for $\nu$ because as discussed before $Q_{\overline{ab}}(\triangle_{\nu})$ has the same structure as $Q$, i.e., the base of $Q_{\overline{ab}}(\triangle_{\nu})$ is a concave chain with two bridges at the two ends. Following the same analysis recursively, since the height of $T_Q$ is $O(\log m)$ (and the sum of $m'$ for all nodes in the same level is $O(m)$), we can obtain that $\sum_{\nu\in T_Q}\sum_{d\in \Pi_{\nu}}n_d=O(m\log m)=O(m\log n)$ and the corresponding visibility cones can be computed in $O(m\log n)$ time.

\subsubsection{The base sides}
\label{sec:baseside}
We now prove the bound $\sum_{\nu\in T_Q}\sum_{d\in \pi_{\nu}}n_d=O(M\log^2 n)$ for the base sides $\pi_{\nu}$ and show that the corresponding visibility cones can be computed in $O(M\log^2 n)$ time.

For each node $\nu\in T_Q$, recall that $\pi_{\nu}$ has at most two bridges, which are at the two ends of $\pi_{v}$, and all other edges are on $\pi(Q)$, the base of $Q$. The geodesic triangles $\triangle_{\nu}$ of all $\nu\in T_Q$ are interior disjoint, and thus each edge of $\pi(Q)$ appears in $\pi_{\nu}$ for exactly one node $\nu\in T_Q$. For each diagonal $d$ on $\pi(Q)$, suppose that $d$ is in $\pi_{\nu}$ for some node $\nu$ of $T_Q$. Let $\overline{Q}_d(\triangle_{\nu})$ be the subpolygon of $\overline{Q}$ divided by $d$ that does not contain $\triangle_{\nu}$. By definition, $\overline{Q}_d(\triangle_{\nu})$ is a $B_2$-bounded subpocket of $\calQ$ neighboring to $Q$ through $d$ and $P_d(\triangle_{\nu})=\overline{Q}_d(\triangle_{\nu})$. Hence, $n_d=O(|\overline{Q}_d(\triangle_{\nu})|)$ and the corresponding visibility cones can be computed in $O(|\overline{Q}_d(\triangle_{\nu})|)$ time. As such, the sum of $n_d$ for all non-bridge diagonals $d$ in $\pi_{\nu}$ for all nodes $\nu\in T_Q$ is bounded by the total size of all neighboring subpockets of $Q$, which is $O(M)$. Also, the corresponding visibility cones can be computed in $O(M)$ time. 

It remains to bound the sum of $n_d$ for bridges $d\in \pi_{\nu}$ for all nodes $\nu\in T_Q$. 

For each node $\nu\in T_Q$, let $\overline{Q}(\nu)$ denote the union of $Q(\nu)$ and the neighboring subpockets of $Q(\nu)$. Again, the base side $\pi_{\nu}$ of $\triangle_{\nu}$ has at most two bridges, which are the leftmost and rightmost edges of $\pi_{\nu}$, respectively. If the leftmost edge of $\pi_{\nu}$ is a bridge, we call it the {\em left bridge} and denote it by $l_{\nu}$. If the rightmost edge of $\pi_{\nu}$ is a bridge, we call it the {\em right bridge} and denote it by $r_{\nu}$. 
In that follows, we prove $\sum_{\nu\in T_Q}n_{l_{\nu}}=O(M\log^2 n)$, and we can prove the same bound for the right bridges $r_{\nu}$. We will also show that the corresonding visibility cones can be computed in $O(M\log^2 n)$ time. 

\subparagraph{Proving the bound $\boldsymbol{\sum_{\nu\in T_Q}n_{l_{\nu}}=O(M\log^2 n)}$.}
For each node $\nu\in T_Q$, define $Z(\nu)$ as the set of elements of $\overline{Q}$ visible to any point of $\triangle_{\nu}$ through the left bridge $l_{\nu}$ of $\pi_{\nu}$. In other words, $Z(\nu)$ is the set of elements of $\overline{Q}_{l_{\nu}}(\triangle_{\nu})$ visible to at least one point of $l_{\nu}$, where $\overline{Q}_{l_{\nu}}(\triangle_{\nu})$ is the subpolygon of $\overline{Q}$ divided by $l_{\nu}$ that does not contain $\triangle_{\nu}$. By definition, we have $Z(\nu)\subseteq \overline{Q}_{l_{\nu}}(\triangle_{\nu})$. 
Note that if $\triangle_{\nu}$ does not have a left bridge $l_{\nu}$, then $Z(\nu)=\emptyset$. One reason we introduce $Z(\nu)$ is that $n_{l_{\nu}}=O(|Z(\nu)|)$ holds. Hence, to bound $n_{l_{\nu}}$, we can bound $O(|Z(\nu)|)$ instead. 

In addition, we will define a subpolygon $A(\nu)\subseteq \overline{Q}$ with the following invariant (called {\em $A$'s invariant}): $Z(\mu)\subseteq A(\nu)$ holds for $\mu=\nu$ and any node $\mu$ in the right subtree of $\nu$. 
Hence, to bound $|Z(\mu)|$, we can bound $A(\nu)$. To prove $\sum_{\nu\in T_Q}n_{l_{\nu}}=O(M\log^2 n)$, we will prove $\sum_{\nu\in T_Q}|A(\nu)|=O(M\log^2 n)$. 
As will be seen, $A(\nu)$ is either $\overline{Q}(\nu)$ or $A(\nu')$ for some ancestor $\nu'$ of $\nu$ such that $\nu$ is in the right subtree of $\nu'$.

For the root $\hat\nu$ of $T_Q$, we let $A(\hat\nu)=\overline{Q}$, which is also $\overline{Q}(\hat \nu)$, and thus $A$'s invariant obviously holds for $\hat\nu$. We next consider a child $\nu$ of $\hat\nu$. There are two cases depending on whether $\nu$ is the left or right child of $\hat\nu$. 

\begin{description}
\item[Right child case.] 
We first assume $\nu$ is the right child of $\hat\nu$, i.e., the vertices of $B_1$ in $Q(\nu)$ are between $p_k$ and $p_j$ (see Figure~\ref{fig:baseside20}). 
Depending on whether $\pi_{\nu}$ has a left bridge, there are two subcases. 

\begin{figure}[t]
\centering
\includegraphics[height=2.5in]{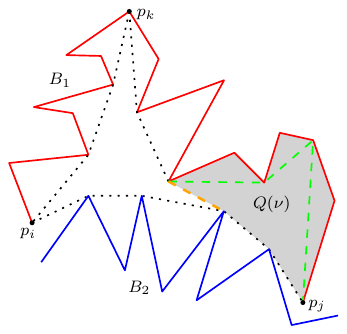}
\caption{The gray region is $Q(\nu)$ for the right child $\nu$ of $\hat\nu$. The two non-base sides of $\triangle_{\nu}$ are marked with green dashed segments. The orange dashed segment on the base side $\pi_{\nu}$ of $\triangle_{\nu}$ is the left bridge $l_{\nu}$ of $\pi_{\nu}$. In this case, $\nu$ cannot have a left subtree in $T_Q$.}
\label{fig:baseside20}
\end{figure}

\begin{itemize}
    \item 
If $\pi_{\nu}$ has a left bridge (see Figure~\ref{fig:baseside20}), then a crucial observation is that $\nu$ does not have a left subtree because the left non-base side of $\triangle_{\nu}$ cannot have a bridge according to our decomposition of $Q$. In this case, we define $A(\nu)=A(\hat\nu)$, which is $\overline{Q}$. Hence, $A$'s invariant clearly holds for $\nu$.

\item 
If $\pi_{\nu}$ does not have a left bridge (see Figure~\ref{fig:baseside30}), then $Z(\nu)=\emptyset$ by definition. In this case, the key observation is that for any node $\mu$ in the right subtree of $\nu$, $Z(\mu)\subseteq \overline{Q}(\nu)$ must hold. To see this, we give a brief explanation here and a more detailed argument is given in the proof of Lemma~\ref{lem:rightsubtree}. 
Indeed, since $\pi_{\nu}$ does not have a left bridge, the left side of $\triangle_{\nu}$ must contain a bridge $d$ that is tangent to the base of $Q$ (see Figure~\ref{fig:baseside30}). Also, the right side of $\triangle_{\hat\nu}$ has a bridge $d'$ (in fact, $Q(\nu)$ is $Q_{d'}(\triangle_{\hat\nu})$). The two bridges $d$ and $d'$ along with the portion of the base of $Q$ between them is a geodesic path (specifically, it is a subpath of the geodesic path containing the left side of $\triangle_{\nu}$; see Figure~\ref{fig:baseside30}). For any point $q$ in $\triangle_{\mu}$ for any node $\mu$ in the right subtree of $\nu$, suppose that $q$ is visible to a point $p\in \overline{Q}\setminus \overline{Q}(\nu)$ through the left bridge $l_{\mu}$ of $\triangle_{\mu}$ (i.e., $\overline{pq}$ intersects $l_{\mu}$). Then, $\overline{pq}$ must cross both $d$ and $d'$, which is not possible since a line segment cannot cross a geodesic path twice. Hence, we have $Z(\mu)\subseteq \overline{Q}(\nu)$. 

\begin{figure}[t]
\centering
\includegraphics[height=2.2in]{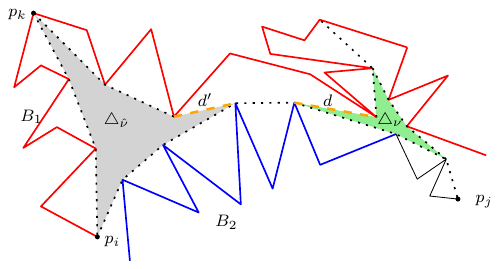}
\caption{Illustrating the geodesic triangles $\triangle_{\hat\nu}$ (the gray region) and $\triangle_{\nu}$ (the green region). For any $\mu$ in the right subtree of $\nu$, $\triangle_{\mu}$ must be ``right'' of $\triangle_{\nu}$.}
\label{fig:baseside30}
\end{figure}

In light of the observation, we set $A(\nu)=\overline{Q}(\nu)$. This means that the right subtree of $\nu$ can be treated independently with respect to $\overline{Q}(\nu)$. 
\end{itemize}

\item[Left child case.]     
If $\nu$ is the left child of $\hat\nu$, then for any node $\mu$ in the right subtree of $\nu$, 
according to our decomposition, $\overline{Q}_{l_{\mu}}(\triangle_{\mu})\subseteq \overline{Q}(\nu)$. Since $Z(\mu)\subseteq \overline{Q}_{l_{\mu}}(\triangle_{\mu})$ by definition, we obtain $Z(\mu)\subseteq \overline{Q}(\nu)$. Therefore, we set $A(\nu)=\overline{Q}(\nu)$, meaning that the right subtree of $\nu$ can be treated independently with respect to $\overline{Q}(\nu)$. 
\end{description}

In general, consider any other node $\nu$ in the tree $T_Q$. Assume that $A(\nu')$ for all ancestors $\nu'$ of $\nu$ have been set and $A$'s invariant holds for $\nu'$. 
We define $A(\nu)$ as follows. 

If $\pi_{\nu}$ has a left bridge $l_{\nu}$, then $l_{\nu}$ must be on the right side of $\triangle_{\nu'}$ of the lowest ancestor $\nu'$ of $\nu$ with $\nu$ in the right subtree of $\nu'$. In this case, we set $A(\nu)=A(\nu')$. To see that $A$'s invariant holds for $\nu$, consider a node $\mu$ that is either $\nu$ or in the right subtree of $\nu$. Since $\nu$ is in the right subtree of $\nu'$, $\mu$ is also in the right subtree of $\nu'$. Since $A$'s invariant holds for $\nu'$, we have $Z(\mu)\subseteq A(\nu')$. As $A(\nu)=A(\nu')$, we obtain $Z(\mu)\subseteq A(\nu)$. Hence, $A$'s invariant holds for $\nu$. 

If $\pi_{\nu}$ does not have a left bridge, then $Z(\nu)=\emptyset$. 
In this case, we set $A(\nu)=\overline{Q}(\nu)$. The following lemma guarantees that $A$'s invariant holds for $\nu$. 
\begin{lemma}\label{lem:rightsubtree}
If $\pi_{\nu}$ does not have a left bridge, then for any node $\mu$ in the right subtree of $\nu$, $Z(\mu)\subseteq \overline{Q}(\nu)$. 
\end{lemma}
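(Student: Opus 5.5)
We would formalize the sketch given in the right-child case. We argue by contradiction, using the fact that a segment cannot cross a geodesic path twice (as in Observation~\ref{obser:vissubpocket}).

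\textbf{Step 1: a geodesic path $\gamma$ that separates $\overline{Q}(\nu)$.} Since $\pi_\nu$ has no left bridge, the leftmost vertex $a$ of the base side of $\triangle_\nu$ lies on the base of $Q(\nu)$ strictly to the right of its left end. We first check that $a$ is a vertex of $B_2$. The base of $Q(\nu)$ is a concave chain whose two end vertices are in $B_1$ and whose interior vertices are in $B_2$, so $a$ is an interior vertex and hence in $B_2$.

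Let $p_{i'}$ be the left end of the base of $Q(\nu)$, and let $p_{k'}$ be the apex chosen for $\nu$. The left side of $\triangle_\nu$ is $\pi(a,p_{k'})$, and its edge at $a$ is a bridge $d$, tangent to the base at $a$. The path $\pi(p_{i'},p_{k'})$ consists of the base portion from $p_{i'}$ to $a$ followed by the left side. Its first edge is the left bridge $d'$ of the base of $Q(\nu)$ (this equals the bridge $l$ of the parent-type construction). Let $\gamma$ be the subpath of $\pi(p_{i'},p_{k'})$ consisting of $d'$, the base portion between $d'$ and $a$, and $d$. Then $\gamma$ is a geodesic, so any segment in $P$ crosses it at most once.

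\textbf{Step 2: where $\triangle_\mu$ and $l_\mu$ lie.} Let $\mu$ be any node in the right subtree of $\nu$. By construction of $T_Q$, we have $Q(\mu)\subseteq Q_{\cdot}(\triangle_\nu)$, the pocket beyond the bridge on the right side of $\triangle_\nu$. So $\triangle_\mu$ and its left bridge $l_\mu$ lie on the side of $\gamma$ away from $d'$; call this the "right" side of $\gamma$.

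We also need that $\overline{Q}_{l_\mu}(\triangle_\mu)\setminus\overline{Q}(\nu)$ lies entirely on the other side of $\gamma$, i.e.\ it is separated from $l_\mu$ by both $d$ and $d'$. The argument is as follows. The region of $\overline{Q}$ outside $\overline{Q}(\nu)$ is reached from $Q(\nu)$ only through $d'$, or through base edges of $Q$ outside $Q(\nu)$, which lie left of $d'$. The region beyond $l_\mu$ is the neighboring $B_2$-bounded subpocket. Hence any $p\in\overline{Q}\setminus\overline{Q}(\nu)$ that is visible from $q\in\triangle_\mu$ through $l_\mu$ would force $\overline{pq}$ to exit across $d$, which separates the part of that subpocket under $\triangle_\nu$'s left side, and then across $d'$.

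\textbf{Step 3: contradiction.} Suppose $p\in Z(\mu)$ but $p\notin\overline{Q}(\nu)$. Then $\overline{pq}$ crosses both $d$ and $d'$. Both edges lie on the geodesic $\gamma$, and a segment in $P$ cannot cross a geodesic twice unless it runs along it, which is excluded by general position. This is a contradiction, so $Z(\mu)\subseteq\overline{Q}(\nu)$.

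\textbf{Main obstacle.} The hard part is Step 2: showing rigorously that leaving $\overline{Q}(\nu)$ through $l_\mu$ forces crossings of both $d$ and $d'$, rather than some other route. This requires a careful topological argument: $\gamma$ together with a portion of $B_1$ and $B_2$ bounds a region that contains $\triangle_\mu$ and the relevant part of the $B_2$ subpocket. We would argue this by walking along $\partial P$ and using that $d$ and $d'$ are diagonals with one endpoint in $B_1$ and one in $B_2$. Splitting $P$ along $d$ and $d'$ yields three pieces. We would then show that $q$ and $l_\mu$ lie in the middle piece, while $\overline{Q}\setminus\overline{Q}(\nu)$ meets only the outer piece beyond $d'$ and the portion beyond $d$ that is separated off.
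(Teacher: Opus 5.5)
Your Steps 1 and 3 coincide with the paper's argument. The paper uses the same geodesic: the bridge $d'$ on the right side of $\triangle_{\nu'}$, where $\nu'$ is the lowest ancestor of $\nu$ having $\nu$ in its right subtree (this is the first edge of the base of $Q(\nu)$), then the base portion, then the left-side bridge $d$ of $\triangle_\nu$. It also derives the contradiction from a double crossing.

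The gap is in Step 2, which carries all the content, and your sketch there rests on an incorrect picture.
\begin{itemize}
\item $l_\mu$ is not an edge of $\pi^*$ with a $B_2$-bounded subpocket behind it. For $\mu$ in the right subtree of $\nu$, $l_\mu$ (if it exists) is the bridge on the right side of $\triangle_{\mu'}$, where $\mu'$ is the lowest ancestor of $\mu$ having $\mu$ in its right subtree ($\mu'$ is $\nu$ or lies below it). So it is a diagonal inside $Q(\nu)$, with $\triangle_{\mu'}$ behind it.
\item It is false that $\overline{Q}\setminus\overline{Q}(\nu)$ is reached from $Q(\nu)$ only through $d'$. The last edge of the base of $Q(\nu)$ is also a bridge, with more of $\overline{Q}$ beyond it. That part is excluded only because it lies on the same side of $l_\mu$ as $\triangle_\mu$.
\item Most seriously, your closing plan puts $q$ and $l_\mu$ in the middle piece obtained by cutting along $d$ and $d'$. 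The piece bounded by both $d$ and $d'$ is the pocket $Q(\nu_L)$ of the left child $\nu_L$ of $\nu$ (its base is exactly your $\gamma$), together with the $B_2$-subpockets hanging off the base between $d'$ and $d$. $\triangle_\mu$ instead lies in the outer piece beyond $d$, the one containing $\triangle_\nu$. If $q$ were in the middle piece, $\overline{pq}$ would cross only one of $d,d'$, and Step 3 would give no contradiction.
\end{itemize}

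As in the paper, you need two facts.
\begin{itemize}
\item[(a)] $\overline{Q}_{l_\mu}(\triangle_\mu)\subseteq\overline{Q}(\nu)\cup\overline{Q}'_{d'}$, where $\overline{Q}'_{d'}$ is the part of $\overline{Q}$ on the side of $d'$ containing $\triangle_{\nu'}$. This holds because the $B_1$-endpoint of $l_\mu$ lies in $B_1(Q(\nu))$ and its other endpoint lies on the base of $Q(\nu)$. Hence the side of $l_\mu$ away from $\triangle_\mu$ can leave $\overline{Q}(\nu)$ only across $d'$.
\item[(b)] $\triangle_\mu\subseteq\overline{Q}(\mu)$ lies on the $\triangle_\nu$-side of $d$ and on the $Q(\nu)$-side of $d'$, whereas $\overline{Q}'_{d'}$ lies on the opposite side of both.
\end{itemize}
Then $p$ and $q$ lie in the two outer pieces, so $\overline{pq}$ crosses both $d$ and $d'$, and your Step 3 applies.

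You should also treat separately the case where $\nu'$ does not exist, i.e., the root-to-$\nu$ path in $T_Q$ uses only left edges. There your $d'$ is the first edge of the base of $Q$ itself, and $\overline{Q}_{l_\mu}(\triangle_\mu)\subseteq\overline{Q}(\nu)$ holds directly, as the paper notes.
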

\begin{proof}
Since $\pi_{\nu}$ does not have a left bridge, according to our decomposition, the left side of $\triangle_{\nu}$ must contain a bridge $d$, which is tangent to the base of $Q$. 

Let $\nu'$ be the lowest ancestor of $\nu$ such that $\nu$ is in the right subtree of $\nu'$. According to our decomposition, the right side of $\triangle_{\nu'}$ has a bridge $d'$ (e.g., in Figure~\ref{fig:baseside30}, one can replace $\hat\nu$ by $\nu'$). 
The two bridges $d$ and $d'$ along with the portion of the base of $Q$ between them is a geodesic path (specifically, it is a subpath of the geodesic path that contains the left side of $\triangle_{\nu}$). 

Consider a node $\mu$ in the right subtree of $\nu$. By definition, $Z(\mu)\subseteq \overline{Q}_{l_{\mu}}(\triangle_{\mu})$. According to our decomposition, since $\mu$ is in the right substree of $\nu$, we have $\overline{Q}_{l_{\mu}}(\triangle_{\mu})\subseteq \overline{Q}(\nu)\cup \overline{Q}'_{d'}(\triangle_{\nu'})$, where $\overline{Q}'_{d'}(\triangle_{\nu'})$ is the subpolygon of $\overline{Q}$ divided by $d'$ containing $\triangle_{\nu'}$. In addition, since $d$ is a bridge in the left  side of $\triangle_{\nu}$ and $\mu$ is in the right subtree of $\nu$, by the definition of $v'$, $d$ separates $\overline{Q}(\mu)$ from $\overline{Q}'_{d'}(\triangle_{\nu'})$ and thus also separates $\triangle_{\mu}$ from $\overline{Q}'_{d'}(\triangle_{\nu'})$ since $\triangle_{\mu}\subseteq \overline{Q}(\mu)$. According to our decomposition, $\triangle_{\mu}\subseteq \overline{Q}(\mu)\subseteq\overline{Q}(\nu)\subseteq \overline{Q}_{d'}(\triangle_{\nu'})$ and thus $d'$ also separates $\triangle_{\mu}$ from $\overline{Q}'_{d'}(\triangle_{\nu'})$. Hence, both $d$ and $d'$ separate $\triangle_{\mu}$ from $\overline{Q}'_{d'}(\triangle_{\nu'})$. This means that if there is a point $p\in \overline{Q}'_{d'}(\triangle_{\nu'})$ visible to a point $q\in \triangle_{\mu}$, then $\overline{pq}$ must cross both $d$ and $d'$. But this is impossible since $d$ and $d'$ are in a geodesic path.
Therefore, no point of $\overline{Q}'_{d'}(\triangle_{\nu'})$ can be visible to any point of $\triangle_{\mu}$, implying that $\overline{Q}'_{d'}(\triangle_{\nu'})\cap Z(\mu)=\emptyset$. Since  $Z(\mu)\subseteq \overline{Q}_{l_{\mu}}(\triangle_{\mu})\subseteq \overline{Q}(\nu)\cup \overline{Q}'_{d'}(\triangle_{\nu'})$, we obtain  $Z(\mu)\subseteq \overline{Q}(\nu)$. 

The above discusses the case where $\nu'$ exists. If not, then we simply have $\overline{Q}_{l_{\mu}}(\triangle_{\mu})\subseteq \overline{Q}(\nu)$. Therefore, we also obtain $Z(\mu)\subseteq\overline{Q}_{l_{\mu}}(\triangle_{\mu})\subseteq \overline{Q}(\nu)$. 
\end{proof}

Since $Z(\nu)\subseteq A(\nu)$ for each node $\nu$ of $T_Q$, we have $\sum_{\nu\in T_Q}n_{l_{\nu}}=O(\sum_{\nu\in T_Q}|A(\nu)|)$. Also, for each $\nu$, the corresponding visibility cones for $l_{\nu}$ can be found in $A(\nu)$, which takes $O(|A(\nu)|)$ time~\cite{ref:GuibasLi87} since $A(\nu)$ is a simple polygon. Hence, computing all visibility cones takes $O(\sum_{\nu\in T_Q}|A(\nu)|)$ time. 
The following lemma gives a bound for $\sum_{\nu\in T_Q}|A(\nu)|$. 

\begin{lemma}\label{lem:Abound}
    $\sum_{\nu\in T_Q}|A(\nu)|=O(M\log^2 n)$.
\end{lemma}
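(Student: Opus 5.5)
Each $A(\nu)$ is either $\overline{Q}(\nu)$ (a \emph{fresh} node: the root, a node whose base side $\pi_\nu$ has no left bridge, or the left child of its parent as in the left child case) or is copied from $A(\nu')$ for the lowest ancestor $\nu'$ having $\nu$ in its right subtree (an \emph{inherited} node). So I split the sum into fresh and inherited contributions.

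\emph{Fresh nodes.} Here $|A(\nu)|=|\overline{Q}(\nu)|$. I would show that for any fixed depth of $T_Q$ the polygons $\overline{Q}(\nu)$ overlap only $O(1)$ times in total size. The sets $Q(\nu)$ at a fixed depth are interior disjoint subpolygons of $Q$, because they are separated by the non-base sides of the ancestor triangles. A neighboring subpocket of $Q(\nu)$ attaches through an edge of the base of $Q(\nu)$. Every such edge other than its two end bridges lies on $\pi(Q)$, so it is owned by exactly one $Q(\nu)$ at that depth. The two end bridges of $Q(\nu)$ lead either into another $Q(\nu')$ or back into a subpocket of $\overline{Q}$ neighboring $Q$; each such subpocket can be charged at most twice per level. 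Hence fresh nodes contribute $O(M)$ per level, and $O(M\log m)=O(M\log n)$ overall.

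\emph{Inherited nodes.} For an inherited $\nu$, the set $A(\nu)=A(\nu')$ is a fresh set, counted once more for each node inheriting it. Inheritance happens only when $\pi_\nu$ has a left bridge lying on the right side of $\triangle_{\nu'}$. As the right child case observes, such a $\nu$ has no left subtree. The nodes inheriting from a given fresh root $\rho$ therefore form a chain in which each node is reached from the previous one by right-child steps, passing through nodes whose triangles touch the right side of $\triangle_\rho$ via their left bridge. The goal is to show that each fresh set $A(\rho)$ is inherited by $O(\log n)$ nodes. Since the height of $T_Q$ is $O(\log m)$, any chain of right-descendants of $\rho$ has length $O(\log n)$, so it suffices to show these inheriting nodes lie on a single root-to-leaf path of $\rho$'s right subtree. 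The reason is that only one subtree can continue to have its leftmost base edge be a bridge on the right side of $\triangle_\rho$. Granting this, the inherited total is at most $O(\log n)$ times the fresh total, that is, $O(\log n)\cdot O(M\log n)=O(M\log^2 n)$, which proves the lemma.

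\emph{Expected obstacle.} The main difficulty is the single-path claim, i.e.\ that at most one child continues the chain. I would try to prove it from the structure of the decomposition: the right side of $\triangle_\rho$ contains exactly one bridge, namely the one defining the right child of $\rho$. Recursively, only the child containing the left end of the base can have its left base edge equal to a bridge from an ancestor's right side. If this turns out to be false, the fallback is coarser: charge each inheriting node $\nu$ to the disjoint sets $\overline{Q}(\nu)$ at its level together with the $O(1)$ shared neighboring subpockets. That should still give $O(M)$ per level and hence $O(M\log^2 n)$ overall.
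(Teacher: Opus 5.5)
Your plan matches the paper's: split $T_Q$ into fresh nodes ($A(\nu)=\overline{Q}(\nu)$) and inherited nodes, bound the fresh ones by $O(M)$ per level, and show that each fresh set is copied $O(\log m)$ times. The gap is that the claim carrying the whole inherited part, namely that the nodes inheriting $A(\rho)$ lie on one root-to-leaf path, is only granted. Your justification for it also rests on a wrong picture.

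Only the nodes on the left spine of the root of $T_Q$ are fresh by default; any other left child can inherit. For example, suppose $\rho_R$ has no left bridge, so $A(\rho_R)=\overline{Q}(\rho_R)$, but the base side of $\triangle_{\rho_{RL}}$ begins with the bridge on the right side of $\triangle_\rho$. Then $A(\rho_{RL})=A(\rho)$. So inheritors are not linked by right-child steps. The fact that only left children keep the left end bridge merely places the direct inheritors of a node $y$ on the spine $y_R,y_{RL},y_{RLL},\dots$; by itself it does not exclude branching.

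Your fallback also fails. An inherited $\nu$ pays $|A(\nu)|=|\overline{Q}(\rho)|$, which is not bounded by $|\overline{Q}(\nu)|$ plus $O(1)$ subpockets.

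The missing step is the paper's lowest-common-ancestor argument, and it uses the no-left-subtree property you already stated.
\begin{itemize}
\item Suppose $\mu_1,\mu_2$ both inherit $A(\rho)$ and neither is an ancestor of the other. Let $\mu_0$ be their lowest common ancestor, with $\mu_1$ in its left subtree and $\mu_2$ in its right subtree.
\item Inheritance pointers go from a node to its lowest ancestor that has it in its right subtree. So the chain from $\mu_2$ visits exactly the ancestors having $\mu_2$ in their right subtree, stopping at the first fresh one, which is $\rho$.
\item Hence both $\mu_1$ and $\mu_2$ lie in the right subtree of $\rho$, so $\mu_0$ lies strictly below $\rho$.
\item The chain from $\mu_2$ therefore passes through $\mu_0$ before reaching $\rho$, so $\mu_0$ is not fresh; it inherits.
\item Then $\pi_{\mu_0}$ has a left bridge, so $\mu_0$ has no left subtree, contradicting the position of $\mu_1$.
\end{itemize}
Equivalently: the first node on the spine $y_R,y_{RL},\dots$ that has a left bridge has no left child, so every node has at most one direct inheritor, and the inheritors of $\rho$ form a single descending chain.

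With this step, each fresh set is counted $O(\log m)$ times, and your final bound $O(M\log^2 n)$ follows.
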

\begin{proof}
We partition all nodes $\nu\in T_Q$ into two categories: (1) $A(\nu)=\overline{Q}(\nu)$; (2) $A(\nu)\neq \overline{Q}(\nu)$. For the first category, it suffices to bound $\sum_{\nu}|\overline{Q}(\nu)|$. Observe that $\overline{Q}(\nu)$ for all nodes $\nu$ at the same level of $T_Q$ are interior disjoint, and therefore, the sum of their sizes is $O(M)$. As $T_Q$ has $O(\log m)$ levels, we obtain $\sum_{\nu}|\overline{Q}(\nu)|=O(M\log m)$. We next analyze the second category. 

If $A(\nu)\neq \overline{Q}(\nu)$, then by our way of defining $A(\cdot)$ for the nodes of $T_Q$, $A(\nu)$ must be equal to $A(\nu')=\overline{Q}(\nu')$ for some ancestor $\nu'$ such that $\nu$ is in the right subtree of $\nu'$. Furthermore, $\nu$ cannot have a left subtree. To see this, since $A(\nu)\neq \overline{Q}(\nu)$, $\pi_{\nu}$ has a left bridge that is also on the right  side of $\triangle_{\nu''}$ for an ancestor $\nu''$ of $\nu$ such that $\nu$ is in the right subtree of $\nu''$. This implies that the left side of $\triangle_{\nu}$ cannot have a bridge, and thus $\nu$ cannot have a left child.   

Using the above property, we claim that in the right subtree of $\nu'$ there is at most one node $\mu$ in each level such that $A(\mu)=A(\nu')$. Indeed, suppose to the contrary that there are two nodes $\mu_1$ and $\mu_2$ in the same level with $A(\mu_1)=A(\mu_2)=A(\nu')$. Let $\mu_0$ be the lowest common ancestor of $\mu_1$ and $\mu_2$. Without loss of generality, we assume that $\mu_1$ is in the left subtree of $\mu_0$ and $\mu_2$ is in the right subtree of $\mu_0$. Since $A(\mu_1)=A(\mu_2)=A(\nu')$, according to our way of defining $A(\cdot)$, both $\mu_1$ and $\mu_2$ must be in the right subtree of $\nu'$. Therefore, $\mu_0$ is also in the right subtree of $\nu'$. Since $A(\mu_2)=A(\nu')$ and $\mu_2$ is in the right subtree of $\mu_0$, according to our way of defining $A(\cdot)$, $A(\mu_0)$ must also be $A(\nu')$ (this is because for any node $\mu\in T_Q$, unless $A(\mu)=\overline{Q}(\mu)$, $A(\mu)$ is always equal to $A(\mu')$ of the lowest ancestor $\mu'$ of $\mu$ with $\mu$ in the right subtree of $\mu'$). By the property discussed above, $\mu_0$ cannot have a left subtree. But this contradicts the fact that $\mu_1$ is in the left subtree of $\mu_0$. 
The claim thus follows. 

Again by our way of defining $A(\cdot)$, no node $\mu$ in the left subtree of $\nu'$ can have $A(\mu)=A(\nu')$, which is $\overline{Q}(\nu')$.

The above implies that $\overline{Q}(\nu')$ will be included in $\sum_{\nu} |A(\nu)|$ at most $O(\log m)$ times since the height of $T_Q$ is $O(\log m)$. Hence, $\sum_{\nu} |A_{\nu}|$ is no more than $O(\log n)$ times $\sum_{\nu} |\overline{Q}(\nu)|$. As $\sum_{\nu} |\overline{Q}(\nu)|=O(M\log m)$, we obtain $\sum_{\nu} |A_{\nu}|=O(M\log^2 m)$, which is $O(M\log^2 n)$ since $m\leq n$. The lemma thus follows. 
\end{proof}

\subsubsection{Summary}
\label{sec:summary}
In summary, we can construct a data structure for an open subpocket $Q$ of $\calQ$ in $O(M^2\log^4 n)$ space and $O(M^2\log^5 n)$ time such that for any query point $q$, if $q$ is in $\triangle_{\nu}$ for some node $\nu$ of $T_Q$, then $\vis(P_{v^*},q)$ can be computed in $O(\log n)$ time, and if $q$ is in $P_v$ for a child $v$ of $v^*$ that is associated with $\triangle_{v^*}$ or $\triangle_{\nu}$ for a node $\nu\in T_Q$, then $\vis(P_{v^*}\setminus P_v,q)$ can be computed in $O(\log n)$ time. 

We construct the above data structure for all $B_1$-bounded open subpockets $Q$ of $\calQ$. Since all subpockets of $\calQ$ are interior disjoint and each $B_2$-bounded subpocket can be a neighboring subpocket of at most two $B_1$-bounded subpockets (and the special subpocket $Q^*$ can be a neighboring subpocket of at most one subpocket of $\calQ$), the sum of $M$ for all such $Q$'s is $O(n)$. Hence, the data structures for all $B_1$-bounded subpockets can be constructed in $O(n^2\log^4 n)$ space and $O(n^2\log^5 n)$ time. Similarly, we construct the same data structures for all $B_2$-bounded open subpockets, which also takes $O(n^2\log^4 n)$ space and $O(n^2\log^5 n)$ time. After that, for any query point $q$, if $q$ is in the core $\core(v^*)$, then $\vis(P_{v^*},q)$ can be computed in $O(\log n)$ time, and if $q$ is in $P_v$ for a child $v$ of $v^*$, then $v$ must be associated with $\triangle_{v^*}$ or $\triangle_{\nu}$ for a node $\nu\in T_Q$ for some open subpocket $Q$ of $\calQ$ and thus $\vis(P_{v^*}\setminus P_v,q)$ can be computed in $O(\log n)$ time. 

This finishes the preprocessing for the root $v^*$ of the decomposition tree $\Psi$. The children of the root can then processed recursively. In general, for each node $v\in \Psi$, for each geodesic triangle $\triangle$ of the core $\core(v)$, we build the data structure of Lemma~\ref{lem:geotriangle} for $\triangle$ in $P_{v}$. This takes $O(n_v^2\log^4 n_v)$ space and $O(n_v^2\log^5 n_v)$ preprocessing time, where $n_v=|P_v|$. As the sizes of the subpolygons $P_v$ decrease geometrically along any root-to-leaf path and the subpolygons $P_v$ for all nodes $v$ at the same level of $\Psi$ are interior disjoint, the total space of the data structures for the entire $\Psi$ is $O(n^2\log^ 4n)$ and the total preprocessing time is $O(n^2\log^5 n)$.


In summary, after $O(n^2\log^4 n)$ space and $O(n^2\log^5 n)$ time preprocessing, given a query point $q$, if $q$ is in the core $\core(v)$ for a node $v$ of $\Psi$, then $\vis(P_v,q)$ can be computed in $O(\log n)$ time, and if $q$ is in $P_v$ of a child $v$ of a node $u$, then $\vis(P_u\setminus P_v,q)$ can be computed in $O(\log n)$ time. This proves Theorem~\ref{theo:treequeries}.

\section{A data structure of optimal query time}
\label{sec:Optimal_query_time}

In this section, we present our visibility query data structure of $O(n^{2+\epsilon},n^{2+\epsilon}, \log n)$ complexity, based on the new decomposition introduced in Section~\ref{sec:decomp}.  

\subparagraph{Preprocessing.}
We first construct the decomposition tree $\Psi$ and build the data structure of Theorem~\ref{theo:treequeries}.  
This requires $O(n^2 \log^4 n)$ space and $O(n^2 \log^5 n)$ preprocessing time.  

For each node $v \in \Psi$, recall that all edges of the subpolygon $P_v$ are edges of $P$, except for the gate $d_v$.  
For simplicity, let $|P_v|$ denote the number of edges of $P$ that belong to $P_v$.  

Let $r$ be a parameter with $1 \le r \le n$.  
A node $v$ of $\Psi$ is called a {\em border node} if $|P_{v_p}| \ge n/r$ and $|P_v| < n/r$, where $v_p$ denotes the parent of $v$. By definition, each leaf-to-root path in $\Psi$ contains at most one border node.  
This further implies that the subpolygons $P_v$ corresponding to all border nodes $v$ are edge-disjoint.  

The border nodes, together with all their ancestors in $\Psi$, form a subtree $\Psi_b$ of $\Psi$.  
Note that $\Psi_b$ includes the root and has all border nodes as its leaves (hence, the border nodes indeed form the ``border'' of $\Psi_b$).  
We use $B(\Psi)$ to denote the set of all border nodes of $\Psi$.  
We further have the following lemma.

\begin{lemma}\label{lem:internalnumber}
The number of internal nodes of $\Psi_b$ is $O(r)$.  
\end{lemma}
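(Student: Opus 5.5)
The plan is to identify the internal nodes of $\Psi_b$ with the ``heavy'' nodes of $\Psi$ and to count them by a charging argument on the edges of $P$. A per-level count is not enough: it gives only $O(r\log n)$, so the charging has to absorb long chains of heavy nodes.

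\emph{Step 1: internal nodes are heavy.} Call a node $v$ heavy if $|P_v|\ge n/r$. Every internal node $w$ of $\Psi_b$ is an ancestor of some border node $v$, so $w$ is $v_p$ or an ancestor of $v_p$. Since $P_{v_p}\subseteq P_w$, we get $|P_w|\ge |P_{v_p}|\ge n/r$. Hence it suffices to bound the number of heavy nodes. The heavy nodes form a subtree $H$ of $\Psi$ containing the root, because $|P_\cdot|$ is monotone along root-to-leaf paths.

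\emph{Step 2: classify the nodes of $H$.} Each node of $H$ has zero, one, or at least two children in $H$.
\begin{itemize}
\item Leaves of $H$: no two of them are in an ancestor--descendant relation, so their subpolygons are edge-disjoint (same reasoning as for border nodes). Each has at least $n/r$ edges of $P$, so there are at most $r$ of them.
\item Nodes with at least two children in $H$: their number is less than the number of leaves of $H$, hence at most $r$.
\end{itemize}

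\emph{Step 3: nodes with exactly one heavy child (the main obstacle).} Such nodes can form long chains. For each node $v\in H$, let $E(v)$ be the set of edges of $P$ lying in $P_v$ but not in $P_u$ for any child $u$ of $v$ that lies in $H$. These sets are pairwise disjoint over $v\in H$: an edge of $P$ lies in a nested chain of subpolygons, and it belongs to $E(v)$ only for the lowest node $v$ of $H$ on that chain.

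If $v$ has exactly one heavy child $u$, property (3) of the decomposition gives $|P_u|\le |P_v|/2$. Therefore
\[
|E(v)|\ \ge\ |P_v|-|P_u|\ \ge\ |P_v|/2\ \ge\ n/(2r).
\]
Disjointness then shows there are at most $2r$ such nodes. The points to check carefully are that $|P_\cdot|$ counts only edges of $P$, excluding the gate, and that the bound $|P_u|\le|P_v|/2$ holds with that convention.

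Adding the three categories gives at most $r+r+2r=O(r)$ heavy nodes, which proves the lemma.
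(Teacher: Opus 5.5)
Your proof is correct, but the counting differs from the paper's. The paper also starts from the fact that every internal node of $\Psi_b$ has $|P_v|\ge n/r$, your Step~1. It then groups these nodes into dyadic size classes $V_m=\{v : m/2<|P_v|\le m\}$ for $m=n,n/2,\dots,n/r$. The halving property $|P_u|\le|P_v|/2$ makes each class an antichain in $\Psi$, so the subpolygons in a class are edge-disjoint and $|V_m|\le 2n/m$. The geometric sum $2+4+\cdots+2r$ then gives $O(r)$.

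You instead split the heavy subtree $H$ by shape: leaves are bounded by edge-disjointness, branching nodes by the number of leaves, and unary nodes by charging each to its private edge set $E(v)$.

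\begin{itemize}
\item The paper's version is shorter and handles every node with one uniform sum, but it needs the size drop along every parent--child edge to get the antichain property.
\item Your version uses the size drop only at nodes with exactly one heavy child, and only in the weak form $|P_v|-|P_u|\ge c\,|P_v|$. Leaves and branching nodes are counted purely combinatorially, so your argument would survive a weaker shrinkage guarantee. It is the standard compress-the-unary-chains argument.
\end{itemize}

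On the convention you flag: the paper asserts $|P_u|\le|P_v|/2$ for every child $u$ of $v$ and uses it in exactly the same way in its own proof. Your argument therefore rests on the same footing.
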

\begin{proof}
Let $\Psi_b'$ denote the subtree of $\Psi_b$ obtained by removing all its leaves.  
Our goal is to show that $\Psi_b'$ contains $O(r)$ nodes.  

By definition, $|P_v| \ge n/r$ holds for every node $v$ of $\Psi_b'$.  
Since the edges of $P_v$ for all leaves $v \in \Psi_b'$ are disjoint, $\Psi_b'$ has $O(r)$ leaves.  
According to our decomposition, $|P_v| \le |P_{v_p}|/2$ holds for every non-root node $v \in \Psi$.  

For any value $m$, let $V_m$ denote the set of nodes $v \in \Psi_b'$ satisfying $m/2 < |P_v| \le m$.  
For any two nodes $u, v \in V_m$, it is impossible for one to be an ancestor of the other.  
Indeed, suppose to the contrary that $u$ is an ancestor of $v$.  
Then we would have $|P_u| \le |P_v|/2 \le m/2$, contradicting the condition $m/2 < |P_u|$.  
Hence, $P_u$ and $P_v$ are interior disjoint, which implies that $|V_m| \le 2n/m$.  

The total number of nodes in $\Psi_b'$ is therefore bounded by
\[
|V_n| + |V_{n/2}| + |V_{n/4}| + \cdots + |V_{n/r}| 
    \le 2 + 4 + 8 + \cdots + 2r = O(r).
\]
The lemma thus follows. 
\end{proof}

We further perform the following preprocessing.  
For each internal node $v \in \Psi_b$, we construct a data structure $\calD_v$ with respect to the gate $d_v$ such that, for any point $q \in P_v$, $\vis(P \setminus P_v, q)$ can be computed in $O(\log n)$ time.  
This corresponds to a diagonal-separated subproblem and requires $O(n^2)$ space and $O(n^2 \log n)$ preprocessing time~\cite{ref:AronovVi02}, as described in Section~\ref{sec:diasep}.  
Since $\Psi_b$ has $O(r)$ internal nodes by Lemma~\ref{lem:internalnumber}, constructing the data structures for all these nodes takes $O(n^2 r)$ space and $O(n^2 r\log n)$ preprocessing time in total.  

For each leaf $v \in \Psi_b$, we recursively preprocess $P_v$ and the subtree of $\Psi$ rooted at $v$.  
This completes the preprocessing phase.  

We now analyze the overall space.  
Let $S(n)$ denote the total space of the preprocessing (excluding the data structure of Theorem~\ref{theo:treequeries}).  
We have the following recurrence:
\[
S(n) = \sum_{v \in B(\Psi)} S(|P_v|) + O(n^2 r).
\]
Since $|P_v| < n/r$ for every node $v \in B(\Psi)$, and the subpolygons $P_v$ for all $v \in B(\Psi)$ are edge-disjoint (implying that $\sum_{v \in B(\Psi)} |P_v| \le n$),  
by setting $r = O(n^{\epsilon})$ for any constant $\epsilon > 0$, the recurrence solves to $S(n) = O(n^{2+\epsilon})$.  

Similarly, the preprocessing time is also $O(n^{2+\epsilon})$. Including the data structure of Theorem~\ref{theo:treequeries}, the total preprocessing requires $O(n^{2+\epsilon})$ space and time.

\subparagraph{Queries.}

Recall that the data structure of Theorem~\ref{theo:treequeries} includes a point location structure for the cores $\core(v)$ of the nodes $v \in \Psi$, which together form a decomposition of $P$.  

Given a query point $q$, we first perform a point location query to determine the node $v$ of $\Psi$ whose core $\core(v)$ contains $q$.  
Then, by Theorem~\ref{theo:treequeries}, we compute $\vis(P_v, q)$ in $O(\log n)$ time.  
Depending on whether $v$ is an internal node of $\Psi_b$, we distinguish two cases.  

\begin{itemize}
    \item 
If $v$ is an internal node of $\Psi_b$, then using the data structure $\calD_v$, we compute $\vis(P \setminus P_v, q)$ in $O(\log n)$ time.  
Finally, by Lemma~\ref{lem:merge}, we merge $\vis(P_v, q)$ and $\vis(P \setminus P_v, q)$ in $O(\log n)$ time to obtain $\vis(q)$.  
Thus, in this case, computing $\vis(q)$ takes $O(\log n)$ time in total.  

\item
If $v$ is not an internal node of $\Psi_b$, then by following the path from $v$ to the root of $\Psi$, we find a border node $v'$.  
Using the recursively constructed data structure for $P_{v'}$, we compute $\vis(P_{v'}, q)$ recursively.  
Let $u$ be the parent of $v'$.  
By definition, $u$ is an internal node of $\Psi_b$.  
By Theorem~\ref{theo:treequeries}, we compute $\vis(P_u \setminus P_{v'}, q)$ in $O(\log n)$ time, and by Lemma~\ref{lem:merge}, we merge $\vis(P_{v'}, q)$ and $\vis(P_u \setminus P_{v'}, q)$ in $O(\log n)$ time to obtain $\vis(P_u, q)$.  
Finally, using the data structure $\calD_u$, we compute $\vis(P \setminus P_u, q)$ in $O(\log n)$ time and merge it with $\vis(P_u, q)$, again in $O(\log n)$ time, to obtain $\vis(q)$.  

Let $Q(n)$ denote the query time.  
Since $|P_{v'}| < n/r$, the recursive computation of $\vis(P_{v'}, q)$ takes $Q(n/r)$ time.  
Hence, the recurrence relation is
\[
Q(n) = Q(n/r) + O(\log n),
\]
which solves to $Q(n) = O(\log n)$ for $r = n^{\epsilon}$, where $\epsilon$ is a constant.  
\end{itemize}

The following theorem summarizes our result.

\begin{theorem}\label{theo:optquery}
Given a simple polygon $P$ with $n$ vertices, a data structure of size $O(n^{2+\epsilon})$ can be constructed in $O(n^{2+\epsilon})$ time such that the visibility polygon $\vis(q)$ can be computed in $O(\log n)$ time for any query point $q \in P$.
\end{theorem}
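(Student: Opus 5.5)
The theorem collects the preprocessing and query analysis given just above it. The plan is to check three things in turn: correctness of the query, the space/time recurrence, and the query-time recurrence, all with $r=n^{\epsilon'}$ for a suitably small constant $\epsilon'$.

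\emph{Correctness.} It rests on the gate property. By Lemma~\ref{lem:gate} and the three key properties, each gate $d_v$ separates $P_v$ from $P\setminus P_v$. Hence $\vis(q)$ is the union of $\vis(P_v,q)$ and $\vis(P\setminus P_v,q)$, glued along $\vis(d_v,q)$, and Lemma~\ref{lem:merge} performs each merge in $O(\log n)$ time.

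\emph{Queries.} A point location in the core decomposition finds the node $v$ with $q\in\core(v)$. Theorem~\ref{theo:treequeries} then gives $\vis(P_v,q)$.
\begin{itemize}
\item If $v$ is an internal node of $\Psi_b$, the structure $\calD_v$ (the quadratic diagonal-separated structure of Section~\ref{sec:diasep}, applied with respect to $d_v$) gives $\vis(P\setminus P_v,q)$, and one merge finishes.
\item Otherwise, let $v'$ be the unique border node on the path from $v$ to the root, and let $u$ be its parent. Recurse in the structure built for $P_{v'}$, a polygon of size $<n/r$, to get $\vis(P_{v'},q)$. Use Theorem~\ref{theo:treequeries} to get $\vis(P_u\setminus P_{v'},q)$ and merge, which yields $\vis(P_u,q)$. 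Use $\calD_u$ to get $\vis(P\setminus P_u,q)$ and merge again.
\end{itemize}
One detail: the recursive structure for $P_{v'}$ treats $d_{v'}$ as the designated root edge. Its own decomposition tree, its own Theorem~\ref{theo:treequeries} structure and its own point location must therefore be built for $P_{v'}$; I would include these in the recursive preprocessing.

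\emph{Space and time.} By Lemma~\ref{lem:internalnumber}, $\Psi_b$ has $O(r)$ internal nodes, each costing $O(n^2)$ space. This gives
\[
S(n)=\sum_{v\in B(\Psi)}S(n_v)+O(n^2 r)+O(n^2\log^4 n),\qquad \sum n_v\le n,\quad n_v<n/r .
\]
By convexity, $\sum n_v^{2+\epsilon}\le (n/r)^{1+\epsilon}\,n$. Guess $S(n)\le c\,n^{2+\epsilon}$. The induction step needs $c\,n^{2+\epsilon}r^{-(1+\epsilon)}+O(n^{2+\epsilon'}+n^2\log^4 n)\le c\,n^{2+\epsilon}$, which holds once $\epsilon'<\epsilon$ and $c$ is large. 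The depth is $O(1/\epsilon')$, so no blowup accumulates. The preprocessing time is identical, with an extra $\log n$ factor absorbed into $n^{\epsilon}$.

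\emph{Query time.} The recurrence is $Q(n)=Q(n/r)+O(\log n)$. With $r=n^{\epsilon'}$ the sizes go $n, n^{1-\epsilon'}, n^{(1-\epsilon')^2},\dots$, so the $\log$ terms form a geometric series summing to $O(\log n/\epsilon')=O(\log n)$. Stop the recursion at constant size. Outputting $\vis(q)$ then costs an additional $O(k)$.

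\emph{Main obstacle.} The most delicate part is the second query case: confirming that $\vis(P_u\setminus P_{v'},q)$ and $\vis(P\setminus P_u,q)$ are exactly what Theorem~\ref{theo:treequeries} and $\calD_u$ provide for $q\in P_{v'}\subseteq P_u$. This uses the separation of $P_u$ from $P\setminus P_u$ by $d_u$ and $d_u\not\subseteq P_{v'}$, as in Lemma~\ref{lem:gate}. Each merge is then legitimate because the relevant visible portion of the gate is an edge of the already-computed partial visibility polygon.
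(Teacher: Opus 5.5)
Your proposal is correct and is essentially the paper's proof. It uses the same border nodes and subtree $\Psi_b$, the same quadratic diagonal-separated structures $\calD_v$ on its $O(r)$ internal nodes, the same recursion into the border subpolygons, the same two query cases, and the same recurrences with $r=n^{\epsilon}$. Rebuilding a decomposition for $P_{v'}$ is unnecessary: the paper reuses the subtree of $\Psi$ rooted at $v'$, which already is the decomposition of $P_{v'}$ with respect to $d_{v'}$. Your remark that the recursion depth is $O(1/\epsilon')$ holds only if $r$ is fixed from the original $n$. With $r$ recomputed per subproblem, as in your query analysis, the depth can be $\Theta(\log\log n/\epsilon')$, which is harmless because your inductive space bound never uses the depth.

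One detail you share with the paper: a node with $|P_v|\ge n/r$ need not have any border descendant (e.g.\ a leaf of $\Psi$ whose core is a single large geodesic triangle), so the border node $v'$ you invoke may not exist. Building $\calD_v$ for every node with $|P_v|\ge n/r$ fixes this, and the counting in the proof of Lemma~\ref{lem:internalnumber} already bounds the number of such nodes by $O(r)$.
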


\subparagraph{Remark.}
One may wonder whether the balanced decomposition $\calT(P)$ from Section~\ref{sec:bdscheme} could replace our decomposition $\Psi$ to achieve the same result.  
This, however, appears challenging because $\calT(P)$ does not satisfy the gate property.  
Indeed, each subpolygon in $\calT(P)$ may connect to the ``outside world'' through up to $O(\log n)$ diagonals.  
This limitation is precisely why we introduce a new decomposition for $P$.


\section{A quadratic-space data structure}
\label{sec:A_Quadratic_Space_Structure}

In this section, we present our data structure of complexity $O(n^2\log n,n^2,\log n\log\log n)$. 
We will first give a data structure of $O(n^2\log^5n,n^2\log^4 n,\log n\log\log n)$ using our decomposition $\Psi$. By integrating it with the framework in Section~\ref{sec:bdscheme}, we will reduce the space to $O(n^2)$ and reduce the preprocessing time to $O(n^2\log n)$, while keeping the same query time. 

In the preprocessing phrase, we compute the decomposition tree $\Psi$ and the data structure of Theorem~\ref{theo:treequeries}. This takes $O(n^2\log^4 n)$ space and $O(n^2\log^5n)$ preprocessing time.

\subparagraph{Super-levels.}
We partition $\Psi$ into $O(\log\log n)$ {\em super-levels}.  
Each super-level consists of a collection of subtrees of $\Psi$, and we refer to each such subtree as a {\em component subtree} of the super-level.  
The $i$-th super-level $\Psi_i$, with $1 \le i \le O(\log\log n)$, is defined as follows.  

We assume that the set $R_i$ of the roots of all component subtrees of $\Psi_i$ is already known and that the following {\em algorithm invariants} hold for $R_i$:  
(1) $|P_v| \le 2n / 2^{2^i}$ for every node $v \in R_i$;  
(2) the subpolygons $P_v$ for all $v \in R_i$ are edge-disjoint; and  
(3) for each node $v \in R_i$ with $i > 0$, its parent lies in the $(i-1)$-th super-level.  
Initially, when $i = 0$, the set $R_0$ consists of a single node that is the root of $\Psi$, and thus all invariants hold for $R_0$.  

We now define $\Psi_i$.  
Consider a node $v \in R_i$.  
Let $\Psi(v)$ denote the subtree of $\Psi$ rooted at $v$.  
We borrow several concepts from Section~\ref{sec:Optimal_query_time}.  
A non-root node $u$ of $\Psi(v)$ is called a {\em border node} of $\Psi(v)$ if $|P_{u_p}| \ge |P_v| / 2^{2^i}$ and $|P_u| < |P_v| / 2^{2^i}$, where $u_p$ is the parent of $u$.  
The border nodes, together with all their ancestors in $\Psi(v)$, form a subtree $\Psi_b(v)$ of $\Psi(v)$.  
Let $\Psi_i(v)$ denote the subtree of $\Psi_b(v)$ obtained by removing all its leaves, and let $B_v$ denote the set of border nodes of $\Psi(v)$.  
We then define
\[
\Psi_i = \bigcup_{v \in R_i} \Psi_i(v)
\quad \text{and} \quad
R_{i+1} = \bigcup_{v \in R_i} B_v.
\]

We have the following observation.  
\begin{observation}\label{obser:invariants}
All algorithm invariants hold for $R_{i+1}$.  
\end{observation}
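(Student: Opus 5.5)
We check the three invariants for $R_{i+1}=\bigcup_{v\in R_i}B_v$ one at a time. Each follows from the definitions of border nodes together with the invariants for $R_i$, so the proof is an induction on $i$ with base case $R_0$, which was already observed.

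\emph{Invariant (1).} Let $u\in B_v$ for some $v\in R_i$. By the definition of a border node, $|P_u|<|P_v|/2^{2^i}$. Invariant (1) for $R_i$ gives $|P_v|\le 2n/2^{2^i}$. Hence
\[
|P_u| < \frac{2n}{2^{2^i}\cdot 2^{2^i}} = \frac{2n}{2^{2^{i+1}}},
\]
which is exactly invariant (1) for $R_{i+1}$. This is the only step with any arithmetic, and it works out because the squaring $2^{2^i}\cdot 2^{2^i}=2^{2^{i+1}}$ matches the threshold used in defining border nodes.

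\emph{Invariant (2).} Take two distinct nodes $u,u'\in R_{i+1}$. First suppose they come from the same $v\in R_i$. Each root-to-leaf path of $\Psi(v)$ contains at most one border node, since the sizes $|P_\cdot|$ are monotonically nonincreasing along a path and a border node is the first node where the size drops below the threshold. So neither of $u,u'$ is an ancestor of the other. By the decomposition, $P_u$ and $P_{u'}$ are then disjoint parts of a partition of the lowest common ancestor's subpolygon (children's subpolygons and the core are interior-disjoint), so they are edge-disjoint apart from gates, and gates are not edges of $P$. Now suppose they come from different $v,v'\in R_i$. Then $P_u\subseteq P_v$ and $P_{u'}\subseteq P_{v'}$, and these are edge-disjoint by invariant (2) for $R_i$.

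\emph{Invariant (3).} The parent $u_p$ of a border node $u\in B_v$ satisfies $|P_{u_p}|\ge |P_v|/2^{2^i}$, so $u_p$ is a non-leaf node of $\Psi_b(v)$ and therefore lies in $\Psi_i(v)\subseteq\Psi_i$. One detail needs care: a border node is by definition a non-root node of $\Psi(v)$, so $u_p$ exists inside $\Psi(v)$. A degenerate case arises if $|P_v|<|P_v|/2^{2^i}$ were possible, but it is not, so every border node has its parent inside $\Psi(v)$. The main obstacle is really just the edge-disjointness argument in invariant (2), which relies on the at-most-one-border-node-per-path property exactly as in Section~\ref{sec:Optimal_query_time}.
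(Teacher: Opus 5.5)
Your proof is correct and follows the paper's own argument. Invariant (1) comes from multiplying the bounds $|P_u|<|P_v|/2^{2^i}$ and $|P_v|\le 2n/2^{2^i}$. Invariant (2) comes from the at-most-one-border-node-per-path property within each $\Psi(v)$, combined with edge-disjointness of the $P_v$ for $v\in R_i$. Invariant (3) holds because the parent of a border node is an internal node of $\Psi_b(v)$. Your extra details, namely the lowest-common-ancestor argument and the check that border nodes are non-root, spell out what the paper states as ``by definition.''
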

\begin{proof}
Consider a node $u\in B_v$ for some $v\in R_i$. By definition, $|P_u|\leq |P_v|/2^{2^i}$. Since $|P_v|\leq 2n/2^{2^i}$, we obtain $|P_u|\leq 2n/2^{2^{i+1}}$. This proves the first invariant. 

For the second invariant, by definition, $P_u$ of all nodes $u\in B_v$ are edge disjoint (this is because the path from any leaf to the root of $\Psi(v)$ has exactly one border node). Since $P_v$ for all nodes $v\in R_i$ are edge disjoint, we obtain that $P_u$ for all nodes $u\in R_{i+1}$ are edge disjoint. This proves the second invariant. 

By definition, the parent of each node of $R_{i+1}$ is in the $i$-th super-level $\Psi_i$. Therefore, the third invariant also holds. 
\end{proof}

Finally, note that if $|P_v| = O(2^{2^i})$ for any $v \in R_i$, we simply set $\Psi_i(v) = \Psi(v)$, which marks the last super-level.  
Since $|P_v| \le 2n / 2^{2^i}$, this occurs for $i = O(\log\log n)$, implying that the total number of super-levels is $O(\log\log n)$.

\subparagraph{Preprocessing.}
With the super-levels defined above, we perform the following preprocessing.  

Consider the $i$-th super-level $\Psi_i$.  
For each node $v \in R_i$ and each node $u \in \Psi_i(v)$, note that $v$ is an ancestor of $u$ in $\Psi$.  
By the gate property, $P_u$ is separated from $P_v \setminus P_u$ by the gate $d_u$ of $u$.  
We construct a data structure $\calD_u(v)$ for $P_u$ with respect to $d_u$ in $P_v$, such that, for any query point $q \in P_u$, $\vis(P_v \setminus P_u, q)$ can be computed in $O(\log n)$ time.  
This is a diagonal-separated subproblem and such a data structure can be constructed in $O(|P_v|^2)$ space and $O(|P_v|^2 \log n)$ preprocessing time~\cite{ref:AronovVi02}, as described in Section~\ref{sec:diasep}.  

We now analyze the total space required for the data structures in $\Psi_i$.  
By the same argument as in Lemma~\ref{lem:internalnumber}, we have $|\Psi_i(v)| = O(2^{2^i})$ since $\Psi_i(v)$ consists of the internal nodes of $\Psi_b(v)$.  
Hence, the total space for all data structures $\calD_u(v)$, over all nodes $u \in \Psi_i(v)$ and all $v \in R_i$, is
$O(\sum_{v \in R_i} |P_v|^2 \cdot 2^{2^i})$.
Because the subpolygons $P_v$ for all $v \in R_i$ are edge-disjoint, we have $\sum_{v \in R_i} |P_v| = O(n)$.  
Moreover, since $|P_v| \le 2n / 2^{2^i}$, it follows that
$\sum_{v\in R_i}|P_v|^2=O(n^2/2^{2^i})$.
Therefore, the total space of all data structures in the $i$-th super-level is $O(n^2)$, and the total preprocessing time is $O(n^2 \log n)$.  

Hence, building the data structures for all $O(\log\log n)$ super-levels takes $O(n^2 \log\log n)$ space and $O(n^2 \log n \log\log n)$ preprocessing time.  
Including the preprocessing for Theorem~\ref{theo:treequeries}, the total space becomes $O(n^2 \log^4 n)$ and the total preprocessing time is $O(n^2 \log^5 n)$.


\subparagraph{Queries.}
Given a query point $q$, by a point location query, we find the node $u$ of $\Psi$ whose core $\core(u)$ contains $q$.  
We then compute $\vis(P_u, q)$ in $O(\log n)$ time using Theorem~\ref{theo:treequeries}.  

Assume that $u$ belongs to the $i$-th super-level for some $i$, i.e., $u \in \Psi_i$.  
Let $v$ be the node in $R_i$ such that $u \in \Psi_i(v)$.  
Using the data structure $\calD_u(v)$, we compute $\vis(P_v \setminus P_u, q)$ in $O(\log n)$ time.  
Since $P_u$ is separated from $P_v \setminus P_u$ by the gate $d_u$, we can obtain $\vis(P_v, q)$ by merging $\vis(P_u, q)$ and $\vis(P_v \setminus P_u, q)$ in $O(\log n)$ time using Lemma~\ref{lem:merge}.  

If $i = 0$, then $P_v = P$, and thus $\vis(q) = \vis(P_v, q)$ is already computed.  
Otherwise, let $u'$ be the parent of $v$.  
By the algorithm invariants, $u'$ lies in the $(i-1)$-th super-level.  
By Theorem~\ref{theo:treequeries}, we can compute $\vis(P_{u'} \setminus P_v, q)$ in $O(\log n)$ time.  
Since $P_v$ is separated from $P_{u'} \setminus P_v$ by the gate $d_v$, we can merge $\vis(P_v, q)$ and $\vis(P_{u'} \setminus P_v, q)$ in $O(\log n)$ time to obtain $\vis(P_{u'}, q)$.  
As $u'$ lies in the $(i-1)$-th super-level, we repeat the same procedure until reaching the $0$-th super-level, after which $\vis(q)$ is obtained.  

Since there are $O(\log\log n)$ super-levels and each super-level requires $O(\log n)$ time, the total query time is $O(\log n \log\log n)$.  
For reference, we summarize the result below.

\begin{lemma}\label{lem:prealgo}
We can construct a data structure of size $O(n^2 \log^4 n)$ in $O(n^2 \log^5 n)$ time for $P$ such that $\vis(q)$ can be computed in $O(\log n \log\log n)$ time for any query point $q \in P$.  
\end{lemma}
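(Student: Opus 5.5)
The statement of Lemma~\ref{lem:prealgo} collects what the preceding construction and query procedure give. The plan is to verify three things in turn: correctness of the query algorithm, the space and preprocessing bounds, and the query time.

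\textbf{Correctness.} The argument rests on the gate property from Lemma~\ref{lem:gate}. For an ancestor $v$ of a node $u$, the gate $d_u$ separates $P_u$ from $P_v\setminus P_u$. Hence for $q\in P_u$ every visible point of $P_v\setminus P_u$ is seen through $d_u$, and the merge of Lemma~\ref{lem:merge} applies to the pair $\vis(P_u,q)$ and $\vis(P_v\setminus P_u,q)$.

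I would argue by induction over super-levels, from the level $i$ containing the node $u$ with $q\in\core(u)$ down to level $0$. The invariant is that after processing level $i$ we hold a balanced tree storing $\vis(P_v,q)$ for the component root $v\in R_i$ with $u\in\Psi_i(v)$. The first step uses Theorem~\ref{theo:treequeries} to get $\vis(P_u,q)$, then $\calD_u(v)$ and a merge. For the step to level $i-1$, invariant (3) of Observation~\ref{obser:invariants} places the parent $u'$ of $v$ in $\Psi_{i-1}$. Theorem~\ref{theo:treequeries} (second case, since $q\in P_v$ with $v$ a child of $u'$) then supplies $\vis(P_{u'}\setminus P_v,q)$, and merging gives $\vis(P_{u'},q)$. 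At that point $u'$ plays the role of $u$ at level $i-1$, and we apply $\calD_{u'}(v')$ for its component root $v'$.

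\textbf{Space and time.} Theorem~\ref{theo:treequeries} costs $O(n^2\log^4 n)$ space and $O(n^2\log^5 n)$ time. For each super-level I would show the $\calD_u(v)$ structures cost $O(n^2)$ space. This needs $|\Psi_i(v)|=O(2^{2^i})$, which repeats the geometric level-counting argument of Lemma~\ref{lem:internalnumber} with ratio $2^{2^i}$ in place of $r$. It also needs $\sum_{v\in R_i}|P_v|^2\le \max_v|P_v|\cdot\sum_v|P_v|=O(n\cdot n/2^{2^i})$, using invariants (1) and (2). The $2^{2^i}$ factors cancel, giving $O(n^2)$ per level. Summed over the $O(\log\log n)$ super-levels this is dominated by the Theorem~\ref{theo:treequeries} cost.

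The number of super-levels is $O(\log\log n)$ because $|P_v|\le 2n/2^{2^i}$ is $O(2^{2^i})$ once $2^{2^{i}}\ge\sqrt n$, i.e.\ at $i=O(\log\log n)$. For the final super-level, where $\Psi_i(v)=\Psi(v)$, the component has size $O(|P_v|)$ and $|P_v|=O(2^{2^i})$. The bound is therefore again $\sum|P_v|^2\cdot|P_v|$, which stays $O(n^2)$ because $|P_v|^2=O(n)$ in that regime.

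\textbf{Query time.} Each super-level does $O(1)$ operations: one query to Theorem~\ref{theo:treequeries}, one query to a $\calD$ structure, and one or two merges, each costing $O(\log n)$. Adding the initial point location, the total over $O(\log\log n)$ levels is $O(\log n\log\log n)$.

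The main obstacle I expect is the last super-level: checking that the size-cutoff rule yields both $O(\log\log n)$ levels and the $O(n^2)$ space bound there. I would handle it by stopping when $|P_v|\le\sqrt n$, which makes both bounds immediate.
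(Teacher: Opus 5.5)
Your proposal is correct and follows the paper's own argument. It uses the same structures $\calD_u(v)$ per super-level, the same $O(n^2)$ per-super-level bound from $|\Psi_i(v)|=O(2^{2^i})$ together with $\sum_{v\in R_i}|P_v|^2=O(n^2/2^{2^i})$, and the same $O(\log n)$ work per super-level: one query to Theorem~\ref{theo:treequeries}, one to a $\calD$ structure, and merges by Lemma~\ref{lem:merge}. Your explicit check that the last super-level also costs $O(n^2)$, since $|P_v|^2=O(n)$ there, spells out a step the paper leaves implicit.

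One point you inherit from the paper's definitions rather than from your own argument. Taken literally, $\Psi_b(v)$ omits every node of $\Psi(v)$ of size at least $|P_v|/2^{2^i}$ that has no border-node descendant. An example is a node $w$ with $P_w$ large and equal to the geodesic triangle $\triangle^*$ of its own decomposition, which is then a leaf of $\Psi$. Such a node lies in no super-level, so a query with $q$ in its core has nowhere to go. The clean fix is to let $\Psi_i(v)$ consist of all nodes of $\Psi(v)$ above the threshold, which the counting argument of Lemma~\ref{lem:internalnumber} still bounds by $O(2^{2^i})$.
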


We further reduce the preprocessing complexities in the following theorem. 

\begin{theorem}\label{theo:quadraticspace}
Given a simple polygon $P$ with $n$ vertices, a data structure of size $O(n^2)$ can be constructed in $O(n^2 \log n)$ time such that the visibility polygon $\vis(q)$ can be computed in $O(\log n \log\log n)$ time for any query point $q \in P$.  
\end{theorem}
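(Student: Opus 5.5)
We would combine Lemma~\ref{lem:prealgo} with the balanced-decomposition framework of Section~\ref{sec:bdscheme}. The point is that the extra polylogarithmic factors in Lemma~\ref{lem:prealgo} are paid only on small subpolygons, while the large subpolygons near the top of $\calT(P)$ are handled with the quadratic-space diagonal-separated structures of~\cite{ref:AronovVi02}. These structures cost $O(m^2)$ space for a piece of size $m$, so their total over a geometrically shrinking tree is $O(n^2)$.

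\textbf{Construction.} Build the BD-tree $\calT(P)$ and fix a threshold $t = n/\log^2 n$. Call a node $v$ of $\calT(P)$ \emph{upper} if $n_v = |P_v| > t$, and \emph{lower} otherwise.
\begin{itemize}
\item For every internal upper node $u$, build the quadratic-space diagonal-separated structure of Section~\ref{sec:diasep} for both sides of $d_u$. This costs $O(n_u^2)$ space and $O(n_u^2\log n)$ time.
\item For every maximal lower node $w$ (a lower node whose parent is upper), build the structure of Lemma~\ref{lem:prealgo} on the simple polygon $P_w$. This costs $O(n_w^2\log^4 n_w)$ space and $O(n_w^2\log^5 n_w)$ time.
\item Add a point location structure on the leaf triangles, augmented so that each triangle knows its maximal lower ancestor.
\end{itemize}

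\textbf{Space and time.} For the upper part, the recurrence $S(n)=S(m)+S(n-m+2)+O(n^2)$ with $n/3\le m\le 2n/3$ gives $O(n^2)$ space, and likewise $O(n^2\log n)$ time, exactly as in Theorem~\ref{theo:tradeoff}. The maximal lower nodes have edge-disjoint subpolygons, so $\sum_w n_w = O(n)$, and each has $n_w \le t$. Hence
\[
\sum_w n_w^2\log^4 n_w \le t\log^4 n \sum_w n_w = O(n^2\log^2 n),
\]
which is too big. So we instead choose $t = n/\log^5 n$. Then the space is $\sum_w n_w^2 \log^4 n = O(t\,n\log^4 n)=O(n^2/\log n)$ and the time is $O(t\,n\log^5 n) = O(n^2)$.

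\textbf{Query path.} With this threshold, every root-to-leaf path of $\calT(P)$ contains $O(\log\log n)$ upper nodes. Since $n_v$ shrinks by a constant factor per level, reaching size $t = n/\log^5 n$ takes $O(\log\log n)$ levels.

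\textbf{Query.} Locate the leaf triangle containing $q$ and its maximal lower ancestor $w$, in $O(\log n)$ time. Compute $\vis(P_w,q)$ with Lemma~\ref{lem:prealgo} in $O(\log n_w\log\log n_w)$ time. Then walk up the $O(\log\log n)$ upper ancestors. At each parent $u$ with child $v\ni q$ and sibling $w'$:
\begin{itemize}
\item compute $\vis(P_{w'},q)$ with the diagonal-separated structure of $d_u$ in $O(\log n)$ time;
\item merge it with $\vis(P_v,q)$ by Lemma~\ref{lem:merge} in $O(\log n)$ time.
\end{itemize}
The total query time is $O(\log n\log\log n)$, plus output size. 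As noted before, $\vis(P_{w'},q)$ is exactly a diagonal-separated subproblem with $q\in P_v$.

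\textbf{Main obstacle.} The main point to verify is the threshold choice. It must simultaneously absorb the $\log^4$ space and $\log^5$ time factors of Lemma~\ref{lem:prealgo} and keep the number of upper levels on any path at $O(\log\log n)$; any $t=n/\mathrm{polylog}(n)$ achieves both, so this works. A minor point is that Lemma~\ref{lem:prealgo} applied to $P_w$ treats $P_w$ as a standalone simple polygon whose diagonal boundary edges act as polygon edges. This is consistent, because visibility inside $P_w$ from $q\in P_w$ coincides with $\vis(P_w,q)$, and the boundary diagonal appears as an element that Lemma~\ref{lem:merge} later replaces.
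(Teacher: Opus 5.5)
Your proposal is correct and follows essentially the same route as the paper. Both put the quadratic-space diagonal-separated structures of \cite{ref:AronovVi02} on the top $O(\log\log n)$ levels of $\calT(P)$, build the structure of Lemma~\ref{lem:prealgo} on the pieces of size at most $n/\log^5 n$, and walk up $O(\log\log n)$ nodes of the BD-tree at query time. The paper cuts at a uniform depth rather than at your per-path size threshold, which changes nothing.

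One small correction to your closing remark: not every $t=n/\mathrm{polylog}(n)$ works, as your own $n/\log^2 n$ attempt showed. You need $t\le n/\log^{c}n$ for a constant $c\ge 4$ so that the $\log^4 n$ space factor is absorbed.
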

\begin{proof}
We follow the framework of Section~\ref{sec:bdscheme} to construct the balanced decomposition $\calT(P)$ and solve the diagonal-separated subproblems using the quadratic-space method of~\cite{ref:AronovVi02}, as reviewed in Section~\ref{sec:diasep}.  
This requires $O(n^2)$ space and $O(n^2 \log n)$ preprocessing time.  

Let $\calT_t(P)$ denote the subtree of $\calT(P)$ consisting of its top $t$ levels, where $t$ is the smallest integer such that the subpolygon $P_v$ at every node $v$ of the $t$-th level has size at most $n / \log^5 n$.  
By the definition of $\calT(P)$, we have $t = O(\log\log n)$.  

For each leaf $v \in \calT_t(P)$, we construct a data structure $\calD_v$ as in Lemma~\ref{lem:prealgo} for $P_v$.  
Each such data structure requires $O(|P_v|^2 \log^4 n)$ space and $O(|P_v|^2 \log^5 n)$ preprocessing time.  
Hence, the total preprocessing time for all data structures $\calD_v$ of the leaves $v \in \calT_t(P)$ is
$O(\sum_{v\in L}|P_v|^2\log^5 n)$,
where $L$ denotes the set of leaves of $\calT_t(P)$.  
Since the subpolygons $P_v$ for all leaves $v \in L$ form a partition of $P$, we have $\sum_{v \in L} |P_v| = O(n)$.  
Furthermore, because $|P_v| \le n / \log^5 n$ for all $v \in L$, it follows that
$\sum_{v\in L}|P_v|^2=O(n^2/\log^5 n)$.
Thus, the total preprocessing time of all $\calD_v$ is $O(n^2)$, and the total space is also $O(n^2)$.  

This completes the preprocessing, which requires $O(n^2)$ space and $O(n^2 \log n)$ time in total.  

For a query point $q \in P$, we first locate the leaf node $v \in L$ such that $P_v$ contains $q$, which takes $O(\log n)$ time using a point location query.  
Then, by Lemma~\ref{lem:prealgo}, we compute $\vis(P_v, q)$ in $O(\log n \log\log n)$ time using $\calD_v$.  
Next, as described in Section~\ref{sec:bdscheme}, we use the diagonal-separated data structure to compute $\vis(P_u, q)$ in $O(\log n)$ time for the parent $u$ of $v$.  
In general, by following the path from $v$ to the root in $\calT(P)$, $\vis(q)$ can be computed in an additional $O(\log n \log\log n)$ time since the path has $O(\log\log n)$ nodes.  
Therefore, the total query time for computing $\vis(q)$ is $O(\log n \log\log n)$.  
\end{proof}



\section{The boundary case}
\label{sec:Boundary_queries_only}

In this section, we present our results for the boundary case in which the query point $q$ is required to be on the boundary of $P$. We give two data structures, one based on the algorithmic framework in Section~\ref{sec:bdscheme}, and the other based on our new polygon decomposition $\Psi$ and following a similar framework to Section~\ref{sec:Optimal_query_time}. Both results rely on solving a diagonal-separated subproblem which we discuss first in the following. 

\subsection{A diagonal-separated subproblem}
\label{sec:boundarydiagonal}
Let $d$ be a diagonal that divides $P$ into two subpolygons $P_\ell$ and $P_r$. Without loss of generality, we assume that $d$ is vertical and $P_\ell$ is locally left of $d$. 
Let $\partial P_r=P_r\cap \partial P$, and similarly, let $\partial P_\ell=P_\ell\cap \partial P$. The diagonal-separated subproblem is to compute $\vis(P_\ell,q)$ for any query point $q\in \partial P_r$. 

Note that the two endpoints of $d$ are also the endpoints of $\partial P_r$. We will consider $\partial P_r$ a one-dimensional domain. 
For each element $w$ (i.e., a vertex or an edge) of $P_\ell$, if the cone $C_w$ (with respect to $d$, as defined in Section~\ref{sec:diasep}) is not empty, then its two bounding rays will shoot inside $P_r$.  
These two rays will hit two points of $\partial P_r$ and we let $I_w$ denote the interval of $\partial P_r$ between the two hit points. The observation is that for any $q\in \partial P_r$, $q$ is visible to $w$ only if $q\in I_w$. 

Consider a point $q\in \partial P_r$. Let $W_q$ be the set of elements $w$ of $P_\ell$ whose intervals $I_w$ contain $q$. We sort the elements of $W_q$ by their index order and store them in a binary search tree $T_q$. 
With $T_q$, in light of Observation~\ref{obs:boundaryorder},  $\vis(P_\ell,q)$ consists of the elements of $T_q$ clipped by the cone $C_q(s)$, where $s=\vis(d,q)$. More specifically, let $w_1(q)$ and $w_2(q)$ be the two points of $\partial P_\ell$ hit by the two bounding rays of $C_q(s)$, respectively. Then, an element $w$ of $T_q$ is visible to $q$ if and only if $w$ is between $w_1(q)$ and $w_2(q)$ in $T_q$. Hence, the portion of $T_q$ between $w_1(q)$ and $w_2(q)$ represents the combinatorial representation of $\vis(P_\ell,q)$. 

\subparagraph{Preprocessing.}
Based on the above discussion, we perform the following preprocessing work. We compute the interval $I_w$ for all elements $w\in P_\ell$. This can be done in $O(n\log n)$ time by first computing the cones $C_w$ of all elements $w\in \partial P_\ell$ in $O(n)$ time~\cite{ref:GuibasLi87} and then solving $O(n)$ ray-shooting queries in $O(n\log n)$ time with the help of a ray-shooting data structure~\cite{ref:ChazelleRa94,ref:HershbergerA95}. We sort all interval endpoints on $\partial P_r$, which together partition $\partial P_r$ into $O(n)$ intervals, called {\em basic intervals}. Let $\calI$ denote the set of all basic intervals. 

For each basic interval $I\in \calI$, $T_q$ is the same for all $q\in I$. Hence, we can maintain a single tree $T_I$ for each $I$. In addition, for two adjacent intervals $I$ and $I'$, $T_I$ and $T_{I'}$ differ by at most one element. To save space, as in the quadratic-space solution of \cite{ref:AronovVi02} described in Section~\ref{sec:diasep}, we use a persistent binary tree $T_{\calI}$ to store $T_I$ for all $O(n)$ basic intervals in  $O(n)$ space and $O(n\log n)$ time~\cite{ref:SnarnakPl86,ref:DriscollMa89}. In addition, we construct the GH data structure~\cite{ref:GuibasOp89} and a ray-shooting data structure~\cite{ref:ChazelleRa94,ref:HershbergerA95} for $P$, which take $O(n)$ time in total. 
This finishes our preprocessing. The total space is $O(n)$ and the total preprocessing time is $O(n\log n)$. 

\subparagraph{Queries.}
Given a query point $q\in \partial P_r$, we first find the interval $I\in \calI$ that contains $q$ in $O(\log n)$ time by binary search. With $I$, we can obtain the tree $T_q$ using the persistent tree $T_{\calI}$. Next, using the GH data structure, we compute the cone $C_q(s)$ with $s=\vis(d,q)$ in $O(\log n)$ time. Then, using two ray-shooting queries for the two bounding rays of $C_q(s)$, we compute $w_1(q)$ and $w_2(q)$ in $O(\log n)$. Finally, we conduct the clipping operation on $T_q$ to obtain a tree representing $\vis(q)$. The total query time is thus $O(\log n)$. 
\bigskip

In summary, we obtain the following result. 

\begin{lemma}\label{lem:boundarydiagonal}
Given a simple polygon $P$ of $n$ vertices and a diagonal $d$ dividing $P$ into two subpolygons $P_\ell$ and $P_r$, 
a data structure of $O(n)$ space can be constructed in $O(n\log n)$ time such that $\vis(P_\ell,q)$ can be computed in $O(\log n)$ time for any query point $q\in \partial P_r$. 
\end{lemma}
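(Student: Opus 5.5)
The plan is to verify the construction just described, in three parts: correctness of the interval reduction, the persistent-tree bound, and the query cost.

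\textbf{Correctness.} I will prove two claims.

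First, for $q\in\partial P_r$, every element $w$ of $P_\ell$ on $\partial\vis(P_\ell,q)$ satisfies $q\in I_w$. By Observation~\ref{obs:cones}, $w$ on $\partial\vis_d(q)$ implies $q\in C_w$. Since $w$ is visible to $q$, the segment from $q$ to $w$ stays in $P$ and crosses $d$. Hence $q$ lies in the part of $C_w\cap P_r$ reachable from $d$. The two bounding rays of $C_w$, shot from $d$ into $P_r$, cut off exactly this portion of $\partial P_r$, and that portion is $I_w$.

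Second, every element of $W_q$ lying between $w_1(q)$ and $w_2(q)$ in index order is on $\partial\vis(P_\ell,q)$. I will combine Observation~\ref{obs:boundaryorder} with the fact that $\vis(P_\ell,q)=\vis_d(q)\cap C_q(s)$. By Observation~\ref{obs:boundaryorder}, the elements of $\vis_d(q)$ appear in index order. Clipping by the cone $C_q(s)$ therefore keeps a contiguous index range, and its ends are the elements hit by the bounding rays of $C_q(s)$, namely $w_1(q)$ and $w_2(q)$.

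\textbf{Expected obstacle.} The delicate point is that $q\in I_w$ is a weaker condition than $q\in C_w$. Within the range between $w_1(q)$ and $w_2(q)$, the clipping must not admit an element with $q\in I_w$ but $q\notin C_w$. I will argue this cannot happen. The cone $C_w$ restricted to $P_r$, seen through $d$, is bounded by those two rays, and every point of $\partial P_r$ between their hit points lies in $C_w$ on the side of $d$ visible through $s_w$. If this ray-shooting argument fails because the rays hit $\partial P_r$ after leaving $C_w$'s visible region, I would instead define $I_w$ as the set of points of $\partial P_r$ inside $C_w$, which is still an interval. That set is connected because $C_w\cap R_d$ is convex and $\partial P_r$ is a chain whose endpoints are on $d$.

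\textbf{Space and time.} There are $O(n)$ intervals, hence $O(n)$ basic intervals. Sweeping along $\partial P_r$, consecutive basic intervals differ by one insertion or deletion. By~\cite{ref:SnarnakPl86,ref:DriscollMa89}, the persistent tree then takes $O(n)$ space and $O(n\log n)$ time. The cones take $O(n)$ time to compute~\cite{ref:GuibasLi87}, and the $O(n)$ ray shootings take $O(n\log n)$ time in total~\cite{ref:ChazelleRa94,ref:HershbergerA95}.

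\textbf{Query.} A query costs one binary search, one GH query for $s$, two ray shots to obtain $w_1(q)$ and $w_2(q)$, and one split-based clipping of the persistent tree, each $O(\log n)$. The total query time is therefore $O(\log n)$, as claimed.
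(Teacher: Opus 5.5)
\textbf{A genuine gap in the correctness step.} Your construction, the space and time bounds, and the query procedure are the same as the paper's. Your first claim (if $w$ is visible from $q\in\partial P_r$ then $q\in I_w$) is also fine, provided you state it as a containment rather than "exactly". The gap is in the step you flagged as the obstacle.

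Your main argument asserts that every point of $\partial P_r$ between the two hit points lies in $C_w$, i.e.\ $I_w\subseteq C_w$. This is false. Take the following configuration:
\begin{itemize}
\item $d$ runs from $(0,0)$ to $(0,10)$.
\item $P_r$ is the rectangle $[0,20]\times[0,10]$ with a thin spike rising from the floor at $x=5$ to height $3.6$.
\item $C_w$ has bounding rays from $(0,4)$ in direction $(1,-0.1)$ and from $(0,6)$ in direction $(1,0.1)$; its apex $(-10,5)$ lies to the left of $d$.
\end{itemize}
The lower ray is stopped by the spike at $(5,3.5)$, and the upper ray hits the right wall at $(20,8)$. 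The subchain of $\partial P_r$ between these points runs over the spike, along the floor and up the wall. So $q=(10,0)\in I_w$ while $q\notin C_w$.

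Your fallback fails too. In this example $\partial P_r\cap C_w$ has two components, the spike tip and a piece of the wall; in general a chain can enter and leave a convex set many times. An element would then contribute several intervals, and the $O(n)$ bound on basic intervals, and hence on the persistent tree, is lost. Note also that your second claim only handles elements with $q\in C_w$. $T_q$ is neither a subset nor a superset of $A_d(q)$, so the clipping argument for $\vis_d(q)$ does not settle the issue on its own.

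\textbf{The missing idea.} The paper's statement that an element of $T_q$ is visible iff it lies between $w_1(q)$ and $w_2(q)$ needs the following: spurious elements exist, but they never fall in the clipping range. Suppose $w\in W_q$ lies between $w_1(q)$ and $w_2(q)$ but is not visible from $q$.
\begin{itemize}
\item Then $w$ lies in a pocket $K$ cut off from $q$ by a window $\overline{ux}$. This window lies on the ray from $q$ through a reflex vertex $u$ of $P_\ell$, and that ray meets $d$ at a point $y\in s$.
\item Every segment in $P_\ell$ from a point of $w$ to a point of $s_w$ must leave $K$ through $\overline{ux}$. So it crosses the line $qy$ beyond $d$, as seen from $q$.
\item Hence $s_w$, and the two bounding-ray chords of $C_w$ inside $P_r$, lie strictly on one side of the line $qy$. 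In particular $y\notin s_w$.
\item These chords cut off from $P_r$ the region bounded by $s_w$, the two chords and the subchain $I_w$. Its closure meets $d$ only in $s_w$.
\item If $q\in I_w$, the segment $\overline{qy}\subseteq P_r$ would start in this region. It cannot leave through $\partial P$, nor through the chords, since it lies on the line $qy$, which the chords avoid. Yet it ends at $y\notin s_w$, a contradiction.
\end{itemize}
Thus $q\notin I_w$. With the intervals defined by ray shooting, exactly as in the paper, clipping $T_q$ to the range between $w_1(q)$ and $w_2(q)$ returns precisely the elements of $\vis(P_\ell,q)$, even though $I_w\not\subseteq C_w$.
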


\subsection{The general problem}

We now consider the general problem of computing $\vis(q)$ for any query point $q \in \partial P$.  
We present two solutions.  

In our first solution, we follow the algorithmic framework described in Section~\ref{sec:bdscheme}, applying Lemma~\ref{lem:boundarydiagonal}, and obtain the following result (as already mentioned at the end of Section~\ref{sec:diasep}).  

\begin{theorem}\label{theo:boundaryqueries}
Given a simple polygon $P$ of $n$ vertices, a data structure of $O(n \log n)$ space can be constructed in $O(n \log^2 n)$ time such that the visibility polygon $\vis(q)$ can be computed in $O(\log^2 n)$ time for any query point $q \in \partial P$.
\end{theorem}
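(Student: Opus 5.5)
We plug Lemma~\ref{lem:boundarydiagonal} into the framework of Section~\ref{sec:bdscheme}. First, we compute the balanced decomposition and the BD-tree $\calT(P)$ in $O(n)$ time. Next, we build the ray-shooting and GH data structures on $P$, and we store the leaf triangles in a structure that lets us find the triangle containing a boundary point. For every internal node $u$ with children $v,w$, we build the structure of Lemma~\ref{lem:boundarydiagonal} twice: once for $P_w$ with queries on $\partial P_v$, and once with the roles swapped. In both cases the polygon is $P_u$ and the separating diagonal is $d_u$.

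One subtlety needs care. Inside the subproblem for $P_u$, the boundary of $P_u$ consists of edges of $P$ together with diagonals. A query point $q$ lies on $\partial P$, so it lies on $\partial P_u\cap\partial P$. The one-dimensional domain used in Lemma~\ref{lem:boundarydiagonal} is $P_v\cap\partial P_u$, and this domain contains every relevant $q$. So the lemma applies verbatim with $P_u$ in the role of $P$. We extend the domain to all of $\partial P_v$, including other diagonals, which costs nothing, or we simply restrict it.

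For the space bound, the structure at $u$ has size $O(|P_u|)$. The subpolygons on any one level of $\calT(P)$ partition $P$, and there are $O(\log n)$ levels, so the total space is $O(n\log n)$. The same per-level argument, now with the $O(|P_u|\log|P_u|)$ construction time, gives $O(n\log^2 n)$ preprocessing.

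A query on $q\in\partial P$ proceeds as follows.
\begin{itemize}
\item Locate the leaf triangle containing $q$ in $O(\log n)$ time. Its visibility is the whole triangle.
\item Walk up the tree. At each node $u$, compute $\vis(P_w,q)$ with the boundary diagonal-separated structure in $O(\log n)$ time.
\item Merge it with $\vis(P_v,q)$ by Lemma~\ref{lem:merge} in $O(\log n)$ time.
\end{itemize}
The height is $O(\log n)$, so the query takes $O(\log^2 n)$ time to obtain a tree storing $\vis(q)$, plus $O(|\vis(q)|)$ to output it.

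The main obstacle is checking that the interval-stabbing argument behind Lemma~\ref{lem:boundarydiagonal} stays valid inside the subpolygon $P_u$ rather than in $P$. The argument needs the cones $C_w$, the hit intervals $I_w$, and the visible segment $s=\vis(d_u,q)$ to all be computed within $P_u$. Because $P_u$ is a simple polygon, we can do this with its own shortest-path trees and ray-shooting structures of linear size, so the per-node bounds carry over unchanged.
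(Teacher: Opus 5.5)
Your proposal is correct and follows the paper's own argument. The paper likewise plugs the boundary diagonal-separated structure of Lemma~\ref{lem:boundarydiagonal} into the balanced-decomposition framework of Section~\ref{sec:bdscheme}, with $O(n)$ space and $O(n\log n)$ time per level over $O(\log n)$ levels, and $O(\log n)$ query-plus-merge time at each node of the leaf-to-root path. Your remarks about using $P_u$ as the ambient polygon (and that $q\in\partial P\cap P_v$ lies in its one-dimensional query domain) only make explicit what the paper leaves implicit.
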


Our second data structure employs our new decomposition $\Psi$ and follows the same framework as the method in Section~\ref{sec:Optimal_query_time}, using the boundary-case diagonal-separated data structure in Lemma~\ref{lem:boundarydiagonal}.  
We briefly discuss it below, highlighting the modifications relative to the algorithm in Section~\ref{sec:Optimal_query_time}.

\subparagraph{A data structure for geodesic triangles.}
The method in Section~\ref{sec:Optimal_query_time} relies on Theorem~\ref{theo:treequeries}, which in turn is built upon a data structure for geodesic triangles (i.e., Lemma~\ref{lem:geotriangle}).  
We therefore need to handle the boundary case of this problem.  
Since the query points are on $\partial P$, we only need to consider the second subproblem in Lemma~\ref{lem:geotriangle}, that is, when $q \in P_d$ for some diagonal $d$ on the boundary of $\triangle$.  

The second subproblem is solved in Section~\ref{sec:secondsub}.  
We follow the notation there, except that now $q$ is restricted to lie on $\partial P \cap P_d$.  
We design a data structure for query points in the pockets of the diagonals on $\pi$, i.e., the side of $\triangle$ containing $d$.  
Intuitively, since $\pi$ is a concave chain, we extend the algorithm in Section~\ref{sec:boundarydiagonal} for the diagonal-separated problem to this {\em concave-chain-separated} problem, i.e., $\pi$ now plays the role of the separating diagonal in the problem of Section~\ref{sec:boundarydiagonal}.  

Specifically, let $P_r$ denote the union of the pockets of the diagonals along $\pi$, and let $P_\ell = P \setminus P_r$.  
Define $\partial P_r = \partial P \cap P_r$, and define $\partial P_\ell$ similarly.  
Note that $\partial P_r$ is a portion of $\partial P$ whose endpoints are those of $\pi$, so $\partial P_r$ forms a one-dimensional polygonal chain; the same holds for $\partial P_\ell$.  

For each element $w$ (i.e., a vertex or an edge) of $\partial P_\ell$, we compute the portion of $\pi$ visible to $w$, which is constrained by a cone $C_{\pi}(w)$.  
We can compute all such cones $C_{\pi}(w)$ for every element $w$ of $\partial P_\ell$ in $O(n)$ time by constructing the shortest path trees from the two endpoints of $\pi$, following the same approach as the algorithm of Guibas and Lubiw~\cite{ref:GuibasLi87} for computing visibility cones with respect to a single edge. Indeed, the same algorithm applies here because $\pi$ is a concave chain.  
Using $C_{\pi}(w)$, we define an interval on $\partial P_r$ by the two points where the bounding rays of $C_{\pi}(w)$ hit $\partial P_r$.  
The remainder of the algorithm proceeds as in Section~\ref{sec:boundarydiagonal}.  
The data structure thus requires $O(n)$ space and $O(n \log n)$ preprocessing time.  

Given a query point $q \in \partial P_r$, suppose $q$ lies in the pocket of a diagonal $d$ on $\pi$.  
We compute the cone $C_q(s)$ with $s = \vis(d, q)$ and obtain the tree $T_q$ as defined earlier.  
We then use $C_q(s)$ to clip $T_q$, yielding a binary tree representing  $\vis(P_\ell, q)$, which is $\vis(P \setminus P_d, q)$ since $\pi$ is a concave chain.  
The total query time is $O(\log n)$.  

The above constructs a data structure for $\pi$, i.e., for query points $q$ in the pockets of diagonals along $\pi$.  
We also build the same data structures for the other two sides of $\triangle$.  
All together, these require $O(n)$ space and $O(n \log n)$ preprocessing time.  
Hence, for any query point $q \in \partial P \cap P_d$ for a diagonal $d \in \partial \triangle$, $\vis(P \setminus P_d, q)$ can be computed in $O(\log n)$ time.

\subparagraph{The main data structure.}
We now use the above geodesic-triangle data structure to construct the data structure of Theorem~\ref{theo:treequeries} (following the method in Section~\ref{sec:theo:treequeries}) and obtain its boundary version.  
The query time remains the same, but the space becomes $O(n \log^3 n)$ and the preprocessing time becomes $O(n \log^4 n)$.  
Indeed, by Lemma~\ref{lem:Abound} and following the analysis in Section~\ref{sec:theo:treequeries}, the total space of the data structures in each level of $\Psi$ is $O(n \log^2 n)$.  
Since $\Psi$ has $O(\log n)$ levels, the total space of the full data structure is $O(n \log^3 n)$, and the preprocessing time gains an additional logarithmic factor.  

We now plug this boundary version of Theorem~\ref{theo:treequeries} into the method of Section~\ref{sec:Optimal_query_time}.  
In addition, when constructing the data structure $\calD_v$ with respect to the gate $d_v$ for each internal node $v \in \Psi_b$, we use Lemma~\ref{lem:boundarydiagonal} instead.  
The resulting recurrence for the space $S(n)$ is as follows:
\[
S(n) = \sum_{v \in B(\Psi)} S(|P_v|) + O(nr),
\]
which solves to $O(n^{1+\epsilon})$ by setting $r = n^{\epsilon}$.  
Similarly, the preprocessing time is also $O(n^{1+\epsilon})$.  
Including the boundary version of Theorem~\ref{theo:treequeries}, the total space and preprocessing time of the overall data structure are both $O(n^{1+\epsilon})$, while the query time remains $O(\log n)$.  

In summary, we obtain the following result for the boundary case.

\begin{theorem}
Given a simple polygon $P$ of $n$ vertices, a data structure can be constructed in $O(n^{1+\epsilon})$ space and preprocessing time, for any constant $\epsilon > 0$, such that the visibility polygon $\vis(q)$ can be computed in $O(\log n)$ time for any query point $q \in \partial P$. 
\end{theorem}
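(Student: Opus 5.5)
The plan is to prove the theorem as a summary of the construction just described, checking the preprocessing bound and the query bound separately. We first build $\Psi$ in $O(n\log^2 n)$ time. Then we build the boundary version of Theorem~\ref{theo:treequeries}, in which every geodesic-triangle structure is replaced by the concave-chain-separated structure. That structure has $O(n)$ space and $O(n\log n)$ time per side, applied to the relevant subpolygon. This gives $O(n\log^3 n)$ space and $O(n\log^4 n)$ time. The accounting would reuse the bounds of Section~\ref{sec:theo:treequeries}: each geodesic triangle $\triangle_\nu$ costs $O(\sum_{d}n_d)$ instead of $O(\sum_d n_d^2)$. By Section~\ref{sec:nonbase} and Lemma~\ref{lem:Abound}, this sum is $O(M\log^2 n)$ per open subpocket, hence $O(n_v\log^2 n)$ per node $v$ of $\Psi$. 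Summing over the $O(\log n)$ levels of $\Psi$ gives the claimed bound.

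Next comes the recursive layer. Fix $r=n^{\epsilon}$ and define the border nodes and the subtree $\Psi_b$ as in Section~\ref{sec:Optimal_query_time}. For each internal node $v$ of $\Psi_b$ we build the structure of Lemma~\ref{lem:boundarydiagonal} for $P_v$ with respect to $d_v$, valid by the gate property. This costs $O(n)$ space and $O(n\log n)$ time, and by Lemma~\ref{lem:internalnumber} there are $O(r)$ such nodes. We then recurse on $P_v$ for every border node $v$. This yields $S(n)=\sum_{v\in B(\Psi)}S(|P_v|)+O(nr\log n)$ with $\sum |P_v|\le n$ and $|P_v|<n/r$.

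Unrolling the recurrence, the recursion depth is $O(1/\epsilon)$, since sizes shrink by a factor $n^{\epsilon}$ each round, at least while $|P_v|$ is polynomial in $n$. Each depth contributes $O(n^{1+\epsilon}\log n)$ in total. Replacing $\epsilon$ by $\epsilon/2$ absorbs the logarithm, so the bound is $O(n^{1+\epsilon})$. Subpolygons smaller than a constant, or reaching a size where $r$ would be trivial, are handled directly; I will choose $r=|P|^{\epsilon}$ at each recursive call. Adding the $O(n\log^4 n)$ from the boundary Theorem~\ref{theo:treequeries} keeps the total at $O(n^{1+\epsilon})$.

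The query procedure mirrors Section~\ref{sec:Optimal_query_time}, with one difference: a boundary point $q$ stays a boundary point in every subpolygon $P_v$ it belongs to, so the boundary structures apply at every recursive level. Point location finds the node $u$ whose core contains $q$. We compute $\vis(P_u,q)$ by cases:
\begin{itemize}
\item If $q$ lies in a pocket, we use the second-subproblem data together with the clipping.
\item If $q$ lies on $\partial P\cap\triangle$ itself, we reduce to the pocket case. Such a $q$ lies on an edge of $P$ bounding $\triangle$, which we treat as a degenerate pocket as in the paper. Alternatively, we merge the views of the three sides, using the boundary structures for the other two sides.
\end{itemize}
From there the query either climbs once via $\calD_u$ or recurses through a border node. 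This gives $Q(n)=Q(n/r)+O(\log n)=O(\log n)$.

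The main obstacle I expect is the $q\in\triangle$ case. The paper only builds the pocket version, so I must confirm that boundary points of $\triangle$ are covered by degenerate pockets. Concretely, $P_d=d$ when $d$ is an edge of $P$, so $\vis(P\setminus P_d,q)$ is computed by the same chain structure and is all of $\vis(P_u,q)$.
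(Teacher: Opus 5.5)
Your proposal is correct and follows the paper's own proof. Like the paper, you build the boundary version of Theorem~\ref{theo:treequeries} from the concave-chain-separated geodesic-triangle structure, with the $O(n\log^3 n)$ accounting taken from Lemma~\ref{lem:Abound}, and then run the border-node recursion of Section~\ref{sec:Optimal_query_time} with $r=n^{\epsilon}$ and Lemma~\ref{lem:boundarydiagonal} at the internal nodes of $\Psi_b$. Your handling of $q\in\partial P\cap\triangle$ as a point in a degenerate pocket $P_e=e$ is the convention the paper relies on implicitly when it says only the second subproblem is needed.
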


Note that applying the approach from Section~\ref{sec:A_Quadratic_Space_Structure} would not yield a better result, since that method is effective only when the preprocessing of the diagonal-separated subproblem requires $\Omega(n^2)$ complexity.

\subparagraph{The vertex case.}
Finally, if the query point $q$ is a vertex of $P$, then Theorem~\ref{theo:boundaryqueries} immediately yields the following result.

\begin{corollary}
Given a simple polygon $P$ with $n$ vertices, one can construct a data structure of $O(n \log^2 n)$ space in $O(n \log^2 n)$ time such that, for any query point $q$ that is a vertex of $P$, the visibility polygon $\vis(q)$ can be reported in $O(|\vis(q)|)$ time.
\end{corollary}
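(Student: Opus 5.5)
Since $q$ ranges over only the $n$ vertices of $P$, the plan is to run the query algorithm of Theorem~\ref{theo:boundaryqueries} once for every vertex during preprocessing. We store the binary search tree it returns. At query time we just look up the stored tree for $q$ and traverse it in order, which outputs $\vis(q)$ in $O(|\vis(q)|)$ time with no $O(\log^2 n)$ overhead.

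The only real issue is space. Storing every $\vis(q)$ explicitly would cost $\Theta(n^2)$ in the worst case, so the stored trees must share structure. I would make every tree operation in the query algorithm \emph{non-destructive} (path copying). This covers the clipping of $T_q$ by $C_q(s)$ in Lemma~\ref{lem:boundarydiagonal}, and the split and splice operations in the merging step of Lemma~\ref{lem:merge}. The trees $T_q$ come out of the persistent structure $T_{\calI}$ and are already persistent versions, so path copying fits naturally. Each clip, split or splice on a balanced tree touches $O(\log n)$ nodes, and therefore creates only $O(\log n)$ new nodes while leaving all earlier versions intact.

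The query for a vertex $q$ walks up the BD-tree $\calT(P)$ through $O(\log n)$ nodes. At each node it does one diagonal-separated query and one merge, each $O(\log n)$ tree operations' worth of work. So a single query runs in $O(\log^2 n)$ time and allocates only $O(\log^2 n)$ new nodes. The root of the final version is a valid balanced tree for $\vis(q)$ that shares its other nodes with the underlying persistent structures.

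Summed over all $n$ vertices, the extra space and time are $O(n\log^2 n)$. This adds to the $O(n\log n)$ space and $O(n\log^2 n)$ preprocessing of Theorem~\ref{theo:boundaryqueries}, giving $O(n\log^2 n)$ space and time overall. We keep an array indexed by vertex that points to its root, which replaces the $O(\log n)$ locate step on $\partial P$.

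The step I expect to need the most care is checking that the whole query pipeline really is persistent. Concretely, the clipping and the split/splice of Lemma~\ref{lem:merge} must be implementable by path copying on the persistent trees of Lemma~\ref{lem:boundarydiagonal}, including rebalancing, without ever mutating shared nodes; only then does the allocation bound hold. Once that is in place, an in-order traversal of a tree with $k$ leaves takes $O(k)$ time, which finishes the argument.
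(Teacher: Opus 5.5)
Your argument is correct, but it takes a different route from the paper. The paper never opens up the query algorithm of Theorem~\ref{theo:boundaryqueries}. It runs that query once per vertex $v$ and traverses the returned tree, aborting after $\log^2 n$ outputs. It then stores $\vis(v)$ explicitly only for vertices with $|\vis(v)|\le \log^2 n$, which costs $O(n\log^2 n)$ space. Every other vertex is answered by the ordinary $O(\log^2 n+|\vis(q)|)$ query, which is $O(|\vis(q)|)$ because $|\vis(q)|=\Omega(\log^2 n)$ there.

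You instead store all $n$ answers with shared structure, using the $O(\log^2 n)$ running time of each query to bound the number of fresh nodes it allocates. The point you flag does go through, in either of two ways. When a path-copied node points into the node-copying structure $T_{\calI}$, tag that pointer with the version (basic interval) at which it was reached, and read fields at that version; traversal stays $O(1)$ per node. Alternatively, build each $T_{\calI}$ by path copying from the start. That costs $O(m\log m)$ for a subpolygon of size $m$, hence $O(n\log^2 n)$ over the BD-tree, still within budget.

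Your approach buys $O(1)$-time access to a balanced search tree for $\vis(q)$ at every vertex, which supports, e.g., $O(\log n)$-time searches within $\vis(q)$ and not just full reporting. The paper's approach buys simplicity. It is a black-box threshold argument that works for any structure with $O(Q+k)$ query time over a finite query set. It needs no check that clipping, split and splice are non-destructive, nor of how they interact with the persistent trees.
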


\begin{proof}
During preprocessing, we construct the data structure of Theorem~\ref{theo:boundaryqueries} in $O(n \log n)$ space and $O(n \log^2 n)$ time; let $\calD$ denote this data structure.

Let $S$ be the set of vertices $v$ of $P$ for which $k_v \leq \log^2 n$, where $k_v$ is the number of vertices of $\vis(v)$. We can compute $S$ in $O(n \log^2 n)$ time as follows. For each vertex $v$, we use $\calD$ to compute a binary search tree representing $\vis(v)$ in $O(\log^2 n)$ time by Theorem~\ref{theo:boundaryqueries}. We then traverse the tree and output vertices of $\vis(v)$ until either the entire visibility polygon has been reported or more than $\log^2 n$ vertices have been produced. In the former case, we conclude that $k_v \leq \log^2 n$ and add $v$ to $S$. In the latter case, $v\not\in S$. 
Thus, determining whether a given vertex belongs to $S$ takes $O(\log^2 n)$ time, and processing all vertices takes $O(n \log^2 n)$ time.

For each vertex $v \in S$, we explicitly store $\vis(v)$ in the data structure. Since each such visibility polygon has size  $O(\log^2 n)$ and there are at most $n$ vertices, this requires $O(n \log^2 n)$ space and construction time in total.

This completes the preprocessing, which uses $O(n \log^2 n)$ space and time.

Now consider a query vertex $q$. If $q \in S$, then since $\vis(q)$ is explicitly stored, it can be reported in $O(|\vis(q)|)$ time. Otherwise, we know that $|\vis(q)| =\Omega(\log^2 n)$, and we compute $\vis(q)$ using $\calD$ in $O(\log^2 n + |\vis(q)|)$ time by Theorem~\ref{theo:boundaryqueries}. Since $|\vis(q)| = \Omega(\log^2 n)$, this is $O(|\vis(q)|)$ time.
\end{proof}

\bibliography{references}




\end{document}